\documentclass{article}

\usepackage{arxiv}

\usepackage[utf8]{inputenc} 
\usepackage[T1]{fontenc}    
\usepackage{hyperref}       
\usepackage{url}            
\usepackage{booktabs}       
\usepackage{amsfonts}       
\usepackage{nicefrac}       
\usepackage{microtype}      
\usepackage{lipsum}
\usepackage{graphicx}
\usepackage{natbib,amsmath,amsthm,amssymb,bm,newtxmath}
\usepackage[ruled,linesnumbered]{algorithm2e}
\usepackage{subcaption}
\usepackage{multirow}
\graphicspath{ {./images/} }

\title{Near-Oracle Robustification of Finite-Difference Stochastic Gradient Estimators via Cheap Pilot Calibration}

\newtheorem{theorem}{Theorem}
\newtheorem{assumption}{Assumption}
\newtheorem{proposition}{Proposition}
\newtheorem{corollary}{Corollary}

\author{
 Haidong Li \\
  School of Economics and Management\\
  University of Chinese Academy of Sciences\\
  Beijing 100190, China \\
  \texttt{haidong.li@ucas.ac.cn} \\
   \And
 Henry Lam \\
  Department of Industrial Engineering and Operations Research\\
  Columbia University\\
  NY 10027, USA \\
  \texttt{henry.lam@columbia.edu} \\
  \And
 Yijie Peng \\
  School of Management and Engineering\\
  Nanjing University\\
  Nanjing 210093, China \\
  \texttt{pengyijie@nju.edu.cn} \\
}

\begin{document}
\maketitle
\begin{abstract}
We study stochastic gradient estimation in black-box environments where only noisy simulation observations of function values are available. Finite-difference (FD) methods are among the most widely used zeroth-order gradient estimators in such settings, by measuring the change in function values against a perturbation size. While the optimal order in choosing this perturbation size with respect to the simulation budget is well understood, the optimal constant factor relies on model characteristics that are typically unknown and viewed to be as difficult to estimate as the gradient itself. Consequently, FD estimators are often based on ad hoc tuning of the perturbation size, which may exhibit highly unstable performance across problem instances. In this paper, we challenge this conventional wisdom from both theoretical and practical perspectives. 
We show that, by pilot-estimating these model quantities using a negligible fraction of the simulation budget, substantial robustness is attained in the resulting FD estimators. Theoretically, we show that using a perturbation size governed by this pilot estimation can already achieve an MSE that is first-order identical to the ``oracle" MSE as if the optimal perturbation size is known in advance. Moreover, we show how such an approach is competitive against any choices of prescribed perturbation size, even if they are designed to be minimax-optimal over reasonable classes of target functions and FD schemes. Our proposed pilot estimation is practically easy to run, and a variety of numerical experiments demonstrate both the robustness and near-oracle optimality of our estimator relative to conventional FD schemes based on ad hoc tuning.
\end{abstract}

\keywords{stochastic gradient estimation \and finite-difference estimators \and minimax optimality \and zeroth-order oracle \and robustness}

\section{Introduction}
We consider stochastic gradient estimation when only noisy simulation observations are available for evaluating function values or model outputs. Such estimation is commonly motivated from system sensitivity analysis (\citealp{l1991overview,fu2006gradient,asmussen2007stochastic,glasserman2013monte}), and often referred to as the zeroth-order or black-box oracle in the stochastic optimization literature (\citealp{ghadimi2013stochastic,nesterov2017random}). In particular, we investigate finite-difference (FD) methods, a widely employed class of zeroth-order stochastic gradient estimators due to their simplicity and broad applicability. FD estimators use first-principle numerical approximations of gradients and are fundamentally different from a range of unbiased gradient estimation methods. The latter include techniques such as infinitesimal perturbation analysis (\citealp{ho1983infinitesimal,heidelberger1988convergence}), the likelihood ratio or score function methods (\citealp{rubinstein1986score,reiman1989sensitivity,glynn1990likelihood}), measure-valued or weak differentiation approaches (\citealp{heidergott2008measure,heidergott2010gradient}), as well as other variants like the generalized likelihood ratio method (\citealp{peng2018new}). While unbiased gradient estimators are theoretically appealing, their applicability is often limited by restrictive regularity conditions and the requirement of access to internal model structure, which are typically unavailable in black-box simulation environments.

FD estimators consist of generating samples under perturbed input parameters, and the selection of the perturbation size is critical to their performance. More concretely, with a non-zero perturbation size, FD is typically biased against the target gradient. At the same time, a smaller perturbation size also results in a larger variance. To minimize the mean squared error (MSE) of FD estimators, the perturbation size must be chosen carefully to balance the bias–variance trade-off. Although the literature (\citealp{zazanis1993convergence,fox1989replication}) provides theoretical optimal perturbation sizes and their corresponding MSEs for FD schemes, it is widely acknowledged that these theoretical prescriptions are often impractical to implement precisely. More specifically, while the optimal order of the perturbation size, in terms of the sample size, is easy to obtain, attaining optimality in higher precision (i.e., taking into account the constant factor in front of the order) is tremendously difficult since it relies on unknown model characteristics that appear as difficult to estimate as the target gradient itself. In practice, a common fallback approach is to select a perturbation size that follows the optimal order while choosing the constant factor preceding the order in an ad hoc manner. However, because of the slow decay rate of the perturbation size, this constant is impactful, and such heuristically tuned perturbation sizes may perform well for some problem instances but very poorly for others, and thus fail to provide robust performance across heterogeneous problem settings.

As a simple illustration of the aforementioned phenomenon, we consider a generic M/M/1 queueing system, where the performance metric is the total system time of the first 10 customers and the quantity of interest is the derivative of its expectation with respect to the arrival rate. Figure~\ref{intro_example1} reports the empirical MSEs of FD estimators by simulating the queue, under three different perturbation sizes and a range of arrival rates. We see that none of the perturbation sizes performs uniformly well across the considered arrival rates. Moreover, a priori, we also do not know which choice of perturbation size is better than which others for a specific arrival rate. We note that this example is very generic (all our numerical examples in the sequel exhibit similar phenomena), and points to the inherent instability of heuristically tuned FD estimators.

\begin{figure}[h]
\centering
\begin{minipage}[t]{0.48\textwidth}
	\centering
	\includegraphics[width=7.5cm]{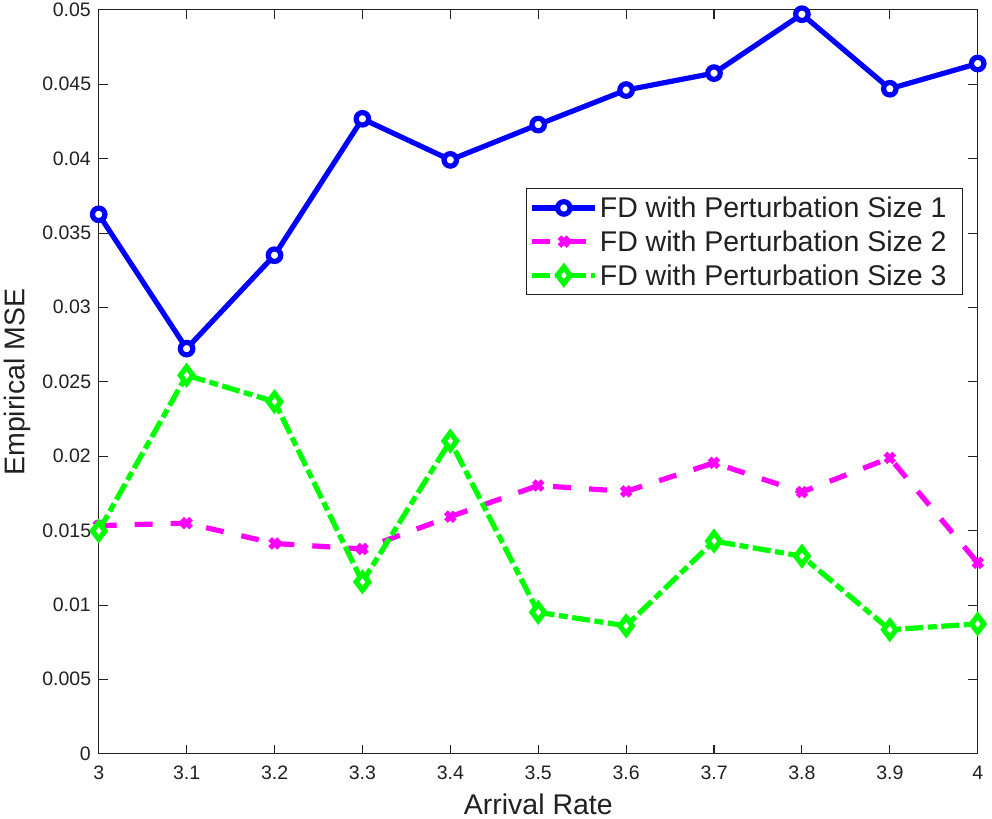}
	\caption{Empirical MSE of FD estimators using different perturbation sizes and arrival rates in an M/M/1 queueing system.}
	\label{intro_example1}
\end{minipage}
\begin{minipage}[t]{0.48\textwidth}
	\centering
	\includegraphics[width=7.5cm]{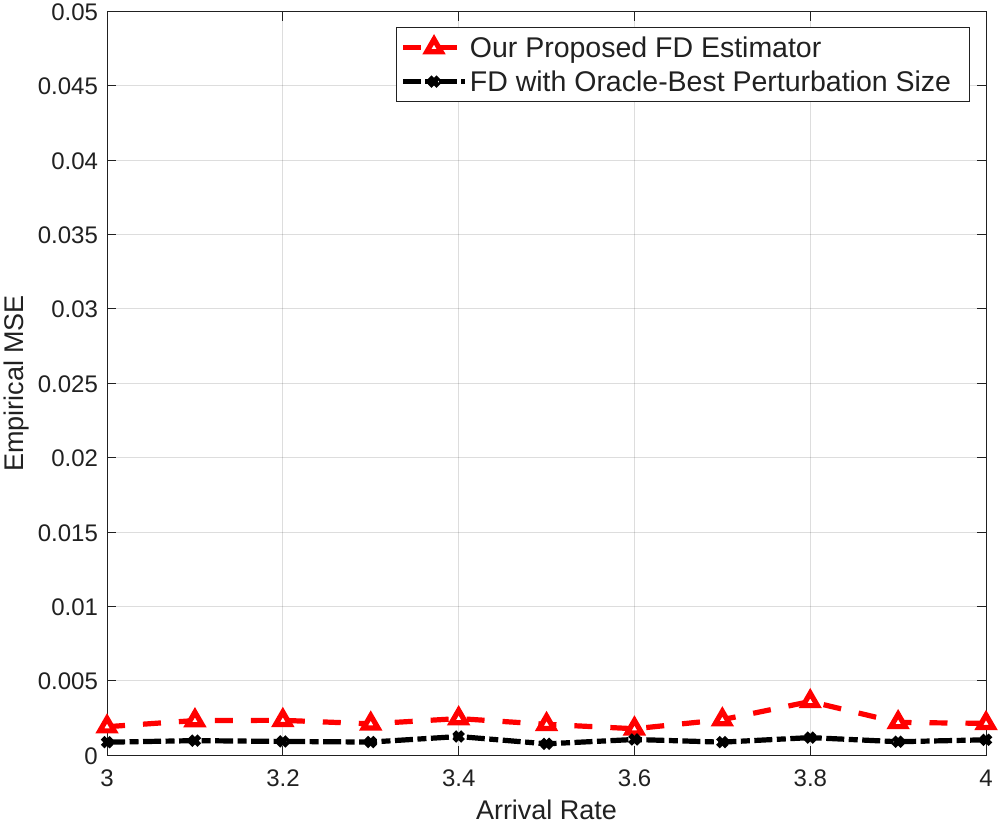}
	\caption{Empirical MSE of our proposed FD estimator and FD with oracle-best perturbation size for different arrival rates in an M/M/1 queueing system.}
	\label{intro_example2}
\end{minipage}
\end{figure}

Our main goal of this paper is to offer a simple, and arguably very natural, method to substantially enhance the robustness of FD estimators. In fact, this robustness comes from a stronger property of our approach: Our estimator attains an MSE that nearly matches the ``oracle" MSE, namely the theoretically best possible MSE obtained as if the unknown model quantities were known in advance. Before describing our idea, we first illustrate its performance with our example. Figure \ref{intro_example2} shows the MSE of our new estimator, using the \emph{same} simulation budget as all the FD estimators constructed in Figure \ref{intro_example1}. Note that our MSE is not only lower than those of all the FD estimators in Figure \ref{intro_example1} across all considered arrival rates, but it also always nearly matches the oracle MSE obtained from the theoretically best chosen perturbation size tailored to each specific arrival rate. In other words, our new estimator is both \emph{robust} and \emph{nearly oracle-optimal}.

Our key idea is essentially a rebuttal to the prevailing belief on the difficulty in obtaining optimal FD estimators. As alluded earlier, it is commonly believed that while order-optimal perturbation size is easy to achieve, optimizing the constant factor in front of the order is much harder since it requires knowledge on model quantities, such as higher-order derivatives and the noise variance, that appear more difficult to estimate than even the target gradient itself. In other words, estimating these quantities in order to improve the gradient estimation would undermine the latter task to begin with. We challenge this belief, with the key insight that \emph{a low accuracy of these model estimates suffices to produce highly robust and nearly oracle-optimal FD estimators}. In fact, the required accuracy is so low that only a negligible fraction of the overall simulation budget is required to run these estimates. These give rise to our proposed two-stage procedure: In the first stage, we allocate a small portion of the simulation budget to obtain coarse pilot estimates of the required model quantities. In the second stage, we plug these estimates to obtain a data-driven optimal perturbation size and then run a standard FD estimator using the remaining budget. Asymptotically as the sample size grows, the performance of our estimator matches the MSE, up to the exact first-order term, of an oracle that knows the optimal instance-specific perturbation size in advance. In this way, our estimator is nearly oracle-optimal across different problem instances.

The near oracle-optimality of our estimator implies robustness, in the sense that regardless of the problem instance, our estimator performs reasonably well (in fact, almost the best) compared to other FDs. To offer a broader perspective regarding our robustness strength, we further consider a worst-case MSE framework to quantify the performance of FD estimators. This worst case is cast over a natural class of functions whose gradients are our target. If an FD estimator is robust, then it would have a small worst-case MSE, and the most robust FD, as long as no pilot estimation for tuning the perturbation size is allowed, will be the corresponding minimax solution to this worst-case formulation. Via a ``weak duality", we show that by selecting the perturbation size through our pilot estimation, our performance is at least as competitive as the minimax solution under the worst-case MSE. Intuitively, this competitive performance stems from the data-adaptive nature of our approach, in contrast to the use of a fixed or ad-hoc-chosen choice. Moreover, our development of this minimax framework also reveals how our approach outperforms a broader class of FD schemes, not only deterministic perturbation size. More precisely, we show that even if the perturbation size is allowed to be randomized, or unequal weights can be placed over individual FD runs when forming the ultimate estimator, our approach is still superior. We establish this result by systematically characterizing the minimax-optimal estimators within a very broad family of FD schemes including randomized and unequal-weighted schemes. In particular, we show that equal-weight deterministic-perturbation FD estimators are in fact minimax optimal, and thus there is no need to consider beyond this FD type in terms of worst-case MSE. Thus, by outperforming this simple FD type, our new estimator also ``for free" outperforms randomized and unequal-weighted FD schemes.

For the majority of this paper, we present our key theoretical development and procedure in the single-dimensional central FD setting to simplify the exposition. The same ideas extend naturally to multi-dimensional FD estimators and to forward or backward FD. In particular, in the later part of our paper, we demonstrate how our procedure applies to the simultaneous perturbation (SP) scheme, which employs randomized perturbation directions and is widely used in high-dimensional problems. We also derive the exact first-order optimal MSE for the SP scheme, show how our estimator nearly matches so, and compare this MSE with that of applying coordinate-wise FD, thus providing mild sufficient conditions under which SP is preferable. These extensions are especially relevant for gradient checking (or gradient validation) in deep neural networks (DNNs), where FD remains a standard tool for verifying analytic gradients produced by hand-crafted operations or customized gradient estimators (\citealp{Goodfellow2016,karpathy2019neural}). Across domains such as software engineering (\citealp{ben2023testing}), aerospace (\citealp{kulkarni2016data}), and computational biology (\citealp{niu2015dsep}), gradient checking is an indispensable step in neural network training, as incorrect gradients can lead to misleading optimization behavior and wasted computational effort. FD-based gradient checking mitigates this risk by comparing analytic gradients with numerical FD approximations, which should agree locally on the same points. Our numerical results demonstrate the practical benefits of our proposed estimator in these contexts.

\subsection{Related Literature}
In general, stochastic gradient estimation methods can be broadly classified into two categories: direct and indirect approaches~(\citealp{fu2014stochastic}). Direct gradient estimators, including infinitesimal perturbation analysis (\citealp{ho1983infinitesimal,heidelberger1988convergence}), likelihood ratio or score function methods (\citealp{rubinstein1986score,reiman1989sensitivity,glynn1990likelihood}), measure-valued or weak differentiation approaches (\citealp{heidergott2008measure,heidergott2010gradient}), and the generalized likelihood ratio method (\citealp{peng2018new}), aim to unbiasedly estimate the gradient by exploiting additional structural information about the underlying stochastic model (e.g., input distributions or system dynamics). In contrast, indirect gradient estimation relies only on function evaluations (i.e., performance-measure samples) and produces a biased approximation to the gradient. Representative methods include standard FD (e.g.,~\citealp{asmussen2007stochastic},~\citealp{glasserman2013monte},~\citealp{fu2014stochastic}), simultaneous perturbation (\citealp{spall1992multivariate}), and randomized FD schemes (e.g.,~\citealp{ghadimi2013stochastic},~\citealp{duchi2015optimal},~\citealp{shamir2017optimal},~\citealp{fazel2018global},~\citealp{scheinberg2022finite}). Most indirect methods generate samples under perturbed input parameters and then use finite differences to approximate derivatives. Their performance depends critically on the choice of perturbation size, which governs the bias–variance trade-off and hence the overall MSE. While the optimal MSE rate (in terms of the simulation budget) is well understood and attainable (\citealp{zazanis1993convergence},~\citealp{fox1989replication}), achieving the exact first-order optimal MSE (i.e., optimizing the leading constant) remains substantially more challenging. The difficulty is that the theoretically optimal tuning depends on unknown model quantities such as higher-order derivatives and noise variance. Consequently, to the best of our knowledge, relatively few methods explicitly target MSE optimality at the level of the leading constant.

Next, we point out that a preliminary conference version of this work has appeared in \cite{li2020optimally}. Our current paper is a substantial enhancement of \cite{li2020optimally}, in that the latter only contains the basic idea of the tuning strategy (which they call estimation-minimization) and some theorems with incomplete proofs. In addition to presenting complete proofs and a more rigorous theoretical development of the pilot tuning methodology (with sharper asymptotic characterizations where applicable), the current paper develops the minimax robustness framework that elevates the discussion from instance-wise tuning to worst-case performance guarantees over a natural function class, and derives structural optimality results that clarify what constitutes a robust FD estimator under this criterion. The current paper also generalizes the methodology and analysis systematically to multidimensional settings, which is essential for practical high-dimensional applications. Finally, the current paper substantially expands the numerical study by considering a much broader range of test functions and dimensions and by adding a deep-neural-network gradient-checking case study, providing more comprehensive empirical evidence of robustness and near-oracle performance. Next, we also mention a related work (\citealp{liang2024correlation}) which, like us, studies sample-driven tuning for FD gradient estimation. Their approach uses a bootstrap technique, inspired by~\cite{zhang2022bootstrap}, to estimate the optimal perturbation size, and then applies a normalization–transformation step to reuse samples in the final gradient estimation. As cited and described in \cite{liang2024correlation}, their approach can be viewed as a data re-use enhancement of our conference version (\citealp{li2020optimally}). Moreover, our study differs in several methodological focuses. Rather than using the bootstrap, our main methodology is to use a pilot and, more importantly, our main insight is that this pilot budget is asymptotically negligible; consequently, strong performance can be achieved without adding significant computational burden. In terms of theory, we show the near-oracle optimality of our estimator. We further substantiate our robustness property by introducing a minimax framework that encompasses a very wide class of FD estimators including randomized and unequal-weighted schemes, and show that our estimator outperforms all these estimators in terms of the worst-case MSE. These optimality and robustness analyses are all beyond the scope of \cite{liang2024correlation}.

Another related work is \cite{lam2023enhanced}. They develop an enhanced class of FD estimators by combining simulation runs across a predetermined sequence of tuning parameter values, and show that these estimators can consistently outperform a conventional FD estimator with a fixed tuning choice, regardless of the unknown model characteristics. Their guarantees are established via an asymptotic minimax risk ratio, defined as the worst-case (over admissible values of the model unknowns) asymptotic ratio between the MSE of their estimator and that of a baseline FD estimator. While part of our work also adopts a worst-case perspective over model unknowns, our study is in a sense stronger than \cite{lam2023enhanced}, in that we measure performance directly through the MSE itself rather than a relative MSE ratio. Specifically, we invent a methodology to achieve the asymptotically optimal MSE, based on sample-driven tuning of FD, whereas \cite{lam2023enhanced} aims to improve the tuning parameters of general schemes through careful construction of predetermined tuning sequences.

Besides zeroth-order stochastic gradient estimation, many other simulation-based estimation methods involve tuning parameters that govern the bias–variance trade-off, making principled tuning a central practical challenge. For example, nested simulation is widely used to estimate quantities that are analytically intractable by embedding an inner simulation within an outer simulation. Its accuracy depends critically on how the computational budget is allocated across the two levels. Although asymptotic theory provides the optimal allocation rates, the associated leading constants typically depend on unknown features of the underlying model (e.g., the density function of a conditional expectation and its derivatives), so different tuning choices can yield substantially different MSEs in practice (\citealp{gordy2010nested}). To address this difficulty, \cite{zhang2022bootstrap} propose a sample-driven budget allocation rule that achieves asymptotic optimality. A similar tuning issue arises in kernel density estimation, a nonparametric method for estimating a probability density from samples, where performance hinges on selecting an appropriate bandwidth (\citealp{wand1994kernel,simonoff2012smoothing}). While the optimal bandwidth is characterized by minimizing integrated error criteria, it depends on unknown quantities (e.g., derivatives of the target density) and must be estimated from data. This has motivated a large literature on data-driven bandwidth selection, including classical plug-in methods (\citealp{park1990comparison,sheather1991reliable}) and more recent refinements (\citealp{tenreiro2020bandwidth}). These examples underscore a common theme: even when optimal tuning rates are known, selecting the instance-specific leading constants in a practically reliable way remains crucial for stable finite-sample performance. In this regard, our work contributes to this theme in the context of gradient estimation, with a key new insight that only a small pilot budget is required to attain the best possible performance by leveraging the simulation flexibility and finite-difference structure. It is possible that our insight can apply to other problems under this theme, which would constitute interesting future works.

The rest of the paper is organized as follows. In Section~\ref{sec:setting}, we introduce the motivation of developing an asymptotically optimal FD scheme under a minimax framework. Section~\ref{sec:single} proposes a two-stage scheme to optimize the performance of FD in the single-dimensional case, and discusses the asymptotic properties of parameter estimation and the overall MSE. Section~\ref{sec:multi} extends the proposed scheme to the multi-dimensional case and provides other relevant extensions. Section~\ref{sec:NR} presents numerical results. The proofs of the theorems and propositions in the paper can be found in the e-companion.

\section{Setting and Motivation}\label{sec:setting}
To motivate our approach, we begin with the single-dimensional setting of zeroth-order stochastic gradient estimation. Let $f(\cdot):\mathbb{R}\to\mathbb{R}$ denote the (unknown) performance function of interest. For any input $x\in\mathbb{R}$, we can obtain a simulation output $\hat{f}(x)$ such that $\mathbb{E}[\hat{f}(x)]=f(x)$ and $\text{Var}(\hat{f}(x))=\sigma^{2}(x)$. That is, $\hat{f}(x)$ is an unbiased noisy observation of $f(x)$ with (possibly input-dependent) variance $\sigma^{2}(x)$. Our goal is to estimate the first derivative $f^{(1)}(x_{0})$ at a given point
$x_{0}\in\mathbb{R}$. A standard approach is the central finite-difference (CFD) scheme, which for a perturbation size $\delta>0$ forms the single-run estimator
\begin{equation}
Y(\delta)=\frac{\hat{f}(x_{0}+\delta)-\hat{f}(x_{0}-\delta)}{2\delta}.\label{single-run}
\end{equation}
For $n$ independent replications  $\{Y_{i}(\delta)\}_{i=1}^{n}$, the CFD estimator of $f^{(1)}(x_{0})$ is the sample average 
\begin{equation}
\widehat{\theta}=\frac{1}{n}\sum_{i=1}^{n}Y_{i}(\delta).\label{standard FD}
\end{equation}

\subsection{Background on Finite-Difference and Error Behaviors}\label{sec:background}
We next quantify how the perturbation size $\delta$ affects the accuracy of the CFD estimator. To this end, we impose the following standard local regularity conditions.
\begin{assumption}\label{asp:1}
	The performance function $f(x)$ is five-times continuously differentiable in a neighborhood of $x_{0}$, with $f^{(3)}(x_{0})\neq 0$.
\end{assumption}
The differentiability condition enables the Taylor expansion used below, and the non-degeneracy condition $f^{(3)}(x_{0})\neq 0$ ensures that the leading bias term of the CFD estimator, which we will discuss in detail, does not vanish.
\begin{assumption}\label{asp:2}
	The output variance $\sigma^{2}(x)$ varies smoothly around $x_{0}$, in the sense that as $\delta\to0$, $\sigma^{2}(x_{0}+\delta)=\sigma^{2}(x_{0})+\Theta(|\delta|^{s})$ for some $s>0$. Specifically, $g(\delta)=\Theta(|\delta|^{s})$ means that there exist constants $c_{1},c_{2}>0$ and $\delta_{0}>0$ such that for all $0<|\delta|<\delta_{0}$, $c_{1}|\delta|^s\leq|g(\delta)|\leq c_{2}|\delta|^s$.
\end{assumption}
This condition formalizes the idea that a small perturbation in the input does not cause the noise variance to change abruptly, and it typically holds in simulation systems driven by continuous inputs.

Under Assumptions~\ref{asp:1} and~\ref{asp:2}, a Taylor expansion with the Lagrange-form remainder yields, for $\delta>0$,
\begin{equation}
Y(\delta)=f^{(1)}(x_{0})+\left[B\delta^{2}+R(\delta)\delta^{4}\right]+\frac{\epsilon(\delta)}{2\delta},\label{basic expansion}
\end{equation}
where $B=f^{(3)}(x_{0})/6$ and $R(\delta)$ is the Lagrange-remainder coefficient induced by the fifth derivative evaluated at intermediate points between $x_{0}$ and $x_{0}\pm\delta$; that is,
$$R(\delta)=\left[f^{(5)}(x_{0}+t(\delta)\delta)+f^{(5)}(x_{0}-t(-\delta)\delta)\right]/240,$$ with $t(\cdot)\in[0,1]$.
Moreover, $\epsilon(\delta)\in\mathbb{R}$ is a mean-zero noise term with $\text{Var}(\epsilon(\delta))=\eta^{2}(\delta)$. Under Assumption~\ref{asp:2}, as $\delta\to0$, we may write $\eta^{2}(\delta)=2k\sigma^{2}(x_{0})+\Theta(\delta^{s})$ , where $k=1$ when $\hat{f}(x_{0}+\delta)$ and $\hat{f}(x_{0}-\delta)$ are generated independently, and $0<k<1$ when common random numbers (CRNs) are used to induce positive correlation for variance reduction. Since $k$ enters subsequent derivations only as a multiplicative constant in the variance term and does not affect the form of the analysis, we focus on the independent-sampling case and set $k=1$ for the remainder of this section.

We consider the sample-average CFD estimator $\widehat{\theta}=n^{-1}\sum_{i=1}^{n}Y_{i}(\delta)$.
Under the preceding expansion \eqref{basic expansion}, $\widehat{\theta}$ admits a bias $\mathbb{E}[\widehat{\theta}]-f^{(1)}(x_{0})=B\delta^{2}+R(\delta)\delta^{4}$
while its variance is 
$\text{Var}(Y(\delta))/n=\eta^{2}(\delta)/(4n\delta^{2})$.
Consequently, the mean squared error (MSE) can be written as
\begin{eqnarray*}
	\text{MSE}=\mathbb{E}[(\widehat{\theta}-f^{(1)}(x_{0}))^{2}]=\left[B+R(\delta)\delta^{2}\right]^2\delta^{4}+\eta^{2}(\delta)/(4n\delta^{2}).
\end{eqnarray*}
Using that $R(\delta)$ is bounded locally and $\eta^{2}(\delta)=2\sigma^{2}(x_{0})+\Theta(\delta^{s})$ as $\delta\to0$, this yields
\begin{equation}
\text{MSE}=(B+o(1))^2\delta^{4}+(\sigma^{2}(x_{0})/2+o(1))/(n\delta^{2}),\label{basic MSE}
\end{equation}
where $o(1)$ denotes a term that vanishes as $\delta\to0$. 

We now consider choosing $\delta$ as a function of $n$ to minimize the MSE. The leading terms in the MSE \eqref{basic MSE} reveal a bias–variance trade-off: the squared bias scales as $B^{2}\delta^{4}$, while the variance scales as $(\sigma^{2}(x_{0})/2)/(n\delta^{2})$. To achieve the optimal MSE rate, these two leading terms should be of the same order; otherwise, one can adjust the order of $\delta$ to reduce the dominant term. Balancing $B^{2}\delta^{4}$ and $(\sigma^{2}(x_{0})/2)/(n\delta^{2})$ yields $\delta=\Theta(n^{-1/6})$, and the resulting optimal MSE rate is $n^{-2/3}$. We call a scheme \emph{rate-optimal} if it achieves this optimal order.

Thanks to the above calculation, which is well-known in the literature, implementing a rate-optimal CFD scheme is straightforward. However, to sharpen the accuracy further is nontrivial. To understand this, we next characterize the exact first-order term of the MSE. The leading terms of \eqref{basic MSE} satisfy the inequality
$$B^{2}\delta^{4}+(\sigma^{2}(x_{0})/2)/(n\delta^{2})\geq3\left(B\sigma^{2}(x_{0})/(4n)\right)^{2/3},$$
with equality if and only if 
\begin{equation}\delta=\left(\sigma^{2}(x_{0})/(4nB^{2})\right)^{1/6}.\label{optimal perturbation}
\end{equation}
This choice of $\delta$ depends on the priori unknown quantities $B$ and $\sigma^{2}(x_{0})$. Moreover, if we define the asymptotic scaled MSE
$$\mathcal{R}\triangleq\lim_{n\to+\infty}n^{2/3}\text{MSE},$$
(whenever the limit exists), then $\mathcal{R}_{\text{opt}}\triangleq3(B\sigma^{2}(x_{0})/4)^{2/3}$, which is the $\mathcal R$ attained by using a CFD with \eqref{optimal perturbation}, is the smallest achievable asymptotic scaled MSE. In other words, among CFD schemes, the first-order optimal MSE takes the form
\begin{equation}\text{MSE}_{\text{opt}}=3\left(B\sigma^{2}(x_{0})/4\right)^{2/3}n^{-2/3}(1+o(1)),\label{optimal MSE}
\end{equation}
with leading constant factor precisely $\mathcal{R}_{\text{opt}}$. We call a CFD scheme \emph{oracle-optimal} if it achieves \eqref{optimal MSE}, where ``oracle" refers to the observation that this is achievable if we know the model quantities $B$ and $\sigma^{2}(x_{0})$.

The distinction between \emph{rate-optimal} and \emph{oracle-optimal} highlights why tuning is difficult in practice. Setting $\delta=\alpha n^{-1/6}$ for some $\alpha>0$ suffices to achieve the optimal MSE rate $n^{-2/3}$, whereas attaining the exact first-order constant requires the specific choice $\alpha=\left(\sigma^{2}(x_{0})/(4B^{2})\right)^{1/6}$, which in turn depends on the unknown model quantities $B$ and $\sigma^{2}(x_{0})$. Because these quantities are typically unavailable, a conventional CFD estimator with an ad hoc $\alpha$ may perform well for some instances but poorly for others, as illustrated in Figure~\ref{intro_example1}. Our main goal in this paper is to develop an efficient, data-driven choice of $\alpha$ that attains oracle-optimality.

\subsection{Broader Class of Schemes and Notion of Robustness}
The preceding analysis shows that the first-order MSE of a CFD estimator depends on problem-specific quantities (e.g., $B$ and $\sigma^{2}(x_{0})$), so a tuning choice that is appropriate for one function may be far from optimal for another. Before we present our methodology to address this challenge, in this subsection we first introduce a notion of robustness that will not only be useful to discuss our methodology's benefits, but also reveal the broader class of schemes, beyond the CFD discussed in Section \ref{sec:background}, that our methodology is competitive against.

To begin, we consider the class $\Omega_{\hat{f}}$ of simulation models defined by 
$$\Omega_{\hat{f}}\triangleq\left\{\hat{f}(\cdot)\Big|\exists r>0\ \text{such that} \ 
\begin{aligned}
& f:(x_0-r,x_0+r)\to\mathbb{R} \\ 
& f(\cdot)\in C^{5}(x_0-r,x_0+r) 
\end{aligned},
\ |f^{(3)}(x_0)|\in[\underline{\kappa},\overline{\kappa}],
\ \text{and}\ \sigma^{2}(x_{0})\in(0,\overline{\sigma}^2]\right\}$$
for some given positive constants $\underline{\kappa}$, $\overline{\kappa}$ and $\overline{\sigma}^2$. The local smoothness condition $f(\cdot)\in C^{5}(x_0-r,x_0+r)$ guarantees the validity of the Taylor expansion around $x_0$ used to characterize the truncation bias of the CFD estimator (including a controlled Lagrange remainder). The lower bound $|f^{(3)}(x_0)|\geq \underline{\kappa}>0$ rules out degenerate instances in which the leading bias term vanishes, thereby changing the tuning structure, while the upper bound $|f^{(3)}(x_0)|\leq \overline{\kappa}$ prevents excessively large curvature near $x_{0}$. Finally, $\sigma^{2}(x_{0})\leq \overline{\sigma}^2$ bounds the local simulation noise level. Collectively, these conditions specify a broad yet well-behaved regime in which we can compare the MSE behavior of different FD schemes in a uniform manner; the minimax (worst-case) criterion introduced next will be taken over this class $\Omega_{\hat{f}}$.

Next, we consider a general family of CFD schemes. We allow the $i$-th replication of \eqref{single-run} to use a (possibly random) perturbation size of the form $\delta_{i}=\alpha_{i}n^{-1/6}$, $i=1,\ldots,n$, where $(\alpha_{1},\ldots,\alpha_{n})$ is generated from a joint distribution $\mathcal{P}$ satisfying $\alpha_{i}>0$ almost surely and mild moment conditions $\mathbb{E}[\alpha_{i}^{8}]<+\infty$ and $\mathbb{E}[1/\alpha_{i}^{2}]<+\infty$. To concretize, we define
$$\Omega_{\mathcal{P}}\triangleq\{\text{\ joint distribution on\ }(\alpha_{1},\ldots,\alpha_{n})\ |\  \alpha_{i}>0\text{\ a.s.\ }, \mathbb{E}[\alpha_{i}^{8}]<+\infty, \mathbb{E}[1/\alpha_{i}^{2}]<+\infty,\ \forall i\}.$$
The conditions above ensure that the scaled bias and variance contributions remain well-defined under the $n^{2/3}$ normalization. Moreover, we allow the CFD estimator to take the weighted form $$\widehat{\theta}_{G}=\sum_{i=1}^{n}w_{i}Y(\delta_{i}),$$ where $w_{i}>0,\ i=1,\ldots,n$ are possibly unequal. Here, $\mathcal{P}$ and $\{w_{i}\}_{i=1}^n$ constitute the estimator design and are fixed a priori, i.e., they cannot depend on the particular instance $\hat{f}(\cdot)\in\Omega_{\hat{f}}$; across different instances, the perturbations $\{\delta_{i}\}_{i=1}^n$ 
are generated independently according to the same $\mathcal{P}$.

We now formalize a robustness notion. We consider the worst-case scaled MSE
\begin{equation}
\underset{\hat{f}(\cdot)\in\Omega_{\hat f}}{\sup} n^{2/3}\mathbb{E}[(\widehat{\theta}_{G}-f^{(1)}(x_{0}))^{2}]\label{worst-case MSE}
\end{equation}
where $\mathbb{E}$ is an expectation with respect to both the simulation noise and the randomness in generating the perturbation sizes. If a scheme achieves a smaller value of \eqref{worst-case MSE}, then it is viewed as more robust since it performs better under the worst-case instance of $\hat f$. With this notion in place, let us consider the most robust scheme. Specifically, consider the asymptotic minimax scaled MSE:
\begin{eqnarray*}	\mathcal{R}_{G}^{*}&\triangleq&\lim_{n\to+\infty}\underset{\mathcal{P}\in\Omega_{\mathcal{P}},w_{i}>0}{\inf}\ \underset{\hat{f}(\cdot)\in\Omega_{\hat f}}{\sup} n^{2/3}\mathbb{E}[(\widehat{\theta}_{G}-f^{(1)}(x_{0}))^{2}]\\	&=&\lim_{n\to+\infty}\underset{\mathcal{P}\in\Omega_{\mathcal{P}},w_{i}>0}{\inf}\ \underset{\hat{f}(\cdot)\in\Omega_{\hat f}}{\sup} n^{2/3}\left(\mathbb{E}_{\mathcal{P}}\left[(\sum_{i=1}^{n}w_{i}-1)f^{(1)}(x_{0})+B\sum_{i=1}^{n}w_{i}\delta_{i}^{2}+\right.\right.\\	&&\left.\left.\sum_{i=1}^{n}w_{i}R(\delta_{i})\delta_{i}^{4}\right]^{2}+\mathbb{E}_{\mathcal{P}}\left[\sum_{i=1}^{n}(w_{i}/2\delta_{i})^{2}\eta^{2}(\delta_{i})\right]\right),
\end{eqnarray*}
where $\mathbb{E}_{\mathcal{P}}$ denotes the expectation with respect to the randomness in generating the perturbation sizes. The quantity $\mathcal{R}_{G}^{*}$ is the smallest possible worst-case MSE, scaled in order $n^{2/3}$ asymptotically as $n\to\infty$. That is, any $\hat\theta_G$ can exhibit at best a worst-case scaled MSE $\mathcal{R}_{G}^{*}$ asymptotically. Moreover, we call a design that attains $\mathcal{R}_{G}^{*}$ \emph{asymptotic minimax-optimal}. 

We will provide some characterization of asymptotic minimax-optimal designs, which would subsequently facilitate the illustration of the strengths of our proposed approach. First of all, intuitively, if the weights do not sum to one, the estimator will incur a non-vanishing bias term proportional to $f^{(1)}(x_{0})$, which cannot be controlled uniformly over the function class and therefore cannot be optimal under the worst-case criterion. The next proposition formalizes this idea and shows that, without loss of optimality, we may restrict attention to schemes with normalized weights $\sum_{i=1}^{n}w_{i}=1$.
\begin{proposition}\label{pro_general}
    We have
	\begin{eqnarray*}	    \mathcal{R}_{G}^{*}=\lim_{n\to+\infty}\underset{\mathcal{P}\in\Omega_{\mathcal{P}},\sum_{i=1}^{n}w_{i}=1,w_{i}>0}{\inf}\ \underset{\hat{f}(\cdot)\in\Omega_{\hat f}}{\sup}\left(\mathbb{E}_{\mathcal{P}}\left[B\sum_{i=1}^{n}w_{i}\alpha_{i}^{2}\right]^{2}+\mathbb{E}_{\mathcal{P}}\left[n\sum_{i=1}^{n}(w_{i}/2\alpha_{i})^{2}2\sigma^{2}(x_{0})\right]\right).
	\end{eqnarray*}
\end{proposition}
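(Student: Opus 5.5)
The plan is to prove the identity as two inequalities, after a preliminary reduction that forces $\sum_i w_i=1$.

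\textbf{Step 1 (forcing $\sum_i w_i=1$).} The class $\Omega_{\hat f}$ restricts only $|f^{(3)}(x_0)|$, $\sigma^2(x_0)$ and local $C^5$ smoothness. It places no restriction on $f^{(1)}(x_0)$. Given any $\hat f\in\Omega_{\hat f}$, consider $\hat f_c(x)=\hat f(x)+cx$ for $c\in\mathbb{R}$. This model is still in $\Omega_{\hat f}$, and it has the same $B$, $R(\cdot)$ and $\eta^2(\cdot)$, with $f^{(1)}(x_0)$ replaced by $f^{(1)}(x_0)+c$. Hence for a fixed design with $\sum_i w_i\neq 1$, the squared-bias term contains $\bigl((\sum_i w_i-1)(f^{(1)}(x_0)+c)+\text{const}\bigr)^2$. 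This tends to $+\infty$ as $|c|\to\infty$, while the variance term is unchanged. So the inner supremum equals $+\infty$ for every such design. On the other hand, designs with $\sum_i w_i=1$ give a finite value; for instance, deterministic equal weights with $\alpha_i\equiv1$ give a finite value on the subclass used in Step 2, and by Step 3 a finite value in the limit. Therefore the infimum can be restricted to $\sum_i w_i=1$ without changing $\mathcal R_G^*$.

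\textbf{Step 2 (lower bound: the general expression is at least the stated one).} With $\sum_i w_i=1$ and $\delta_i=\alpha_i n^{-1/6}$, the scaled MSE is
$$\mathbb{E}_{\mathcal P}\Bigl[B\textstyle\sum_i w_i\alpha_i^2+n^{-1/3}\sum_i w_iR(\delta_i)\alpha_i^4\Bigr]^2+\mathbb{E}_{\mathcal P}\Bigl[n\textstyle\sum_i (w_i/2\alpha_i)^2\eta^2(\delta_i)\Bigr].$$
Restrict the supremum to the subclass of models with $f$ a cubic polynomial and constant noise variance $\sigma^2(x)\equiv\sigma^2(x_0)$, with independent sampling. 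On this subclass $R\equiv0$ and $\eta^2\equiv2\sigma^2(x_0)$ exactly. The subclass still realizes every admissible pair $(B,\sigma^2(x_0))$ with $|6B|\in[\underline\kappa,\overline\kappa]$ and $\sigma^2(x_0)\in(0,\overline\sigma^2]$. Hence for every $n$ and every normalized design, the supremum over $\Omega_{\hat f}$ is at least the supremum of the displayed target expression. Taking the infimum and then the limit gives the inequality $\ge$.

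\textbf{Step 3 (upper bound).} Here we must show that the remainder contributions vanish. I would expand the square and bound the cross term and the remainder-squared term using Cauchy--Schwarz, Jensen's inequality with the weights $w_i$, and the moment conditions. The cross term is of order $n^{-1/3}\,\mathbb{E}[\sum_i w_i\alpha_i^2]^{1/2}\,\mathbb{E}[\sum_i w_i\alpha_i^8]^{1/2}\sup|R|$, and the squared remainder is of order $n^{-2/3}\,\mathbb{E}[\sum_i w_i\alpha_i^8]\sup|R|^2$; this is where $\mathbb{E}[\alpha_i^8]<\infty$ enters. For the variance term, write $\eta^2(\delta_i)=2\sigma^2(x_0)+\Theta(\delta_i^s)$. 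This gives an extra contribution of $n\sum_i w_i^2\,\mathbb{E}[\alpha_i^{-2}\Theta(\alpha_i^s n^{-s/6})]$, which is a vanishing multiple of the main variance term; this is where $\mathbb{E}[1/\alpha_i^2]<\infty$ enters. Both pieces are thus $o(1)$ relative to the leading terms along a near-optimal sequence of designs. The natural candidate for such a sequence is deterministic equal weights, where all moments are trivially controlled. Passing to the limit then gives $\le$.

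\textbf{Main obstacle.} The main obstacle is uniformity in Step 3. The class $\Omega_{\hat f}$ does not bound $f^{(5)}$ uniformly, nor the constants in Assumption~\ref{asp:2}, nor the radius $r$. So the $o(1)$ remainders are pointwise in $\hat f$, not uniform, and the supremum and the limit do not obviously commute. My first attempt would be to interpret the remainders instance-wise: for each fixed $\hat f$ the remainder vanishes as $n\to\infty$, and the leading part depends on $\hat f$ only through $(B,\sigma^2(x_0))$, which ranges over a compact set. If that is insufficient, I would make explicit a uniform local bound on $|f^{(5)}|$ and on the $\Theta(\cdot)$ constants over the class, which is implicitly what the asymptotic scaling requires. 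With such a bound, the estimates of Step 3 hold uniformly over $\Omega_{\hat f}$ and the argument closes.
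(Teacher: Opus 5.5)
Your Step 1 reduction and your Step 2 lower bound are sound. Step 2 is also cleaner than the paper's lower bound. The paper bounds the full objective below by the target expression minus $(C_1n^{-1/3}+C_2n^{-s/6})$, which needs constants uniform over $\Omega_{\hat f}$. Your restriction to cubic $f$ with homoscedastic noise gives the inequality exactly for every $n$. (If the strictly positive lower constant of Assumption~\ref{asp:2} is read into the class, take $\sigma^2(x)=\sigma^2(x_0)+c|x-x_0|^s$ instead; this only increases $\eta^2$, so the inequality survives.)

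\textbf{The gap is in Step~3.} You assert that deterministic equal weights are near-optimal for the target expression, but you do not prove it. An arbitrary near-optimal sequence will not do. A normalized design can put tiny weight on huge $\alpha_i$, which makes $\mathbb{E}_{\mathcal P}[\sum_i w_i\alpha_i^8]$ grow with $n$ and pushes $\delta_i$ outside the region where the Taylor bounds hold. Your remainder estimates are then not $o(1)$.

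The missing ingredient is a lower bound on the target itself. Set $a_i=\mathbb{E}_{\mathcal P}[\alpha_i^2]$. Jensen gives $\mathbb{E}_{\mathcal P}[(B\sum_i w_i\alpha_i^2)^2]\ge B^2(\sum_i w_ia_i)^2$ and $\mathbb{E}_{\mathcal P}[\alpha_i^{-2}]\ge a_i^{-1}$. H\"older gives $(\sum_i w_ia_i)(n\sum_i w_i^2/a_i)\ge(\sum_i w_i)^3=1$. Let $\tilde\alpha^{-2}=n\sum_i w_i^2/a_i$, which does not depend on $\hat f$. Then the target is at least $B^2\tilde\alpha^4+\sigma^2(x_0)/(2\tilde\alpha^2)$. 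So its inf--sup is at least $V=\inf_{\alpha>0}\sup_{\hat f}\{B^2\alpha^4+\sigma^2(x_0)/(2\alpha^2)\}$, and equal weights with the minimax $\alpha_*$ attain $V$ for every $n$.

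This is the computation in the paper's proof of Theorem~\ref{thm:1}. Using it here is not circular, since it concerns only the target. Once completed this way, your single-design upper bound is more careful than the paper's device. The paper restricts to designs whose variance term at $\overline{\sigma}^2$ is below $\mathcal R_0$. That restriction controls only negative moments of $\alpha_i$, so it does not by itself yield a uniform $C_3$ for the bias cross term.

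\textbf{Your first option for the uniformity obstacle cannot work.} With $\Omega_{\hat f}$ as literally defined, replace $f$ by $f+M(x-x_0)^5$. This leaves $f^{(1)}(x_0)$, $f^{(3)}(x_0)$ and $\sigma^2(\cdot)$ unchanged, so the model stays in the class. But it adds $Mn^{-1/3}\sum_i w_i\alpha_i^4>0$ to the scaled bias. Letting $M\to\infty$ makes the left-hand side $+\infty$ for every $n$ and every design, while the right-hand side is finite.

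A uniform bound on $|f^{(5)}|$ over a common neighborhood, together with uniform constants in Assumption~\ref{asp:2}, is therefore necessary for the statement to hold, not optional. The paper's proof assumes this tacitly when it takes $N_0,C_1,\dots,C_4$ independent of $\hat f$. Under that reading, the equal-weight design with $\alpha_*$ has worst-case scaled MSE at most $V+O(n^{-1/3})+O(n^{-s/6})$. With the lemma above, your argument then closes.
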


Next, we consider a deterministic subclass of $\Omega_{\mathcal{P}}$. Let $\Omega_{d}\subset\Omega_{\mathcal{P}}$ denote the set of degenerate distributions $\mathcal{P}$ under which each $\alpha_{i}$ is nonrandom (i.e., $\alpha_{i}$ takes a fixed value almost surely). Equivalently, under $\mathcal{P}\in\Omega_{d}$, the perturbation sizes $\delta_{i}=\alpha_{i}n^{-1/6}$ are deterministic design parameters rather than random variables. Since $\mathcal{P}\in\Omega_{d}$ is degenerate, optimizing $\mathcal{P}$ over $\Omega_{d}$ is equivalent to optimizing directly over $\{\delta_{i}\}_{i=1}^n$ (or $\{\alpha_{i}\}_{i=1}^n$) and $\{w_{i}\}_{i=1}^n$. To this end, we define the asymptotic minimax scaled MSE, over deterministic perturbations only, as
\begin{eqnarray*}
	\mathcal{R}_{\text{deterministic}}^{*}&\triangleq&\lim_{n\to+\infty}\underset{\mathcal{P}\in\Omega_{d},w_{i}>0}{\inf}\ \underset{\hat{f}(\cdot)\in\Omega_{\hat f}}{\sup} n^{2/3}\mathbb{E}[(\widehat{\theta}_{G}-f^{(1)}(x_{0}))^{2}]\\
	&=&\lim_{n\to+\infty}\underset{\delta_{i},w_{i}>0}{\inf}\ \underset{\hat{f}(\cdot)\in\Omega_{\hat f}}{\sup} n^{2/3}\left(\left[(\sum_{i=1}^{n}w_{i}-1)f^{(1)}(x_{0})+B\sum_{i=1}^{n}w_{i}\delta_{i}^{2}+\right.\right.\\
	&&\left.\left.\sum_{i=1}^{n}w_{i}R(\delta_{i})\delta_{i}^{4}\right]^{2}+\sum_{i=1}^{n}(w_{i}/2\delta_{i})^{2}\eta^{2}(\delta_{i})\right).
\end{eqnarray*}
The next corollary is in parallel to Proposition~\ref{pro_general}, restricting the admissible set of distributions from $\Omega_{\mathcal{P}}$ to $\Omega_{d}$.
\begin{corollary}
    We have
	\begin{eqnarray*}		\mathcal{R}_{\text{deterministic}}^{*}=\lim_{n\to+\infty}\underset{\alpha_{i}>0,\sum_{i=1}^{n}w_{i}=1,w_{i}>0}{\inf}\ \underset{\hat{f}(\cdot)\in\Omega_{\hat f}}{\sup}\left[B\sum_{i=1}^{n}w_{i}\alpha_{i}^{2}\right]^{2}+n\sum_{i=1}^{n}(w_{i}/2\alpha_{i})^{2}2\sigma^{2}(x_{0}).
	\end{eqnarray*}
\end{corollary}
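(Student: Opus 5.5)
The corollary follows by running the argument behind Proposition~\ref{pro_general} with the admissible distributions restricted from $\Omega_{\mathcal P}$ to the degenerate class $\Omega_d$. Under $\mathcal P\in\Omega_d$ every $\mathbb{E}_{\mathcal P}$ acts on a constant, so the expectations disappear. The proof has two parts, carried out for each fixed design $(\{\alpha_i\},\{w_i\})$ and then passed through the infimum and the limit.

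\emph{Step 1: unnormalized weights do not help.} Fix $\alpha_i>0$, $w_i>0$ and suppose $\sum_i w_i\neq 1$. The class $\Omega_{\hat f}$ places no restriction on $f^{(1)}(x_0)$: adding a linear term $cx$ to $f$ changes neither $f^{(3)}$, $f^{(5)}$ nor the noise. Taking the supremum over $c$ therefore makes the term $n^{2/3}\big[(\sum_i w_i-1)c+\cdots\big]^2$, and hence the worst-case risk, equal to $+\infty$. Such designs can be dropped from the infimum whenever the normalized infimum is finite. It is finite: equal weights $w_i=1/n$ with a common $\alpha$ give a bounded value, since $|B|\le\overline\kappa/6$ and $\sigma^2\le\overline\sigma^2$. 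So we may impose $\sum_i w_i=1$.

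\emph{Step 2: remove the remainder and the variance perturbation.} With $\sum_i w_i=1$ and $\delta_i=\alpha_i n^{-1/6}$, the scaled risk equals
\[
\Big[B\sum_i w_i\alpha_i^2+n^{-1/3}\sum_i w_iR(\delta_i)\alpha_i^4\Big]^2+n\sum_i\Big(\frac{w_i}{2\alpha_i}\Big)^2\eta^2(\delta_i),
\]
with $\eta^2(\delta)=2\sigma^2(x_0)+\Theta(\delta^s)$. We must show this differs from the claimed leading expression by a term that vanishes after $\inf\sup$ and $n\to\infty$.

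\emph{Main obstacle: uniformity.} The remainder $R$ and the $\Theta(\delta^s)$ term are not uniformly controlled over $\Omega_{\hat f}$. For a fixed design, the supremum could exploit a huge $f^{(5)}$ or an abrupt variance change near $x_0$. We will handle this exactly as in the proof of Proposition~\ref{pro_general}, which already performs this reduction over all of $\Omega_{\mathcal P}$. The argument is an upper bound and a lower bound.

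\begin{itemize}
\item[(a)] Lower bound. Restrict the supremum to instances with $R\equiv0$ and constant variance, for example cubic $f$ and homoscedastic noise. On this subclass the risk equals the leading expression exactly. This gives $\ge$.
\item[(b)] Upper bound. Here we reuse the proposition's treatment of the remainder: as $n\to\infty$, $\delta_i\to 0$ inside the fixed neighborhood, and the $R$ and $\Theta(\delta^s)$ corrections are $o(1)$ relative to the leading terms, under the same convention on the class used there.
\end{itemize}

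Finally, the deterministic class is a special case of the proposition's setting: point masses trivially satisfy the moment conditions $\mathbb{E}[\alpha_i^8]<\infty$ and $\mathbb{E}[1/\alpha_i^2]<\infty$. Every estimate in that proof therefore holds verbatim with the infimum taken over $\Omega_d$. Dropping the now-trivial $\mathbb{E}_{\mathcal P}$ yields the stated formula for $\mathcal R^*_{\text{deterministic}}$.
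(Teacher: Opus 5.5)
Your route is the paper's: the corollary is Proposition~\ref{pro_general} rerun over $\Omega_d$, where every $\mathbb{E}_{\mathcal P}$ becomes trivial. Two of your steps are cleaner than the paper's.

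In Step~1, taking the supremum over $f+cx$ disposes of every unnormalized design without any sign conditions. You also do not need finiteness of the normalized infimum there, since an infimum over a union is the smaller of the two infima.

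In (a), restricting to cubic $f$ with homoscedastic noise gives, for every $n$ and every normalized design, $\sup_{\Omega_{\hat f}}(\text{full risk})\ge\sup_{\Omega_{\hat f}}(\text{leading expression})$ exactly. This works because the leading expression depends on $\hat f$ only through $(B,\sigma^2(x_0))$, and that subclass realizes every admissible pair. It replaces the paper's $-C_1n^{-1/3}-C_2n^{-s/6}$ bound, which needs uniform control over the class that the paper never establishes.

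The gap is (b). You correctly diagnose the obstacle, but there is no ``convention on the class'' in the paper to fall back on. The proof of Proposition~\ref{pro_general} just asserts constants $C_3,C_4$ that would have to be uniform over $\Omega_{\hat f}$. The class as defined supplies none: $r$ is instance-dependent, and neither $f^{(5)}$ nor $\sigma^2(x_0+\delta)-\sigma^2(x_0)$ is bounded.

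In fact the obstacle is fatal for the statement as written. For any design with $\sum_i w_i=1$, take the instance $f(x)=B(x-x_0)^3+M(x-x_0)^5$ with additive homoscedastic noise. It has $R\equiv M$, so the bias $B\sum_i w_i\delta_i^2+M\sum_i w_i\delta_i^4\to\infty$ as $M\to\infty$. Hence the supremum of the full risk is $+\infty$ for every design, and $\mathcal R^*_{\text{deterministic}}=+\infty$, while the right-hand side is finite. The same instance also refutes the ``bounded value'' claim in your Step~1 when applied to the full risk rather than the leading expression.

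So (b) can only be proved after strengthening $\Omega_{\hat f}$. For example, require a common $r_0$ with $\sup_{|x-x_0|\le r_0}|f^{(5)}(x)|\le L$ and $|\sigma^2(x_0+\delta)-\sigma^2(x_0)|\le L|\delta|^s$ for $|\delta|\le r_0$. With that in place, (b) is short. The H\"older step in the proof of Theorem~\ref{thm:1} shows the normalized infimum of the leading expression is attained by $w_i=1/n$ and a single constant $\alpha_*$. At that one design, once $\alpha_*n^{-1/6}\le r_0$, the full risk exceeds the leading expression by at most $C(n^{-1/3}+n^{-s/6})$, uniformly over the strengthened class.
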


We now establish the main conclusion of this section:
\begin{theorem}\label{thm:1}
    We have
	\begin{eqnarray}		\mathcal{R}_{G}^{*}&=&\mathcal{R}_{\text{deterministic}}^{*}\label{thm1.1}\\
		&=&\underset{\alpha>0}{\inf}\ \underset{\hat{f}(\cdot)\in\Omega_{\hat f}}{\sup}\ B^{2}\alpha^{4}+\sigma^{2}(x_{0})/(2\alpha^{2})\label{thm1.2}\\
		&\geq&\underset{\hat{f}(\cdot)\in\Omega_{\hat f}}{\sup}\ \underset{\alpha>0}{\inf}\ B^{2}\alpha^{4}+\sigma^{2}(x_{0})/(2\alpha^{2}).\label{thm1.3}
	\end{eqnarray}
\end{theorem}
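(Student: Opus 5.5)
The plan is to work from the reduced forms in Proposition~\ref{pro_general} and its Corollary, where the remainder terms and the weight-normalization issue have already been handled. Under these reductions the instance enters only through $B^{2}=(f^{(3)}(x_0))^{2}/36$ and $\sigma^{2}(x_0)$. The objective is nondecreasing in both. Writing $\overline{B}=\overline{\kappa}/6$, the supremum over $\Omega_{\hat f}$ is therefore attained at $|B|=\overline{B}$, $\sigma^{2}(x_0)=\overline{\sigma}^{2}$. This requires exhibiting a $C^{5}$ function with $|f^{(3)}(x_0)|=\overline{\kappa}$ and noise variance $\overline{\sigma}^{2}$ at $x_0$, e.g.\ a cubic plus Gaussian noise. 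For any design with $\sum_i w_i=1$, the worst-case objective is then
\[
\Phi(\mathcal P,w)=\overline{B}^{2}\Bigl(\sum_{i} w_i a_i\Bigr)^{2}+\frac{\overline{\sigma}^{2}}{2}\,n\sum_i w_i^{2}\,\mathbb{E}_{\mathcal P}[\alpha_i^{-2}],
\qquad a_i\triangleq\mathbb{E}_{\mathcal P}[\alpha_i^{2}],
\]
where $a_i<\infty$ follows from the moment conditions in $\Omega_{\mathcal P}$.

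The key lower bound comes in two steps. First, by Jensen's inequality applied to the convex map $x\mapsto 1/x$ at $x=\alpha_i^{2}$, we have $\mathbb{E}[\alpha_i^{-2}]\ge 1/a_i$. Second, set $A=\sum_i w_i a_i$. Hölder's inequality with three exponents all equal to $3$ gives
\[
1=\sum_i w_i=\sum_i \Bigl(\tfrac{w_i^{2}}{a_i}\Bigr)^{1/3}(w_i a_i)^{1/3}\cdot 1^{1/3}\le \Bigl(\sum_i \tfrac{w_i^{2}}{a_i}\Bigr)^{1/3}A^{1/3}n^{1/3},
\]
so $n\sum_i w_i^{2}/a_i\ge 1/A$. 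Combining the two steps yields $\Phi(\mathcal P,w)\ge \overline{B}^{2}A^{2}+\overline{\sigma}^{2}/(2A)$. Substituting $\alpha=\sqrt{A}$, this is at least $\inf_{\alpha>0}\bigl[\overline{B}^{2}\alpha^{4}+\overline{\sigma}^{2}/(2\alpha^{2})\bigr]$, and by the monotonicity observation above this equals the right-hand side of \eqref{thm1.2}.

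For matching upper bounds, take the degenerate design $\alpha_i\equiv\alpha$, $w_i=1/n$. It lies in $\Omega_d\subset\Omega_{\mathcal P}$ and gives exactly $\sup_{\hat f}\bigl[B^{2}\alpha^{4}+\sigma^{2}(x_0)/(2\alpha^{2})\bigr]$ for every $n$; optimizing over $\alpha$ attains the lower bound. Since $\Omega_d\subset\Omega_{\mathcal P}$, we also have $\mathcal R_G^{*}\le\mathcal R^{*}_{\text{deterministic}}$. The lower bound applies to both classes and the equal-weight constant design attains it, so the chain of inequalities collapses and \eqref{thm1.1}--\eqref{thm1.2} follow. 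The bounds are uniform in $n$, so the limits exist trivially.

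Finally, \eqref{thm1.3} is the elementary weak-duality inequality $\inf_\alpha\sup_{\hat f}\ge\sup_{\hat f}\inf_\alpha$. The main point needing care is the Hölder step, which is what makes unequal weights and randomization unhelpful. The other delicate point is justifying that the supremum in Proposition~\ref{pro_general} may be evaluated at the boundary instance. If $\overline{\sigma}^{2}$ were not attainable one would instead take a sequence of instances, but here the class allows $\sigma^{2}(x_0)=\overline{\sigma}^{2}$.
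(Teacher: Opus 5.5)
Your proof is correct and is essentially the paper's argument: Jensen's inequality $\mathbb{E}_{\mathcal P}[\alpha_i^{-2}]\ge 1/\mathbb{E}_{\mathcal P}[\alpha_i^{2}]$, the H\"older bound $\bigl(\sum_i w_ia_i\bigr)\, n\sum_i w_i^2/a_i\ge 1$, the equal-weight constant design for the matching upper bound, and weak duality for \eqref{thm1.3}. The one structural difference is that you fold the paper's two stages (random $\to$ deterministic, then deterministic $\to$ a single $\alpha$) into one, which also spares you the paper's assumption that a minimax-optimal $(\mathcal P^{*},\{w_i^{*}\})$ exists.

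Two small corrections are needed. First, the squared-bias term in Proposition~\ref{pro_general} is $\mathbb{E}_{\mathcal P}\bigl[(B\sum_i w_i\alpha_i^2)^2\bigr]$; the paper invokes $(\mathbb{E}X)^2\le\mathbb{E}[X^2]$ exactly here. So your $\Phi$ is a lower bound on the worst-case objective rather than equal to it, which is all your argument uses. Second, the extreme-instance step is unnecessary: $A$ depends only on the design, so $B^2A^2+\sigma^2(x_0)/(2A)$ bounds the objective for every instance, and you can take the supremum directly, as the paper does.
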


We explain Theorem~\ref{thm:1}. First, \eqref{thm1.1} implies that, in terms of the asymptotic minimax scaled MSE, there is no distinction between using the general class of schemes $\hat\theta_G$ and deterministic perturbations. In other words, to achieve asymptotic minimax scaled MSE, it suffices to focus on deterministic perturbations, which are discussed in Section \ref{sec:background}. Thanks to this result, it is easy to deduce the form of asymptotic minimax scaled MSE as \eqref{thm1.2}. Finally, \eqref{thm1.3} arises from a weak duality that the interchange of inf and sup results in a lower value.

Now we connect Theorem \ref{thm:1} to our proposed estimator discussed in detail in the next section. Our estimator operates by lightly estimating $B$ and $\sigma^2(x_0)$, consequently approximately optimizing $\alpha$ in the perturbation size at the instance-specific level. In terms of asymptotic scaled MSE, our estimator achieves the outer objective function $\underset{\alpha>0}{\inf}\ B^{2}\alpha^{4}+\sigma^{2}(x_{0})/(2\alpha^{2})$ in \eqref{thm1.3}, for any $\hat f\in\Omega_{\hat f}$. This leads to an oracle-optimal performance in the sense discussed in Section \ref{sec:background}. Consequently, this also leads to a worst-case asymptotic scaled MSE that is at least as good as an asymptotic minimax-optimal estimator that can attain $\mathcal R_G^*$. Moreover, \eqref{thm1.3} implies that this performance is at least as good as any possible CFD schemes $\hat\theta_G$ in the general class, including randomized and unequal-weighted schemes, as long as the scheme needs to select $\mathcal P$ and $w_i$'s in advance. That is, our proposed estimator will be at least as robust as any of these general schemes.

\section{Pilot-Calibrated Finite-Difference}\label{sec:single}
We now present our methodology. We propose a two-stage scheme, which we call Pilot-Calibrated Central Finite-Difference (PC-CFD), to obtain a robust and nearly oracle-optimal estimator that we hinted in the previous sections. In the first stage of our scheme, we estimate the model quantities, namely $B$ and $\sigma^{2}(x_{0})$, needed to calibrate the oracle-optimal perturbation size described in \eqref{optimal perturbation}. In the second stage, we plug these estimates into \eqref{optimal perturbation} and run a standard CFD, namely \eqref{standard FD}, using this calibrated perturbation size. This will give our estimate of $f^{(1)}(x_{0})$. 

To highlight the purpose of each stage, we call the first stage the \emph{pilot estimation} stage and the second stage the \emph{calibrated deployment} stage. The samples used in the pilot estimation and the calibrated deployment stages are independent. Throughout this section, one unit of simulation budget corresponds to one independent single-run CFD replication (i.e., one realization of~\eqref{single-run}, which requires two function evaluations at $x_0\pm\delta$). Thus, given a total budget of $n$ replications, we allocate $n_1$ replications to the first stage and $n_2=n-n_1$ replications to the second stage. Moreover, and importantly as a main insight, the first-stage allocation $n_1$ can be an asymptotically negligible fraction of $n$, which is the key to making our PC-CFD statistically efficient.

The basic skeleton of PC-CFD is outlined in Algorithm~\ref{alg:pccfd_main}. 

\begin{algorithm}[H]
	\caption{Pilot-Calibrated Central Finite-Difference (PC-CFD)}\label{alg:pccfd_main}
    \textbf{Input}: target point $x_0$, total budget of $n$ single-run CFD replications, pilot budget $n_1$.\\
	\textbf{Pilot Estimation Stage}: Use $n_1$ replications to obtain $(\widehat{B},\widehat{\sigma}^2(x_0))$ via Algorithm~\ref{alg:pccfd_pilot}.\\
	\textbf{Calibrated Deployment Stage}: Set $n_{2}=n-n_{1}$ and $\widehat{\delta}=\left(\widehat{\sigma}^{2}(x_{0})/(4n_{2}\widehat{B}^2)\right)^{1/6}$. Use $n_{2}$ new replications with perturbation size $\widehat{\delta}$ to compute $\widehat{\theta}_{\text{PC}}$ via Algorithm~\ref{alg:pccfd_deploy}.\\
	\Return $\widehat{\theta}_{\text{PC}}$ as an estimate of $f^{(1)}(x_0)$.
\end{algorithm}

\subsection{Pilot Estimation of Needed Model Quantities}\label{sec:estimation}
In the \emph{pilot estimation} stage, our goal is to obtain coarse yet reliable enough estimates of the two model quantities
$$B = \frac{f^{(3)}(x_0)}{6},\qquad \sigma^2(x_0)= \text{Var}(\hat f(x_0)),$$
which are needed to calibrate the oracle-optimal perturbation size in~\eqref{optimal perturbation}. 

We use two distinct perturbation sizes \(\delta_1\neq \delta_2>0\) and split the pilot budget \(n_1\) evenly across them. For simplicity, assume that $n_1$ is even and let $m=n_1/2$. For each $i\in\{1,2\}$ and each replication \(k=1,\ldots,m\), we generate two independent function evaluations at $x_0+\delta_i$ and $x_0-\delta_i$ (also independent across $k$ and across $i$), and form
\[
y_i^{(k)} \triangleq \frac{(\hat f(x_0+\delta_i))^{(k)}-(\hat f(x_0-\delta_i))^{(k)}}{2}, \qquad i\in\{1,2\},\ k\in\{1,\ldots,m\},
\]
where the superscript $k$ indexes independent replications and the subscript $i$ indicates which perturbation size is used. Each $y_i^{(k)}$ uses two function evaluations; equivalently $y_i^{(k)}/\delta_{i}$ is a realization of~\eqref{single-run}, while we store $y_i^{(k)}$ for pilot estimation. Using the paired outputs $\{(y_1^{(k)},y_2^{(k)})\}_{k=1}^{m}$, we estimate $B$ by
\begin{gather}
	\widehat{B}=\frac{1}{m}\sum_{k=1}^{m}\frac{\delta_{2}y_{1}^{(k)}-\delta_{1}y_{2}^{(k)}}{\delta_{2}\delta_{1}^{3}-\delta_{1}\delta_{2}^{3}},\label{eq.FD}
\end{gather}
and estimate $\sigma^2(x_0)$ by two times the sample variance of $\{y_1^{(k)}\}_{k=1}^{m}$, i.e.,
\begin{gather}
	\widehat{\sigma}^{2}(x_{0})=\frac{2\|\bm{y}_{1}-\bar{y}_{1}\vmathbb{1}_{m\times 1}\|_{2}^{2}}{m-1}\label{sigma_hat_FD_single}
\end{gather}
where $\bm{y}_{1}=(y_{1}^{(1)},\ldots,y_{1}^{(m)})'$, $\bar{y}_{1}=m^{-1}\sum_{k=1}^{m}y_{1}^{(k)}$, and $\vmathbb{1}_{m\times 1}$ is a column vector with all elements equal to one.
Algorithm~\ref{alg:pccfd_pilot} summarizes the above pilot estimation procedure.

\begin{algorithm}[!h]
\caption{Pilot Estimation Stage}\label{alg:pccfd_pilot}
\textbf{Input}: target point $x_0$, pilot budget $n_1$ (assumed even), pilot perturbations $\delta_1\neq\delta_2>0$.\\
Set $m\leftarrow n_1/2$.\\
\For{$k=1,\ldots,m$}{
    \For{$i\in\{1,2\}$}{
      Generate two independent outputs $(\hat f(x_0+\delta_i))^{(k)}$ and $(\hat f(x_0-\delta_i))^{(k)}$.\\
      Compute $y_i^{(k)}\leftarrow\big((\hat f(x_0+\delta_i))^{(k)}-(\hat f(x_0-\delta_i))^{(k)}\big)/2$.
    }
}
Compute
\[
\widehat B \leftarrow \frac{1}{m}\sum_{k=1}^{m}\frac{\delta_2\,y_1^{(k)}-\delta_1\,y_2^{(k)}}{\delta_2\delta_1^{3}-\delta_1\delta_2^{3}}.
\]\\
Compute
\[
\bar y_1\leftarrow \frac{1}{m}\sum_{k=1}^m y_1^{(k)}, \qquad
S_1^2 \leftarrow \frac{1}{m-1}\sum_{k=1}^m \bigl(y_1^{(k)}-\bar y_1\bigr)^2,
\]
and
\[
\widehat\sigma^2(x_0)\leftarrow 2S_1^2.
\]
\Return $(\widehat B,\widehat\sigma^2(x_0))$ as an estimate of $(B,\sigma^2(x_0))$.
\end{algorithm}

Intuitively, the pilot estimator of $B$ is based on a local Taylor expansion of the symmetric difference around $x_0$. More precisely,
\[
y_i^{(k)} 
= f^{(1)}(x_0)\delta_i + B\delta_i^3 + \left(R(\delta_i)\delta_i^5 + \frac{\epsilon^{(k)}(\delta_i)}{2}\right), \qquad i\in\{1,2\},\ k\in\{1,\ldots,m\},
\]
where $\epsilon^{(k)}(\delta_i)$ denotes the noise term from the $k$-th independent replication. Therefore, each pilot observation $y_i^{(k)}$ can be viewed as a noisy realization of a two-regressor model whose mean is approximately linear in $\delta_i$ and $\delta_i^3$, with coefficients $f^{(1)}(x_0)$ and $B$, respectively. Using two distinct perturbations \(\delta_1\neq \delta_2\) then allows us to eliminate the coefficient $f^{(1)}(x_0)$ and isolate $B$ up to higher-order truncation effects and simulation noise. The estimator in \eqref{eq.FD} comes exactly from the corresponding two-regressor, two-point regression that recovers the coefficient of \(\delta^3\), by viewing the response at each point as an average over the $m$ independent replications that reduces its variance at the standard $1/m$ rate. For $\sigma^2(x_0)$, note that $y_1^{(k)}$ is formed from two independent function evaluations at $x_0\pm\delta_1$. When $\delta_1$ is small, the output variances at $x_0\pm\delta_1$ satisfy $\text{Var}(\hat f(x_0\pm\delta_1))=\sigma^2(x_0)+o(1)$ under Assumption~\ref{asp:2}. This motivates estimating $\sigma^2(x_0)$ by 2 times the sample variance of $\{y_1^{(k)}\}_{k=1}^{m}$, as in \eqref{sigma_hat_FD_single}.

We next quantify the estimation error of $\widehat{B}$ in~\eqref{eq.FD}. Throughout, the analysis is conducted under Assumptions~\ref{asp:1} and \ref{asp:2} stated in Section~2.1 and the independent-sampling convention adopted there (i.e., $k=1$). The next proposition gives the leading bias and variance of $\widehat{B}$, and the subsequent corollary derives an MSE expansion of $\widehat{B}$ under the scaling $\delta_1,\delta_2=\Theta(n_1^\beta)$ for $\beta<0$ and identifies regimes of $\beta$ for which the MSE vanishes, implying the consistency of $\widehat{B}$.

\begin{proposition}\label{pro_B}
Under Assumptions~\ref{asp:1} and~\ref{asp:2}, consider $\widehat B$ in~\eqref{eq.FD} with $\delta_1\neq \delta_2$.

\noindent\textup{(i) Bias behavior:} As $\delta_1,\delta_2\to 0$, there exists a quantity $D(\delta_1,\delta_2)$ (induced by the Lagrange remainder term) such that
\begin{equation}\label{eq:EBhat}
\mathbb{E}[\widehat B]
= B + D(\delta_1,\delta_2)\,(\delta_1^2+\delta_2^2).
\end{equation}
If, in addition, the pilot perturbations satisfy $\delta_2=c\delta_1$ for some fixed $c\in(0,1)$, then $D(\delta_1,\delta_2)$ is uniformly bounded for sufficiently small $\delta_1$ and satisfies
\begin{equation}\label{eq:D_asymp}
D(\delta_1,\delta_2)=\frac{f^{(5)}(x_0)}{120}+o(1),\qquad \text{as }\delta_1\to 0.
\end{equation}

\noindent\textup{(ii) Variance behavior:} As $\delta_1,\delta_2\to 0$,
\begin{equation}\label{eq:VarBhat}
\text{Var}(\widehat{B})=\frac{(\delta_{1}^{2}+\delta_{2}^{2})\sigma^{2}(x_{0})/2+\Theta(\delta_{1}^{s}\delta_{2}^{2})+\Theta(\delta_{2}^{s}\delta_{1}^{2})}{(n_{1}/2)(\delta_{2}\delta_{1}^{3}-\delta_{1}\delta_{2}^{3})^{2}}.
\end{equation}
\end{proposition}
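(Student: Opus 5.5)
We substitute the pilot expansion of $y_i^{(k)}$ directly into \eqref{eq.FD} and read off the mean and variance. Write $\Delta\triangleq\delta_2\delta_1^3-\delta_1\delta_2^3=\delta_1\delta_2(\delta_1^2-\delta_2^2)\neq0$. Since $y_i^{(k)}=f^{(1)}(x_0)\delta_i+B\delta_i^3+R(\delta_i)\delta_i^5+\epsilon^{(k)}(\delta_i)/2$, the combination $\delta_2y_1^{(k)}-\delta_1y_2^{(k)}$ eliminates the $f^{(1)}(x_0)$ term exactly. The cubic terms give $B(\delta_2\delta_1^3-\delta_1\delta_2^3)=B\Delta$. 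Taking expectations (the noise has mean zero), we get
\[
\mathbb{E}[\widehat B]=B+\frac{R(\delta_1)\delta_2\delta_1^5-R(\delta_2)\delta_1\delta_2^5}{\Delta}
=B+\frac{R(\delta_1)\delta_1^4-R(\delta_2)\delta_2^4}{\delta_1^2-\delta_2^2}.
\]
We then simply define $D(\delta_1,\delta_2)$ as this remainder divided by $(\delta_1^2+\delta_2^2)$, which gives \eqref{eq:EBhat} trivially.

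For the refined claim with $\delta_2=c\delta_1$, the ratio becomes
\[
\frac{R(\delta_1)\delta_1^4-R(c\delta_1)c^4\delta_1^4}{(1-c^2)\delta_1^2}\cdot\frac{1}{(1+c^2)\delta_1^2}
=\frac{R(\delta_1)-c^4R(c\delta_1)}{1-c^4}.
\]
By continuity of $f^{(5)}$ near $x_0$ (Assumption~\ref{asp:1}), and since the intermediate points lie within $\delta_1$ of $x_0$, we have $R(\delta)\to 2f^{(5)}(x_0)/240=f^{(5)}(x_0)/120$. Hence $D\to f^{(5)}(x_0)/120\cdot(1-c^4)/(1-c^4)$, which yields \eqref{eq:D_asymp}. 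Uniform boundedness follows from local boundedness of $f^{(5)}$ and $1-c^4>0$. The only subtlety is that without the proportionality $\delta_2=c\delta_1$, the denominator $\delta_1^2-\delta_2^2$ could be small relative to the numerator. This is why boundedness is only claimed in that regime.

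For the variance, the summands are i.i.d. over $k$, and $y_1^{(k)},y_2^{(k)}$ are independent. Therefore
\[
\text{Var}(\widehat B)=\frac{\delta_2^2\,\text{Var}(\epsilon(\delta_1))/4+\delta_1^2\,\text{Var}(\epsilon(\delta_2))/4}{m\Delta^2}.
\]
Under independent sampling ($k=1$) and Assumption~\ref{asp:2}, $\eta^2(\delta_i)=\sigma^2(x_0+\delta_i)+\sigma^2(x_0-\delta_i)=2\sigma^2(x_0)+\Theta(\delta_i^s)$, so $\eta^2(\delta_i)/4=\sigma^2(x_0)/2+\Theta(\delta_i^s)$. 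Substituting and using $m=n_1/2$ gives numerator $(\delta_1^2+\delta_2^2)\sigma^2(x_0)/2+\Theta(\delta_1^s\delta_2^2)+\Theta(\delta_2^s\delta_1^2)$, which is exactly \eqref{eq:VarBhat}. The one point to check here is that the sum $\sigma^2(x_0+\delta)+\sigma^2(x_0-\delta)-2\sigma^2(x_0)$ is genuinely $\Theta(\delta^s)$ rather than cancelling. I would read it as the paper's $\eta^2$ convention in Section~\ref{sec:background}, i.e. at most $O(\delta^s)$ with the stated form, and invoke that directly.
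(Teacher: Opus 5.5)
Your proposal is correct and follows the paper's proof essentially line by line: you eliminate $f^{(1)}(x_0)$ via $\delta_2 y_1^{(k)}-\delta_1 y_2^{(k)}$, define $D$ as the Lagrange remainder divided by $\delta_1^2+\delta_2^2$, reduce it to $(R(\delta_1)-c^4R(c\delta_1))/(1-c^4)$ under $\delta_2=c\delta_1$, and obtain the variance from $\text{Var}(y_i^{(k)})=\sigma^2(x_0)/2+\Theta(\delta_i^s)$ together with independence. The cancellation you flag in $\sigma^2(x_0+\delta)+\sigma^2(x_0-\delta)-2\sigma^2(x_0)$ is a genuine caveat: for example, a locally linear $\sigma^2$ with $s=1$ makes this sum vanish, so only the upper bound survives. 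The paper does not address this either and simply asserts the $\Theta(\delta_i^s)$ form from Assumption~\ref{asp:2}.
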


\begin{corollary}\label{col_B}
Under Assumptions~\ref{asp:1} and~\ref{asp:2}, consider $\widehat{B}$ in~\eqref{eq.FD} with $\delta_{1}\neq\delta_{2}$ and $\delta_{1},\delta_{2}=\Theta(n_{1}^{\beta})$ for some $\beta<0$. Assume in addition that $\delta_2=c\delta_1$ for a fixed constant $c\in(0,1)$. Then, as $n_{1}\to+\infty$,
\[
\mathbb{E}\!\left[(\widehat{B}-B)^{2}\right]
=
\left(\frac{f^{(5)}(x_{0})}{120}\right)^{2}(\delta_{1}^{2}+\delta_{2}^{2})^{2}
+o(n_{1}^{4\beta})
+\frac{(\delta_{1}^{2}+\delta_{2}^{2})\sigma^{2}(x_{0})}
{n_{1}(\delta_{2}\delta_{1}^{3}-\delta_{1}\delta_{2}^{3})^{2}}
+o(n_{1}^{-1-6\beta}).
\]
In particular, for any $-1/6<\beta<0$,
\[
\lim_{n_{1}\to+\infty}\mathbb{E}\!\left[(\widehat{B}-B)^2\right]=0,
\]
and hence $\widehat{B}$ is consistent.
\end{corollary}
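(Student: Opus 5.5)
The plan is to use the bias–variance decomposition $\mathbb{E}[(\widehat B-B)^2]=(\mathbb{E}[\widehat B]-B)^2+\text{Var}(\widehat B)$. Each piece is then read off from Proposition~\ref{pro_B}, after substituting the scaling $\delta_2=c\delta_1$ with $\delta_1=\Theta(n_1^\beta)$.

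\emph{Squared bias.} By \eqref{eq:EBhat} the squared bias equals $D(\delta_1,\delta_2)^2(\delta_1^2+\delta_2^2)^2$. Since $\delta_2=c\delta_1$ with $c\in(0,1)$, part (i) of Proposition~\ref{pro_B} gives
\[
D=\frac{f^{(5)}(x_0)}{120}+o(1), \qquad D \text{ uniformly bounded.}
\]
Hence
\[
D^2=\left(\frac{f^{(5)}(x_0)}{120}\right)^2+o(1),
\]
where the cross term is also $o(1)$ because $D$ is bounded. Moreover $(\delta_1^2+\delta_2^2)^2=(1+c^2)^2\delta_1^4=\Theta(n_1^{4\beta})$. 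So the squared bias is
\[
\left(\frac{f^{(5)}(x_0)}{120}\right)^2(\delta_1^2+\delta_2^2)^2+o(n_1^{4\beta}).
\]

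\emph{Variance.} Use \eqref{eq:VarBhat} and note that $(n_1/2)^{-1}\cdot\tfrac12=1/n_1$, which accounts for the leading term in the stated form. The denominator is
\[
(\delta_2\delta_1^3-\delta_1\delta_2^3)^2=c^2(1-c^2)^2\delta_1^8=\Theta(n_1^{8\beta}),
\]
which is nonzero because $c\in(0,1)$. The leading numerator term is $(1+c^2)\delta_1^2\sigma^2(x_0)/2=\Theta(n_1^{2\beta})$, so the leading variance term is $\Theta(n_1^{-1-6\beta})$. The remaining numerator terms $\Theta(\delta_1^s\delta_2^2)+\Theta(\delta_2^s\delta_1^2)$ are $\Theta(\delta_1^{2+s})=o(\delta_1^2)$ since $s>0$ and $\delta_1\to0$. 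After dividing they contribute $o(n_1^{-1-6\beta})$. This establishes the stated expansion.

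\emph{Consistency.} The two leading terms are $\Theta(n_1^{4\beta})$ and $\Theta(n_1^{-1-6\beta})$. The first vanishes because $\beta<0$, and the second vanishes exactly when $-1-6\beta<0$, i.e.\ $\beta>-1/6$. Thus for $-1/6<\beta<0$ the MSE tends to $0$, and $\widehat B\to B$ in $L^2$, hence in probability.

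The only delicate step is making sure the $o(1)$ in $D$ and the $\Theta(\delta^s)$ remainders turn into the right little-$o$ orders. This is handled by the uniform boundedness of $D$ and by the fixed ratio $c$, which keeps the denominator from degenerating. Beyond that, the argument is direct substitution.
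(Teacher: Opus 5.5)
Your proposal is correct and follows the paper's proof essentially step for step. It uses the same bias--variance decomposition, reads off the squared bias from Proposition~\ref{pro_B}(i) with $D^2=(f^{(5)}(x_0)/120)^2+o(1)$, takes the variance from Proposition~\ref{pro_B}(ii) with the $\Theta(\delta^s)$ remainders absorbed into $o(n_1^{-1-6\beta})$, and gets consistency from $-1/6<\beta<0$. One cosmetic point: $\Theta(\delta_1^s\delta_2^2)+\Theta(\delta_2^s\delta_1^2)$ is only guaranteed to be $O(\delta_1^{2+s})$ rather than $\Theta$, since the two pieces could cancel, but $O$ is all the argument needs.
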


The two-point regression estimator in~\eqref{eq.FD} is the simplest instance of a more general regression-based approach for estimating $B$. In general, one may collect $n_1$ independent pilot outputs at (possibly repeated) perturbation sizes $\{\delta_j\}_{j=1}^{n_1}$:
\[
y_j \triangleq \frac{\hat f(x_0+\delta_j)-\hat f(x_0-\delta_j)}{2},\qquad j=1,\ldots,n_1.
\]
Under Assumption~\ref{asp:1}, these outputs satisfy the approximate two-regressor model
\[
y_j = f^{(1)}(x_0)\delta_j + B\delta_j^3 + \Big(R(\delta_j)\delta_j^5+\frac{\epsilon(\delta_j)}{2}\Big),
\]
so that estimating $(f^{(1)}(x_0),B)$ reduces to a least-squares problem.
Let $\bm y=(y_1,\ldots,y_{n_1})'$, 
$\bm X=\big[(\delta_1,\ldots,\delta_{n_1})',\,(\delta_1^3,\ldots,\delta_{n_1}^3)'\big]$,
and $\bm\beta=(f^{(1)}(x_0),B)'$. 
If \(\bm X'\bm X\) is nonsingular, the ordinary least-squares estimator is
$\widehat{\bm\beta}=(\bm X'\bm X)^{-1}\bm X'\bm y$, whose second component yields
\[
\widehat{B}
=
\frac{\Big(\sum_{j=1}^{n_{1}}\delta_{j}^{2}\Big)\Big(\sum_{j=1}^{n_{1}}\delta_{j}^{3}y_j\Big)
-\Big(\sum_{j=1}^{n_{1}}\delta_{j}^{4}\Big)\Big(\sum_{j=1}^{n_{1}}\delta_{j}y_j\Big)}
{\Big(\sum_{j=1}^{n_{1}}\delta_{j}^{2}\Big)\Big(\sum_{j=1}^{n_{1}}\delta_{j}^{6}\Big)-\Big(\sum_{j=1}^{n_{1}}\delta_{j}^{4}\Big)^{2}}.
\]
The estimator in~\eqref{eq.FD} can be viewed as a special case of this regression approach, obtained by using two distinct perturbation sizes and averaging over replications at each perturbation. In practice, multi-point regression designs may improve the finite-sample accuracy of $\widehat B$ and reduce the leading constants in its MSE, for instance by choosing a more favorable spread of perturbation sizes and leveraging averaging.

Nonetheless, importantly, even with these more general regression designs, under Assumptions~\ref{asp:1} and \ref{asp:2}, the convergence rate of $\mathbb{E}[(\widehat B-B)^2]$ with respect to the pilot budget $n_1$ remains governed by the same bias-variance orders as in Corollary~\ref{col_B}. Specifically, when the pilot perturbations are chosen on a common scale $\delta_j=\Theta(n_1^\beta)$ with $\beta<0$, and the regression design is non-degenerate, the truncation bias is of order $\delta^2$, yielding a squared-bias order \(\delta^4\), while the variance contribution scales as $1/(n_1\delta^6)$ up to constants. Consequently, richer regression designs generally do not change the convergence-rate structure
\[
\mathbb{E}[(\widehat B-B)^2]=\Theta(n_1^{4\beta})+\Theta(n_1^{-1-6\beta}),
\]
although they may improve the leading constants and finite-sample stability. For clarity, we therefore adopt the two-perturbation pilot design, while noting that multi-point regressions provide a viable practical alternative when one aims to improve finite-sample performance.

We now turn to the pilot estimator of $\sigma^2(x_0)$. Compared with the estimation of $B$, this estimator is simpler because it only relies on the sample variance of the pilot outputs at perturbation size $\delta_1$. Since $y_1^{(k)}$ is formed from two independent function evaluations at $x_0\pm\delta_1$, Assumption~\ref{asp:2} and the independent-sampling convention ($k=1$) imply \[ \text{Var}(y_1^{(k)}) = \frac{\text{Var}(\hat f(x_0+\delta_1))+\text{Var}(\hat f(x_0-\delta_1))}{4} = \frac{\sigma^2(x_0)}{2}+\Theta(\delta_1^s). \] Therefore, as long as $\delta_1\to0$, two times the sample variance of $\{y_1^{(k)}\}_{k=1}^{m}$ consistently estimates $\sigma^2(x_0)$ in probability under the mild moment condition stated below. This requirement is weaker than the scaling condition used in Corollary~\ref{col_B}; in particular, it is not limited to the case $\delta_1,\delta_2=\Theta(n_1^\beta)$ with $-1/6<\beta<0$.

\begin{proposition}\label{pro_sigma}
Under Assumption~\ref{asp:2}, consider $\widehat\sigma^2(x_0)$ in~\eqref{sigma_hat_FD_single} with $\delta_{1}\to0$ as $n_{1}\to\infty$. Suppose that the pilot outputs have uniformly bounded fourth moments in a neighborhood of $x_0$, i.e., there exists $\delta_0>0$ such that
\[
\sup_{0<\delta_1\le \delta_0}
\mathbb{E}\left[|y_1^{(k)}|^4\right]<\infty .
\]
Then \[ \widehat{\sigma}^{2}(x_{0})\xrightarrow{p}\sigma^{2}(x_{0}), \qquad \text{as } n_1\to\infty. \] 
\end{proposition}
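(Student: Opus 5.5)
The plan is to show that the sample variance $S_1^2$ is asymptotically unbiased and has vanishing variance, and then apply Chebyshev's inequality. The complication is that the distribution of $y_1^{(k)}$ changes with $n_1$ through $\delta_1$. We are therefore dealing with a triangular array: for each $n_1$, the variables $y_1^{(1)},\ldots,y_1^{(m)}$ are i.i.d., but their law depends on $n_1$. The ordinary weak law of large numbers does not apply directly, so we will work with explicit moment bounds.

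\textbf{Step 1: the mean.} The sample variance with divisor $m-1$ is unbiased, so
\[
\mathbb{E}[\widehat\sigma^2(x_0)] = 2\,\text{Var}(y_1^{(k)}) = \tfrac12\bigl(\sigma^2(x_0+\delta_1)+\sigma^2(x_0-\delta_1)\bigr).
\]
This uses the independence of the two evaluations at $x_0\pm\delta_1$. By Assumption~\ref{asp:2}, the right-hand side equals $\sigma^2(x_0)+\Theta(\delta_1^s)$. Since $\delta_1\to0$, the bias $\mathbb{E}[\widehat\sigma^2(x_0)]-\sigma^2(x_0)$ tends to zero.

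\textbf{Step 2: the variance.} Let $\mu_{n_1}=\mathbb{E}[y_1^{(k)}]$ and $z_k=y_1^{(k)}-\mu_{n_1}$, so the $z_k$ are i.i.d. with mean zero. The sample variance is invariant under shifts, so we can rewrite
\[
S_1^2=\frac{1}{m-1}\Bigl(\sum_{k=1}^m z_k^2 - m\bar z^2\Bigr).
\]
The standard formula then gives
\[
\text{Var}(S_1^2)=\frac{\mu_4}{m}-\frac{(m-3)\,v^2}{m(m-1)},
\]
where $v=\mathbb{E}[z_k^2]$ and $\mu_4=\mathbb{E}[z_k^4]$. In particular, $\text{Var}(S_1^2)\le \mu_4/m$. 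Alternatively, one can bound $\text{Var}\bigl(\tfrac1{m-1}\sum z_k^2\bigr)\le \tfrac{m\,\mathbb{E}[z_k^4]}{(m-1)^2}$ and $\mathbb{E}[m^2\bar z^4/(m-1)^2]=O(\mu_4/m)$ separately, using Minkowski's inequality.

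We need $\mu_4$ bounded uniformly in $n_1$. The elementary inequality $(a-b)^4\le 8(a^4+b^4)$ gives $\mu_4\le 8\bigl(\mathbb{E}|y_1^{(k)}|^4+\mu_{n_1}^4\bigr)$, and Jensen's inequality gives $\mu_{n_1}^4\le \mathbb{E}|y_1^{(k)}|^4$. Once $\delta_1\le\delta_0$, which holds for all large $n_1$, the assumed fourth-moment bound makes $\mu_4$ uniformly bounded. Hence $\text{Var}(\widehat\sigma^2(x_0))=4\,\text{Var}(S_1^2)=O(1/m)=O(1/n_1)\to0$.

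\textbf{Step 3: conclusion.} Combining the two steps,
\[
\mathbb{E}\bigl[(\widehat\sigma^2(x_0)-\sigma^2(x_0))^2\bigr]=\text{bias}^2+\text{Var}\to0,
\]
so $\widehat\sigma^2(x_0)\to\sigma^2(x_0)$ in $L^2$. Chebyshev's (Markov's) inequality then yields convergence in probability.

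The main obstacle is Step 2: obtaining the variance bound uniformly over the triangular array. That uniformity is exactly what the supremum fourth-moment hypothesis supplies, and Step 2 reduces the sample-variance fourth-moment computation to that hypothesis. Everything else is routine.
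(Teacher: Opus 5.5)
Your proposal is correct and follows essentially the same route as the paper. Both arguments use unbiasedness of the $(m-1)$-divisor sample variance, a uniform $O(1/m)$ bound on $\text{Var}(S_1^2)$ from the supremum fourth-moment hypothesis, Chebyshev's inequality, and Assumption~\ref{asp:2} to send $2\,\text{Var}(y_1^{(k)})$ to $\sigma^2(x_0)$. The differences are cosmetic: you package bias and variance as $L^2$ convergence, and you spell out the centered fourth-moment bound via $(a-b)^4\le 8(a^4+b^4)$ and Jensen, which the paper leaves implicit (your inequality $\text{Var}(S_1^2)\le \mu_4/m$ needs $m\ge 3$, which is harmless asymptotically).
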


The uniformly bounded fourth-moment condition in Proposition~\ref{pro_sigma} is a mild technical condition used to ensure the weak consistency of the sample variance. It is satisfied, for example, if the simulation output $\hat f(x)$ has uniformly bounded fourth moments in a neighborhood of $x_0$, i.e., if there exists $\delta_0>0$ such that \[ \sup_{|x-x_0|\le \delta_0}\mathbb E\big[|\hat f(x)|^4\big]<\infty. \] Indeed, since $y_1^{(k)}$ is formed from the difference of two simulation outputs at $x_0\pm\delta_1$, this condition implies a uniformly bounded fourth moment for $y_1^{(k)}$ when $\delta_1$ is sufficiently small. Such a moment condition is standard for establishing the weak consistency of sample variances and mainly rules out heavy-tailed outputs with unstable fourth moments near $x_0$.

The estimator in~\eqref{sigma_hat_FD_single} is a simple pilot estimator of the local simulation variance $\sigma^2(x_0)$. Other variance estimators are also possible, such as using outputs from another perturbation size, pooling sample variances across multiple perturbation sizes, or fitting a local variance regression from pilot samples. These alternatives may improve leading constants and finite-sample stability. However, they share the same standard rate structure with respect to the pilot budget and the local perturbation size. Specifically, under standard moment conditions and Assumption~\ref{asp:2}, the stochastic error of a variance estimator based on $n_1$ independent pilot samples is of order $O_p(n_1^{-1/2})$, while the local approximation bias is of order $O(\delta^s)$. Thus, the estimator in~\eqref{sigma_hat_FD_single} already achieves the standard convergence rate for variance estimation and is consistent whenever $n_1\to\infty$ and $\delta_1\to0$. Therefore, while richer variance estimators may improve leading constants and finite-sample stability, they are not needed for the asymptotic calibration goal of PC-CFD; the simple estimator in~\eqref{sigma_hat_FD_single} is sufficient for constructing the plug-in perturbation size in~\eqref{optimal perturbation}.

When CRNs are used, the two simulation outputs forming each finite-difference pair are generated using the same underlying random numbers, while different pairs remain independent. In this case, the relevant variance quantity is the effective variance of the paired difference rather than the marginal variance of an individual simulation output. Specifically, if this effective variance is locally represented by \(k\sigma^2(x_0)\), then two times the sample variance of the pilot half-differences consistently estimates \(k\sigma^2(x_0)\) directly. Therefore, the multiplier \(k\) and the marginal variance \(\sigma^2(x_0)\) need not be estimated separately, because the oracle perturbation rule depends only on their product. Accordingly, the same pilot-calibration procedure applies under CRN after interpreting \(\widehat{\sigma}^2(x_0)\) as an estimator of the effective paired-difference variance, provided this variance remains positive and of constant order as the perturbation size tends to zero.

\subsection{Calibrated Deployment of Finite-Difference Estimator}
Given the total simulation budget of $n$ single-run CFD replications, the pilot estimation stage uses $n_1$ replications to estimate the model quantities $B$ and $\sigma^2(x_0)$. The remaining budget $n_2=n-n_1$ is then allocated to the calibrated deployment stage for estimating the target gradient $f^{(1)}(x_0)$. In this stage, we plug the pilot estimates $\widehat B$ and $\widehat{\sigma}^2(x_0)$ into the oracle perturbation formula~\eqref{optimal perturbation} and obtain a calibrated perturbation size. This perturbation is then used in a standard CFD estimator based on the $n_2$ new replications that are independent of the pilot samples. The calibrated deployment stage is summarized in Algorithm~\ref{alg:pccfd_deploy}.

\begin{algorithm}[!h]
\caption{Calibrated Deployment Stage}\label{alg:pccfd_deploy}
\textbf{Input}: target point $x_0$, deployment budget $n_2$, pilot estimates $\widehat B$ and $\widehat{\sigma}^2(x_0)$.\\
Set
\[
\widehat{\delta}
\leftarrow
\left(\frac{\widehat{\sigma}^{2}(x_{0})}{4n_{2}\widehat{B}^{2}}\right)^{1/6}.
\]
\For{$k=1,\ldots,n_2$}{
    Generate two independent outputs $(\hat f(x_0+\widehat{\delta}))^{(k)}$ and $(\hat f(x_0-\widehat{\delta}))^{(k)}$.\\
    Compute
    \[
    Y^{(k)}(\widehat{\delta})
    \leftarrow
    \frac{(\hat f(x_0+\widehat{\delta}))^{(k)}-(\hat f(x_0-\widehat{\delta}))^{(k)}}{2\widehat{\delta}}.
    \]
}
Compute
\[
\widehat{\theta}_{\mathrm{PC}}
\leftarrow
\frac{1}{n_2}\sum_{k=1}^{n_2}Y^{(k)}(\widehat{\delta}).
\]
\Return $\widehat{\theta}_{\mathrm{PC}}$ as an estimate of $f^{(1)}(x_0)$.
\end{algorithm}

Challenging the prevailing belief that attaining oracle-optimal FD performance is difficult without prior knowledge of problem-specific quantities, the following theorem shows that PC-CFD achieves the oracle-optimal first-order MSE in~\eqref{optimal MSE}, even though its pilot estimation stage uses only an asymptotically negligible fraction of the total simulation budget.
\begin{theorem}\label{thm:pc_oracle}
Under Assumptions~\ref{asp:1} and~\ref{asp:2}, consider the PC-CFD estimator $\widehat{\theta}_{\mathrm{PC}}$ generated by Algorithms~\ref{alg:pccfd_main}-\ref{alg:pccfd_deploy}. Suppose that the pilot perturbations satisfy
\(
\delta_2=c\delta_1 \ \text{with}\ c\in(0,1)
\)
and
\(
\delta_1=\Theta(n_1^\beta)\ \text{with}\ -1/6<\beta<0.
\)
Suppose further that the moment condition in Proposition~\ref{pro_sigma} holds and that
\[
n_1=\Theta(n^\gamma),\qquad 0<\gamma<1.
\]
Then
\[
\lim_{n\to\infty}
n^{2/3}
\mathbb E\!\left[
\left(\widehat{\theta}_{\mathrm{PC}}-f^{(1)}(x_0)\right)^2
\right]
=
\mathcal R_{\mathrm{opt}}.
\]
Therefore, PC-CFD is oracle-optimal.
\end{theorem}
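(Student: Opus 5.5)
The plan is to condition on the pilot stage and exploit the independence between the pilot samples and the deployment samples. Write the calibrated perturbation as $\widehat\delta=\widehat\alpha\, n_2^{-1/6}$ with
\[
\widehat\alpha=\bigl(\widehat\sigma^2(x_0)/(4\widehat B^2)\bigr)^{1/6},
\qquad
\alpha^*=\bigl(\sigma^2(x_0)/(4B^2)\bigr)^{1/6}.
\]
Conditional on the pilot data $\mathcal F_1$, $\widehat\delta$ is fixed. The computation leading to \eqref{basic MSE} then applies verbatim with $n$ replaced by $n_2$:
\[
\mathbb E\bigl[(\widehat\theta_{\mathrm{PC}}-f^{(1)}(x_0))^2\,\big|\,\mathcal F_1\bigr]
=\bigl[B+R(\widehat\delta)\widehat\delta^2\bigr]^2\widehat\delta^4+\frac{\eta^2(\widehat\delta)}{4n_2\widehat\delta^2}.
\]
Multiplying by $n^{2/3}$ and using $n_2/n\to1$ (since $n_1=\Theta(n^\gamma)$ with $\gamma<1$), this equals $h(\widehat\alpha)+\rho_n$, where $h(\alpha)=B^2\alpha^4+\sigma^2(x_0)/(2\alpha^2)$. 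The remainder $\rho_n$ collects the $R(\widehat\delta)\widehat\delta^2$, $\Theta(\widehat\delta^s)$ and $(n/n_2)^{2/3}-1$ corrections. Since $h(\alpha^*)=\mathcal R_{\mathrm{opt}}$, the theorem reduces to showing $\mathbb E[h(\widehat\alpha)+\rho_n]\to h(\alpha^*)$.

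First I obtain $\widehat\alpha\xrightarrow{p}\alpha^*$. The range $-1/6<\beta<0$ together with Corollary~\ref{col_B} gives $\widehat B\to B$ in $L^2$, hence in probability. Proposition~\ref{pro_sigma} gives $\widehat\sigma^2(x_0)\xrightarrow{p}\sigma^2(x_0)$, because $\delta_1\to0$. Since $B\neq0$ by Assumption~\ref{asp:1} and $\sigma^2(x_0)>0$, the map $(b,v)\mapsto(v/4b^2)^{1/6}$ is continuous at $(B,\sigma^2(x_0))$. The continuous mapping theorem then yields $\widehat\alpha\xrightarrow{p}\alpha^*$. As a consequence, $\widehat\delta\to0$ in probability, so on the event $A_n=\{|\widehat\alpha-\alpha^*|\le\varepsilon\}$ (with $\varepsilon<\alpha^*/2$) the perturbation $\widehat\delta$ eventually lies inside the neighborhood where the Taylor expansion \eqref{basic expansion} and Assumption~\ref{asp:2} are valid. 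On that event $R$ is uniformly bounded and $\eta^2(\widehat\delta)=2\sigma^2(x_0)+O(\widehat\delta^s)$, so $\rho_n\mathbb 1_{A_n}\to0$ uniformly. Because $h$ is continuous and bounded on $[\alpha^*-\varepsilon,\alpha^*+\varepsilon]$, bounded convergence gives $\mathbb E[(h(\widehat\alpha)+\rho_n)\mathbb 1_{A_n}]\to h(\alpha^*)$.

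The main obstacle is the complement $A_n^c$. Convergence in probability alone does not control expectations. The dangerous tails are $\widehat B\approx0$, which makes $\widehat\alpha$ huge so that $\widehat\delta$ leaves the local region and the bias term is uncontrolled, and $\widehat\sigma^2\approx0$, which makes the variance term $\propto\widehat\alpha^{-2}=(4\widehat B^2/\widehat\sigma^2)^{1/3}$ explode. I would handle this with a uniform-integrability argument. Since $n_2^{-1}\widehat\delta^{-2}=n_2^{-2/3}\widehat\alpha^{-2}$, the variance contribution is bounded by $C\,\mathbb E[\widehat\alpha^{-2}\mathbb 1_{A_n^c}]$. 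The bias contribution is bounded by $C n^{2/3}\,\mathbb E[\min(\widehat\delta^4,M)\mathbb 1_{A_n^c}]$, assuming $f$ and $f^{(1)}$ are bounded on the region where $\widehat\delta$ may land; otherwise the deployment perturbation is truncated to $[a n_2^{-1/6},b n_2^{-1/6}]$ for fixed small $a$ and large $b$, which does not affect the limit since $\alpha^*\in(a,b)$ eventually.

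I would first try to make the tail probability polynomially small. The estimators $\widehat B$ and $\widehat\sigma^2$ are sample means of $m=n_1/2$ i.i.d. terms, so the fourth-moment condition gives $P(|\widehat B-B|>\epsilon)=O((n_1^{1+6\beta})^{-2})$ via a fourth-moment Markov bound, and a similar bound for $\widehat\sigma^2$. I would then combine this through Hölder with $\mathbb E[\widehat\alpha^{-2p}]<\infty$ uniformly in $n$ for some $p>1$, which requires a small-ball bound on $\widehat\sigma^2$ near $0$. This yields $\mathbb E[h(\widehat\alpha)\mathbb 1_{A_n^c}]\to0$, and hence $n^{2/3}\,\mathrm{MSE}\to\mathcal R_{\mathrm{opt}}$. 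If such moment control is unavailable without extra assumptions, the truncation of $\widehat\alpha$ to $[a,b]$ makes $h(\widehat\alpha)$ bounded, and bounded convergence then closes the argument immediately.
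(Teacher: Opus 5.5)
Your proposal follows the paper's proof essentially step for step. You condition on the pilot data so that the deployment stage is a standard CFD with fixed $\widehat\delta=\widehat\alpha\, n_2^{-1/6}$, obtain $\widehat\alpha\xrightarrow{p}\alpha^*$ from Corollary~\ref{col_B}, Proposition~\ref{pro_sigma} and the continuous mapping theorem, and then pass from convergence in probability to convergence of the expected scaled MSE through tail control, ultimately enforced by an asymptotically inactive truncation of $\widehat\alpha$ to a bracket containing $\alpha^*$. Your split on $A_n=\{|\widehat\alpha-\alpha^*|\le\varepsilon\}$ simply makes precise what the paper calls its ``standard tail-control argument,'' and you are right that this truncation, or extra moment and small-ball assumptions, is genuinely needed (for example, with Gaussian noise $\widehat B$ has positive density at $0$, so $\mathbb E[h(\widehat\alpha)]=\infty$ without truncation).
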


Theorem~\ref{thm:pc_oracle} shows that PC-CFD attains the same first-order MSE constant as the oracle CFD estimator, even though the oracle perturbation size depends on the unknown quantities \(B\) and \(\sigma^2(x_0)\). The pilot estimation stage is used only to learn these quantities sufficiently well for calibration. Indeed, the condition \(n_1=\Theta(n^\gamma)\) with \(0<\gamma<1\) implies \(n_1/n\to0\), so the pilot estimation stage consumes an asymptotically negligible fraction of the total simulation budget. Consequently, almost all replications are still available for the final CFD estimation.

The proof of Theorem~\ref{thm:pc_oracle} also explains why the calibration error propagated from the pilot estimation stage has an asymptotically negligible effect on the final estimator. The pilot estimates affect the calibrated deployment stage only through the plug-in perturbation constant
\[
\widehat{\alpha}
=
\left(\frac{\widehat{\sigma}^{2}(x_0)}{4\widehat B^2}\right)^{1/6}.
\]
Once \(\widehat B\xrightarrow{p} B\) and \(\widehat{\sigma}^2(x_0)\xrightarrow{p}\sigma^2(x_0)\), we have \[\widehat{\alpha}\xrightarrow{p}\alpha^*=\left(\frac{\sigma^2(x_0)}{4B^2}\right)^{1/6}.\] The first-order MSE constant of a CFD estimator with perturbation size $\delta=\alpha n^{-1/6}$ is minimized at \(\alpha^*\). Hence, its first-order derivative vanishes at the oracle choice, and small deviations of \(\widehat{\alpha}\) from \(\alpha^*\) affect the first-order MSE only through higher-order terms. This explains why a small pilot budget is sufficient: the pilot estimation stage does not need to estimate \(B\) and \(\sigma^2(x_0)\) at a fast rate. It only needs to estimate them consistently enough so that the deployment-stage perturbation constant is calibrated near the oracle choice.

This observation is central to the proposed methodology. It challenges the view that oracle-optimal FD performance is difficult to attain in practice because the optimal perturbation size depends on unknown, problem-specific quantities. PC-CFD shows that these quantities can be estimated sufficiently well using only a vanishing fraction of the simulation budget, and that the resulting calibration error does not alter the oracle first-order MSE. Thus, the procedure combines the adaptivity of data-driven calibration with the statistical efficiency of the oracle CFD rule.

We next refine the oracle-optimality result by characterizing the order of the remaining error beyond the leading term. To avoid ambiguity, we refer to the \emph{second-order remainder} as the difference between the MSE of PC-CFD and its oracle first-order approximation:
\[
\mathrm{Rem}_{\mathrm{PC}}(n)
\triangleq
\mathbb E\!\left[
\left(\widehat{\theta}_{\mathrm{PC}}-f^{(1)}(x_0)\right)^2
\right]
-
\mathcal R_{\mathrm{opt}}\,n^{-2/3}.
\]
Thus, Theorem~\ref{thm:pc_oracle} states that \(\mathrm{Rem}_{\mathrm{PC}}(n)=o(n^{-2/3})\). 
The following theorem provides a more explicit rate for this remainder, which helps explain how the pilot budget \(n_1\) and pilot perturbation size should be chosen in practice.

\begin{theorem}\label{thm:pc_second_order}
Under the same conditions of Theorem~\ref{thm:pc_oracle}, the second-order remainder of PC-CFD satisfies
\[
\mathrm{Rem}_{\mathrm{PC}}(n)
=
O\!\left(
n^{-2/3-\kappa(\beta,\gamma)}
\right),
\]
where
\[
\kappa(\beta,\gamma)
=
\min\left\{
1-\gamma,\,
-2\gamma s\beta,\,
-4\gamma\beta,\,
\frac{1}{3},\,
\frac{s}{6},\,
\gamma(6\beta+1)
\right\}.
\]
Equivalently,
\[
\mathbb E\!\left[
\left(\widehat{\theta}_{\mathrm{PC}}-f^{(1)}(x_0)\right)^2
\right]
=
\mathcal R_{\mathrm{opt}}\,n^{-2/3}
\left(1+O\!\left(n^{-\kappa(\beta,\gamma)}\right)\right).
\]
\end{theorem}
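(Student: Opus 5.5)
Our plan is to condition on the pilot stage and write the MSE as an expectation over $\widehat\alpha$ of a deterministic function. Let $\widehat\alpha=(\widehat\sigma^2(x_0)/(4\widehat B^2))^{1/6}$, so that $\widehat\delta=\widehat\alpha n_2^{-1/6}$. Because the deployment samples are independent of the pilot samples, \eqref{basic expansion} gives
\[
\mathbb E\big[(\widehat\theta_{\mathrm{PC}}-f^{(1)}(x_0))^2\mid \widehat\alpha\big]
=\big[B+R(\widehat\delta)\widehat\delta^{2}\big]^2\widehat\delta^{4}+\frac{\eta^2(\widehat\delta)}{4n_2\widehat\delta^{2}}.
\]
On an event where $\widehat\alpha$ lies in a compact interval $[\alpha^*/2,2\alpha^*]$, we will expand this as
\[
n_2^{-2/3}\Big(g(\widehat\alpha)+O\big(n_2^{-1/3}\big)+O\big(n_2^{-s/6}\big)\Big),\qquad g(\alpha)=B^2\alpha^4+\frac{\sigma^2(x_0)}{2\alpha^2}.
\]
The $O(n_2^{-1/3})$ term comes from the cross term $2BR\widehat\delta^6$, using local boundedness of $R$. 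The $O(n_2^{-s/6})$ term comes from $\eta^2(\delta)=2\sigma^2(x_0)+\Theta(\delta^s)$. Next, replacing $n_2^{-2/3}$ by $n^{-2/3}$ costs a relative factor $(1-n_1/n)^{-2/3}=1+O(n^{\gamma-1})$. Together these three pieces give the contributions $1/3$, $s/6$ and $1-\gamma$ in $\kappa$.

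The second step is to exploit the first-order optimality of $\alpha^*$. Since $g'(\alpha^*)=0$, a Taylor expansion gives $g(\widehat\alpha)-\mathcal R_{\mathrm{opt}}=O\big((\widehat\alpha-\alpha^*)^2\big)$ on the good event. The map $(\widehat B,\widehat\sigma^2)\mapsto\widehat\alpha$ is smooth near $(B,\sigma^2(x_0))$ because $B\ne0$ and $\sigma^2(x_0)>0$. It therefore suffices to show
\[
\mathbb E\big[(\widehat B-B)^2\big]+\mathbb E\big[(\widehat\sigma^2-\sigma^2(x_0))^2\big]=O\big(n^{-\kappa}\big).
\]
By Corollary \ref{col_B}, with $\delta_1=\Theta(n_1^\beta)$ and $n_1=\Theta(n^\gamma)$, the first term is $O(n^{4\gamma\beta})+O(n^{-\gamma(1+6\beta)})$, which accounts for $-4\gamma\beta$ and $\gamma(6\beta+1)$. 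For the sample variance, the bias is $\Theta(\delta_1^s)$, so its square is $O(n^{2\gamma s\beta})$; the variance is $O(1/n_1)=O(n^{-\gamma})$, which is dominated by $\gamma(6\beta+1)$. This accounts for $-2\gamma s\beta$.

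The main obstacle is the bad event, where $\widehat B$ is near $0$ (so $\widehat\delta$ blows up) or $\widehat\sigma^2$ is near $0$ (so the variance term $\propto\widehat\delta^{-2}$ blows up). Here convergence in probability from Proposition \ref{pro_sigma} is not enough. We must show that the expected conditional MSE restricted to the complement of the good event is $o(n^{-2/3-\kappa})$. We would first try exponential or high-moment tail bounds on $\widehat B-B$ and $\widehat\sigma^2-\sigma^2$, assuming enough moments so that $P(\text{bad})$ decays faster than any polynomial in $n_1$. We would then pair these with crude bounds on the conditional MSE: when $\widehat\delta$ is large, use boundedness of $f$ on its domain, or truncate $\widehat\alpha$ to $[\underline a,\overline a]$, which asymptotically does nothing. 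When $\widehat\delta$ is small, we need $\mathbb E[\widehat\alpha^{-2}\mathbf 1_{\text{bad}}]$ controlled, for which we would use a negative-moment bound on the sample variance $\widehat\sigma^2$ and Hölder's inequality. If the paper's conditions do not directly give this, the cleanest route is to assume (or impose algorithmically) truncation of $\widehat\alpha$ to a compact interval containing $\alpha^*$. With that, the bad-event contribution is $O(n^{-2/3}P(\text{bad}))$, and Chebyshev's inequality combined with the moment bounds above keeps it within $O(n^{-2/3-\kappa})$. Collecting the terms then yields the stated remainder.
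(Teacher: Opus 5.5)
Your proposal is correct and follows essentially the same route as the paper: condition on the pilot stage and expand the conditional MSE as $n_2^{-2/3}\bigl(g(\widehat\alpha)+O(n_2^{-1/3})+O(n_2^{-s/6})\bigr)$; use $g'(\alpha^*)=0$ to reduce to the pilot mean-squared errors of $\widehat B$ and $\widehat\sigma^2(x_0)$, which give the $-4\gamma\beta$, $\gamma(6\beta+1)$ and $-2\gamma s\beta$ terms; and pay $1+O(n^{\gamma-1})$ for replacing $n_2$ by $n$. Your handling of the bad event, an asymptotically inactive truncation of $\widehat\alpha$ combined with Chebyshev, is exactly the ``tail-control convention'' the paper invokes, only spelled out more explicitly.
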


Theorem~\ref{thm:pc_second_order} complements Theorem~\ref{thm:pc_oracle} by quantifying the rate of the remaining error beyond the oracle first-order term. While Theorem~\ref{thm:pc_oracle} establishes that the remainder is \(o(n^{-2/3})\), Theorem~\ref{thm:pc_second_order} decomposes its order into interpretable components. These components correspond to distinct sources of inefficiency: the loss of deployment budget due to pilot allocation, the local approximation errors in the pilot estimates of \(B\) and \(\sigma^2(x_0)\), the stochastic error in pilot calibration, and the higher-order terms in the deployment-stage CFD expansion. Hence, the second-order result does not change the oracle-optimality conclusion, but it explains how finite-sample performance depends on the choices of \(n_1\) and the pilot perturbation rate.

This decomposition also provides practical guidance. A larger pilot budget improves calibration accuracy but leaves fewer replications for deployment; a smaller pilot budget preserves deployment samples but may increase calibration error. Similarly, pilot perturbations that shrink too slowly induce local approximation bias, whereas perturbations that shrink too quickly amplify the stochastic error in estimating \(B\). The second-order remainder captures these tradeoffs explicitly through \(\kappa(\beta,\gamma)\), and therefore offers a principled way to tune the pilot estimation stage within the broad first-order valid regime \(0<\gamma<1\) and \(-1/6<\beta<0\).

From a technical perspective, Theorems \ref{thm:pc_oracle} and \ref{thm:pc_second_order} separate the analysis into two layers. The first layer shows that consistency of the pilot calibration is enough for oracle first-order efficiency. The second layer tracks how pilot estimation errors, budget split, and deployment-stage approximation errors propagate into the second-order remainder. They collectively explain why a data-driven perturbation rule can match the oracle benchmark at first order, while still allowing a transparent second-order analysis for finite-sample implementation.

For a given local variance smoothness parameter \(s\) and pilot allocation exponent \(\gamma\), Theorem~\ref{thm:pc_second_order} suggests choosing the pilot perturbation exponent \(\beta\) to maximize \(\kappa(\beta,\gamma)\). The following corollary reports the resulting optimal choices of \(\beta\) and the corresponding second-order remainder rates.
\begin{corollary}\label{cor_discussion}
Under the same conditions of Theorem~\ref{thm:pc_oracle}, for each pair \((s,\gamma)\), suppose we choose \(\beta\in(-1/6,0)\) to maximize \(\kappa(\beta,\gamma)\). 
Then the optimal choice(s) of $\beta$ and the corresponding order of \(\mathrm{Rem}_{\mathrm{PC}}(n)\)
are given in Table~\ref{tab:second_order_tuning}.
\begin{table}[h]
\centering
\caption{Optimal choice(s) of \(\beta\) and the corresponding order of \(\mathrm{Rem}_{\mathrm{PC}}(n)\).}
\label{tab:second_order_tuning}
\begin{tabular}{|c|c|c|c|}
\hline
\(s\) & \(\gamma\) & Optimal \(\beta\) & Order of \(\mathrm{Rem}_{\mathrm{PC}}(n)\) \\
\hline
\([2,+\infty)\) 
& \([\frac{5}{7},1)\) 
& \(\left[\frac{1-2\gamma}{6\gamma},\frac{\gamma-1}{4\gamma}\right]\) 
& \(O\!\left(n^{\gamma-1-\frac{2}{3}}\right)\) \\
\hline
\([2,+\infty)\) 
& \((0,\frac{5}{7}]\) 
& \(-\frac{1}{10}\) 
& \(O\!\left(n^{-\frac{2}{5}\gamma-\frac{2}{3}}\right)\) \\
\hline
\([\frac{3}{2},2]\) 
& \([\frac{2s+6}{4s+6},1)\) 
& \(\left[\frac{1-2\gamma}{6\gamma},\frac{\gamma-1}{2\gamma s}\right]\) 
& \(O\!\left(n^{\gamma-1-\frac{2}{3}}\right)\) \\
\hline
\([\frac{3}{2},2]\) 
& \((0,\frac{2s+6}{4s+6}]\) 
& \(-\frac{1}{2s+6}\) 
& \(O\!\left(n^{-\frac{2s\gamma}{2s+6}-\frac{2}{3}}\right)\) \\
\hline
\((0,\frac{3}{2}]\) 
& \([1-\frac{s}{6},1)\) 
& \(\left[\frac{1-2\gamma}{6\gamma},\frac{\gamma-1}{2\gamma s}\right]\) 
& \(O\!\left(n^{\gamma-1-\frac{2}{3}}\right)\) \\
\hline
\((0,\frac{3}{2}]\) 
& \(\left[\frac{s+3}{6},1-\frac{s}{6}\right]\) 
& \(\left[\frac{s-6\gamma}{36\gamma},\frac{-1}{12\gamma}\right]\) 
& \(O\!\left(n^{-\frac{s}{6}-\frac{2}{3}}\right)\) \\
\hline
\((0,\frac{3}{2}]\) 
& \((0,\frac{s+3}{6}]\) 
& \(-\frac{1}{2s+6}\) 
& \(O\!\left(n^{-\frac{2s\gamma}{2s+6}-\frac{2}{3}}\right)\) \\
\hline
\end{tabular}
\end{table}
\end{corollary}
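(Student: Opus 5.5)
The corollary is a direct consequence of Theorem~\ref{thm:pc_second_order}: since $\mathrm{Rem}_{\mathrm{PC}}(n)=O(n^{-2/3-\kappa(\beta,\gamma)})$, it suffices to solve the deterministic max–min problem $\kappa^*(\gamma)=\max_{\beta\in(-1/6,0)}\kappa(\beta,\gamma)$ for each $(s,\gamma)$, identify its maximizer set, and read off $\mathrm{Rem}_{\mathrm{PC}}(n)=O(n^{-2/3-\kappa^*})$.

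\textbf{Step 1: sort the six terms by their dependence on $\beta$.} The terms in $\kappa$ are piecewise linear in $\beta$ and fall into three groups.
\begin{itemize}
\item The terms $1-\gamma$, $1/3$ and $s/6$ do not depend on $\beta$. Their minimum is $C\triangleq\min\{1-\gamma,1/3,s/6\}$.
\item The terms $-2\gamma s\beta$ and $-4\gamma\beta$ are decreasing in $\beta$ on $(-1/6,0)$. Their minimum is $-2\gamma m\beta$, where $m\triangleq\min\{s,2\}$.
\item The term $\gamma(6\beta+1)$ is increasing in $\beta$.
\end{itemize}
Hence $\kappa=\min\{C,\,-2\gamma m\beta,\,\gamma(6\beta+1)\}$.

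\textbf{Step 2: locate the peak of the non-constant part.} The function $\min\{-2\gamma m\beta,\gamma(6\beta+1)\}$ is unimodal in $\beta$. It peaks where the two lines cross, namely at $\beta^\dagger=-1/(6+2m)$, which lies in $(-1/6,0)$. The peak value is $\gamma\cdot 2m/(6+2m)$.
\begin{itemize}
\item For $s\ge2$ this gives $\beta^\dagger=-1/10$ and peak value $2\gamma/5$.
\item For $s\le2$ this gives $\beta^\dagger=-1/(2s+6)$ and peak value $2s\gamma/(2s+6)$.
\end{itemize}
If the peak value is at most $C$, the maximizer is unique and equals $\beta^\dagger$, with $\kappa^*$ equal to the peak value. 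This produces the single-$\beta$ rows of Table~\ref{tab:second_order_tuning}.

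\textbf{Step 3: the case where the constant $C$ binds.} If the peak value exceeds $C$, then $\kappa^*=C$. The maximizer set is the interval on which both lines are at least $C$:
\[
\frac{C-\gamma}{6\gamma}\le\beta\le-\frac{C}{2\gamma m}.
\]
\begin{itemize}
\item With $C=1-\gamma$ this interval is $\big[\frac{1-2\gamma}{6\gamma},\frac{\gamma-1}{2\gamma m}\big]$. For $m=2$ the right endpoint is $(\gamma-1)/(4\gamma)$, and for $m=s$ it is $(\gamma-1)/(2\gamma s)$.
\item With $C=s/6$ the interval is $\big[\frac{s-6\gamma}{36\gamma},-\frac{1}{12\gamma}\big]$.
\end{itemize}
Both match the interval rows of the table.

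\textbf{Step 4: determine which constant is active and where the regimes switch.} This bookkeeping is the only real obstacle. The goal is to verify that $1/3$ is never the binding constant, and to find the $\gamma$-thresholds that separate the regimes.
\begin{itemize}
\item For $s\ge2$ we have $s/6\ge1/3$. The crossing $1-\gamma=2\gamma/5$ occurs at $\gamma=5/7$, where $1-\gamma=2/7<1/3$. So for $\gamma\ge5/7$ the binding constant is $C=1-\gamma$, and for $\gamma\le5/7$ the peak value $2\gamma/5\le2/7$ lies below every constant.
\item For $s\in[3/2,2]$, the equation $1-\gamma=2s\gamma/(2s+6)$ gives the threshold $\gamma=(2s+6)/(4s+6)$. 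One checks that $s/6$ exceeds $1-\gamma$ there, and that the peak value never exceeds $1/3$.
\item For $s\le3/2$, comparing $s/6$ with $1-\gamma$ gives the threshold $\gamma=1-s/6$. Comparing $s/6$ with the peak value $2s\gamma/(2s+6)$ gives the threshold $\gamma=(s+3)/6$. These produce the three rows for $s\le 3/2$.
\end{itemize}

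Finally, in each interval case I check that the endpoints lie in $(-1/6,0)$ and that the interval is nonempty; this reduces to the same threshold inequalities on $\gamma$. Substituting each $\kappa^*$ into Theorem~\ref{thm:pc_second_order} then yields the stated orders of $\mathrm{Rem}_{\mathrm{PC}}(n)$.
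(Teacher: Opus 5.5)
Your proposal is correct and follows essentially the same route as the paper: reduce to maximizing the piecewise-linear $\kappa(\beta,\gamma)$, locate the crossing of the increasing term $\gamma(6\beta+1)$ with the binding decreasing term, and compare that peak value with the constant terms across the three $s$-regimes. Your $m=\min\{s,2\}$ and $C=\min\{1-\gamma,1/3,s/6\}$ bookkeeping simply packages the paper's three case analyses uniformly; one small imprecision is that for $s\in(3/2,2]$ the peak $2s\gamma/(2s+6)$ can exceed $1/3$ when $\gamma$ is near $1$, so the right reason $1/3$ never binds there is just $s/6\le 1/3$ (and in the regime where the peak is the answer it is at most $2s/(4s+6)\le 1/3$).
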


Two special cases are particularly useful for interpreting the tuning rule in Corollary~\ref{cor_discussion}. First, consider the case where the simulation noise variance is locally constant across nearby function evaluation points. This corresponds to the limiting case \(s=+\infty\). Optimizing further over \(\gamma\), the best second-order rate in this locally constant-variance case is obtained at
\(\gamma=5/7\), \(\beta=-1/10\), which gives \(\mathrm{Rem}_{\mathrm{PC}}(n)=O\!\left(n^{-20/21}\right).\) Second, consider the case \(s=1\), where the local variance changes linearly with the perturbation size. In this case, the best second-order rate is
\(\mathrm{Rem}_{\mathrm{PC}}(n)=O\!\left(n^{-5/6}\right),\)
which is attained by any $2/3\leq\gamma\leq5/6$ with $\frac{1-6\gamma}{36\gamma}\leq\beta\leq-\frac{1}{12\gamma}$. 

These two cases illustrate how the smoothness of the local variance affects the preferred pilot design. When the variance is locally constant, the local variance-approximation penalty disappears, and the best balance is achieved at \(\gamma=5/7\) and \(\beta=-1/10\). When the variance varies linearly around \(x_0\), the local variance approximation error becomes more pronounced, and the optimal pilot allocation shifts to the broader range \(2/3\le\gamma\le5/6\). 

\section{Generalizations to Multi-Dimensional Gradient Estimation}\label{sec:multi}
We now extend the PC-CFD framework to multi-dimensional gradient estimation. Let \(f:\mathbb R^p\to\mathbb R\), \(p\ge2\), be the performance measure of interest. For any chosen \(\bm x\in\mathbb R^p\), suppose that we can observe an unbiased simulation output \(\hat f(\bm x)\) satisfying $\mathbb{E}[\hat{f}(\bm{x})]=f(\bm{x})$ and $\text{Var}(\hat{f}(\bm{x}))=\sigma^{2}(\bm{x})$. The goal is to estimate the gradient \(\nabla f(\bm x_0)\) at a fixed point \(\bm x_0\in\mathbb R^p\). 

The multi-dimensional setting gives rise to two natural finite-difference strategies. The first uses deterministic coordinate-wise perturbations and estimates each partial derivative separately. The second uses randomized simultaneous perturbations and estimates all gradient components jointly from each pair of function evaluations. We show that the pilot-calibrated framework extends to both strategies, but the problem-specific quantities required for calibration differ substantially.

Throughout this section, for a gradient estimator $\widehat{\bm{\theta}}=(\widehat{\theta}_{1},\ldots,\widehat{\theta}_{p})^{'}$, we evaluate its performance by the mean squared error
\begin{equation} \label{MSE_Multi} 
\text{MSE} = \mathbb E\!\left[ \|\widehat{\bm\theta}-\nabla f(\bm x_0)\|_2^2 \right] = \sum_{i=1}^p \mathbb E\!\left[ \left(\widehat\theta_i-(\nabla f(\bm x_0))_i\right)^2 \right]. 
\end{equation}
Corresponding to the regularity conditions imposed in the single-dimensional setting, we make the following assumptions. 
\begin{assumption}\label{asp:multi_smooth}
	The performance function $f(\bm{x})$ is five-times continuously differentiable in a neighborhood of $\bm{x}_{0}$.
\end{assumption}
\begin{assumption}\label{asp:multi_var}
	The output variance $\sigma^{2}(\bm x)$ varies smoothly around $\bm x_{0}$, in the sense that as $\|\bm\Delta\|_2\to0$, $\sigma^{2}(\bm x_{0}+\bm\Delta)=\sigma^{2}(\bm x_{0})+\Theta(\|\bm\Delta\|_2^s)$ for some $s>0$.
\end{assumption}

\subsection{Deterministic Coordinate-Wise Perturbations}
We first consider deterministic coordinate-wise perturbations. Let \(\bm e_i\) denote the \(i\)-th coordinate vector, and define \[ B_i \triangleq \frac{\nabla_{iii}^3 f(\bm x_0)}{6}, \qquad i=1,\ldots,p, \] where \(\nabla_{iii}^3 f(\bm x_0)\) denotes the third-order partial derivative of \(f\) with respect to the \(i\)-th coordinate. Suppose that the total budget is split equally across the \(p\) coordinates, so that each coordinate receives \(n/p\) single-run CFD replications. Applying the single-dimensional oracle rule~\eqref{optimal MSE} separately to each coordinate yields the oracle-optimal MSE \[ \sum_{i=1}^p 3\left( \frac{B_i\sigma^2(\bm x_0)}{4(n/p)} \right)^{2/3} (1+o(1)). \] Define \[ B_{\mathrm{cw}} \triangleq \left( \sum_{i=1}^p B_i^{2/3} \right)^{3/2},\] then the deterministic coordinate-wise oracle-optimal MSE can be written as 
\begin{equation} \label{eq:det_oracle_mse} 
\text{MSE}_{\mathrm{CW,opt}} = 3\left( \frac{p\sigma^2(\bm x_0)B_{\mathrm{cw}}}{4n} \right)^{2/3} (1+o(1)). 
\end{equation} 
Thus, to attain this benchmark without oracle knowledge, one needs to estimate the \(p\) curvature quantities \(B_1,\ldots,B_p\) and the local variance \(\sigma^2(\bm x_0)\).

The coordinate-wise PC-CFD procedure applies the single-dimensional pilot-calibration framework separately to each coordinate. For each \(i\), the pilot estimation stage uses perturbations along \(\bm e_i\) to estimate \(B_i\) and \(\sigma^2(\bm x_0)\), while the calibrated deployment stage uses these estimates to determine the coordinate-specific perturbation size. For notational clarity, we write the pilot estimation procedure in Algorithm~\ref{alg:pccfd_pilot} using the functional notation
\[ \text{Pilot Estimation Stage} (f,\hat f;x_0,n_1,\delta_1,\delta_2) \]
where \((f,\hat f)\) denotes the underlying response function and its simulation output, and \((x_0,n_1,\delta_1,\delta_2)\) denotes the target point, pilot budget, and two pilot perturbations, respectively. Similarly, we write the calibrated deployment procedure in Algorithm~\ref{alg:pccfd_deploy} as
\[  \text{Calibrated Deployment Stage} (f,\hat f;x_0,n_2,\widehat B,\widehat\sigma^2(x_0)) \]
where \(n_2\) denotes the deployment budget, and \((\widehat B,\widehat\sigma^2(x_0))\) are the estimates obtained from the pilot estimation stage. The complete procedure is summarized in Algorithm~\ref{alg:multi_det_pccfd}. For simplicity, assume that \(n/p\) and \(n_1\) are integers and that \(0<n_1<n/p\).

\begin{algorithm}[!h]
\caption{Coordinate-Wise PC-CFD for Multi-Dimensional Gradients}
\label{alg:multi_det_pccfd}
\textbf{Input}: target point \(\bm x_0\), total budget \(n\), pilot budget \(n_1\) per coordinate, pilot perturbations \(\delta_1\neq\delta_2>0\).\\
Set \(n_2\leftarrow n/p-n_1\).\\
\For{\(i=1,\ldots,p\)}{
    Define the single-dimensional restriction and its simulation output by
    \[
    g_i(t)\triangleq f(\bm x_0+t\bm e_i),
    \qquad
    \hat g_i(t)\triangleq \hat f(\bm x_0+t\bm e_i),
    \qquad t\in\mathbb R.
    \]\\
    Apply Algorithm~\ref{alg:pccfd_pilot} to \((g_i,\hat g_i)\) at target point \(t=0\) with pilot budget \(n_1\) and pilot perturbations \(\delta_1,\delta_2\) to obtain
    \[
    (\widehat B_i,\widehat\sigma_i^2)
    \leftarrow
    \text{Pilot Estimation Stage}(g_i,\hat g_i;0,n_1,\delta_1,\delta_2).
    \]\\
    Apply Algorithm~\ref{alg:pccfd_deploy} to \((g_i,\hat g_i)\) at target point \(t=0\) with deployment budget \(n_2\), pilot estimates \(\widehat B_i\) and \(\widehat\sigma_i^2\) to obtain
    \[
    \widehat\theta_{\mathrm{CWPC},i}
    \leftarrow
    \text{Calibrated Deployment Stage}(g_i,\hat g_i;0,n_2,\widehat B_i,\widehat\sigma_i^2).
    \]
}
\Return
\(
\widehat{\bm\theta}_{\mathrm{CWPC}}
=
(\widehat\theta_{\mathrm{CWPC},1},\ldots,\widehat\theta_{\mathrm{CWPC},p})'
\) as an estimate of \(\nabla f(\bm x_0)\).
\end{algorithm}

The following result is a direct multi-dimensional extension of Theorem~\ref{thm:pc_oracle}. It shows that coordinate-wise PC-CFD attains the same first-order MSE as the coordinate-wise oracle CFD rule.
\begin{corollary}
\label{cor:multi_det_oracle}
Under Assumption~\ref{asp:multi_smooth} with $B_i\neq0,\ i=1,\ldots,p$ and Assumption~\ref{asp:multi_var}, consider the coordinate-wise PC-CFD estimator \(\widehat{\bm\theta}_{\mathrm{CWPC}}\) in Algorithm~\ref{alg:multi_det_pccfd}. Suppose that the pilot perturbations satisfy \(\delta_2=c\delta_1\) with \(c\in(0,1)\) and \(\delta_1=\Theta(n_1^\beta)\) with \(-1/6<\beta<0\). Assume further that the moment condition in Proposition~\ref{pro_sigma} holds for each coordinate, and that
\[
n_1=\Theta(n^\gamma),\qquad 0<\gamma<1.
\]
Then
\[
\lim_{n\to\infty}n^{2/3}\mathbb E\!\left[
\|\widehat{\bm\theta}_{\mathrm{CWPC}}-\nabla f(\bm x_0)\|_2^2
\right]
=
3\left(
\frac{p\sigma^2(\bm x_0)B_{\mathrm{cw}}}{4}
\right)^{2/3}.
\]
Consequently, coordinate-wise PC-CFD is oracle-optimal relative to the coordinate-wise CFD benchmark in \eqref{eq:det_oracle_mse}.
\end{corollary}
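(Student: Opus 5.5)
The plan is to reduce the corollary to $p$ applications of Theorem~\ref{thm:pc_oracle}, one for each coordinate restriction, and then sum. Algorithm~\ref{alg:multi_det_pccfd} treats coordinate $i$ in isolation: it runs the single-dimensional PC-CFD on $(g_i,\hat g_i)$ at $t=0$ with per-coordinate total budget $N\triangleq n/p$, pilot budget $n_1$, and deployment budget $n_2=N-n_1$. The MSE in \eqref{MSE_Multi} is the sum of coordinatewise MSEs. No independence across coordinates is needed, because the squared Euclidean norm splits additively. Therefore
\[
\lim_{n\to\infty} n^{2/3}\,\mathbb E\|\widehat{\bm\theta}_{\mathrm{CWPC}}-\nabla f(\bm x_0)\|_2^2=\sum_{i=1}^p p^{2/3}\lim_{N\to\infty}N^{2/3}\,\mathbb E\big[(\widehat\theta_{\mathrm{CWPC},i}-g_i^{(1)}(0))^2\big],
\]
provided each limit on the right exists. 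Here I used $n^{2/3}=p^{2/3}N^{2/3}$ and $g_i^{(1)}(0)=(\nabla f(\bm x_0))_i$.

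\textbf{Step 1: checking the hypotheses of Theorem~\ref{thm:pc_oracle} for each $g_i$.} Assumption~\ref{asp:multi_smooth} implies that $g_i(t)=f(\bm x_0+t\bm e_i)$ is $C^5$ near $0$. Its third derivative is $g_i^{(3)}(0)=\nabla^3_{iii}f(\bm x_0)=6B_i\neq0$, so Assumption~\ref{asp:1} holds. For the variance, the output variance of $\hat g_i$ is $\sigma_i^2(t)=\sigma^2(\bm x_0+t\bm e_i)$. Since $\|t\bm e_i\|_2=|t|$, Assumption~\ref{asp:multi_var} gives $\sigma_i^2(t)=\sigma^2(\bm x_0)+\Theta(|t|^s)$, which is Assumption~\ref{asp:2}, and $\sigma_i^2(0)=\sigma^2(\bm x_0)$. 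The moment condition of Proposition~\ref{pro_sigma} is assumed for each coordinate. The pilot conditions on $\delta_2=c\delta_1$ and $\delta_1=\Theta(n_1^\beta)$ carry over unchanged.

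The only thing that needs care is the budget scaling. Theorem~\ref{thm:pc_oracle} is stated in terms of the total budget of the one-dimensional procedure, which here is $N=n/p$. Since $p$ is fixed, $n_1=\Theta(n^\gamma)=\Theta(N^\gamma)$ with the same $0<\gamma<1$, and $n_2=N-n_1$, exactly as in Algorithm~\ref{alg:pccfd_main} with total budget $N$. So the theorem applies with $N\to\infty$ and gives
\[
N^{2/3}\,\mathbb E\big[(\widehat\theta_{\mathrm{CWPC},i}-(\nabla f(\bm x_0))_i)^2\big]\to 3\big(B_i\sigma^2(\bm x_0)/4\big)^{2/3}.
\]
One notational point should be stated explicitly: \eqref{optimal MSE} has $B$ inside a $2/3$ power, which should be read as $|B|$, or equivalently $(B^2)^{1/3}$. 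I would interpret $B_i^{2/3}$ in $B_{\mathrm{cw}}$ the same way.

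\textbf{Step 2: summing.} Summing over $i$ and multiplying by $p^{2/3}$ gives
\[
3\Big(\frac{p\,\sigma^2(\bm x_0)}{4}\Big)^{2/3}\sum_{i=1}^p B_i^{2/3}=3\Big(\frac{p\,\sigma^2(\bm x_0)B_{\mathrm{cw}}}{4}\Big)^{2/3},
\]
because $B_{\mathrm{cw}}^{2/3}=\sum_i B_i^{2/3}$ by definition. This equals the leading constant in \eqref{eq:det_oracle_mse}, which proves oracle optimality relative to that benchmark.

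I do not expect a genuine obstacle. The main points needing care are confirming that the restriction $\hat g_i$ inherits Assumption~\ref{asp:2} from Assumption~\ref{asp:multi_var}, and re-indexing the asymptotics from $n$ to $N=n/p$ so that the rate conditions on $n_1$ and $\delta_1$ are preserved. Both reduce to the fact that $p$ is a fixed constant.
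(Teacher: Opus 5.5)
Your proposal is correct and follows the route the paper intends: the paper gives no separate proof, and presents the corollary as a coordinate-by-coordinate application of Theorem~\ref{thm:pc_oracle} with per-coordinate budget $n/p$, after which the coordinate MSEs are summed and $\sum_i B_i^{2/3}=B_{\mathrm{cw}}^{2/3}$ is used. Your explicit checks supply the details the paper leaves implicit: that $g_i$ inherits Assumptions~\ref{asp:1} and~\ref{asp:2}, that $n_1=\Theta((n/p)^\gamma)$, and that $B_i^{2/3}$ should be read as $|B_i|^{2/3}$.
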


The implication is the same as that in the single-dimensional case, but now applied coordinate by coordinate. The pilot estimation stage only needs to estimate the coordinate-wise curvature constants \(B_i\) and the local variance sufficiently well to calibrate the coordinate-specific perturbation sizes. Since \(n_1/n\to0\), the total pilot fraction remains asymptotically negligible, while the calibrated deployment stage attains the same leading MSE constant as if all \(B_i\)'s and \(\sigma^2(\bm x_0)\) were known in advance.

\subsection{Randomized Simultaneous Perturbations}
Coordinate-wise perturbations estimate one partial derivative at a time. In high-dimensional stochastic optimization, it is also common to use randomized simultaneous perturbations, where each pair of function evaluations perturbs all coordinates simultaneously and produces a vector-valued gradient estimate. Such schemes include simultaneous perturbation methods and Gaussian smoothing methods.

We focus on simultaneous perturbation (SP) with Rademacher directions. Let
\[
\bm\Delta_j=(\delta_{1,j},\ldots,\delta_{p,j})',
\qquad j=1,\ldots,n,
\]
where all \(\delta_{i,j}\)'s are mutually independent Rademacher random variables, taking values in \(\{-1,1\}\) with equal probability. For a scaling parameter \(h>0\), define
\[
\phi(\bm\Delta_j)
=
(\delta_{1,j}^{-1},\ldots,\delta_{p,j}^{-1})',
\]
and consider the simultaneous-perturbation estimator
\begin{equation}
\label{eq:sp_estimator}
\widehat{\bm\theta}_{\mathrm{SP}}
=
\frac{1}{n}\sum_{j=1}^n
\frac{
\hat f(\bm x_0+h\bm\Delta_j)
-
\hat f(\bm x_0-h\bm\Delta_j)
}
{2h}
\phi(\bm\Delta_j).
\end{equation}
To identify the oracle perturbation scale and the corresponding first-order MSE, we establish the following proposition.
\begin{proposition}
\label{pro_SP}
Under Assumptions~\ref{asp:multi_smooth} and~\ref{asp:multi_var}, consider the simultaneous-perturbation estimator~\eqref{eq:sp_estimator}. Then, as \(h\to0\) and \(n\to\infty\),
\[
\mathbb E\!\left[
\|\widehat{\bm\theta}_{\mathrm{SP}}-\nabla f(\bm x_0)\|_2^2
\right]
=
(B_{\mathrm{sp}}^2+o(1)) h^4
+
\frac{p\sigma^2(\bm x_0)}{2nh^2}(1+o(1)),
\]
where 
\[
G_i
\triangleq
\nabla_{iii}^3 f(\bm x_0)
+
3\sum_{k\ne i}\nabla_{ikk}^3 f(\bm x_0),
\qquad i=1,\ldots,p,
\]
and
\[
B_{\mathrm{sp}}
\triangleq
\frac{1}{6}
\left(
\sum_{i=1}^p G_i^2
\right)^{1/2}.
\]
For the non-degenerative case of \(B_{\mathrm{sp}}>0\), the oracle-optimal MSE is
\begin{equation}
\label{eq:sp_oracle_mse}
\text{MSE}_{\mathrm{SP,opt}}
=
3\left(
\frac{p\sigma^2(\bm x_0)B_{\mathrm{sp}}}{4n}
\right)^{2/3}
(1+o(1)),
\end{equation}
which is attained by \(\widehat{\bm\theta}_{\mathrm{SP}}\) with the oracle perturbation scale
\begin{equation}
\label{eq:sp_oracle_perturb}
h=\left(\frac{p\sigma^{2}(\bm{x}_{0})}{4nB_{\mathrm{sp}}^{2}}\right)^{1/6}.
\end{equation}
\end{proposition}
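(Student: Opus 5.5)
We prove Proposition~\ref{pro_SP} by a bias--variance decomposition of the single-run SP vector
\[
\bm Z_j \triangleq \frac{\hat f(\bm x_0+h\bm\Delta_j)-\hat f(\bm x_0-h\bm\Delta_j)}{2h}\,\phi(\bm\Delta_j),
\]
conditioning on the direction $\bm\Delta_j$. Since the $\bm Z_j$ are i.i.d., we have
\[
\mathbb E\|\widehat{\bm\theta}_{\mathrm{SP}}-\nabla f(\bm x_0)\|_2^2=\|\mathbb E\bm Z_1-\nabla f(\bm x_0)\|_2^2+\tfrac1n\mathbb E\|\bm Z_1-\mathbb E\bm Z_1\|_2^2 ,
\]
so the squared bias and the trace of the covariance can be treated separately.

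\textbf{Bias.} Under Assumption~\ref{asp:multi_smooth}, Taylor-expand $f(\bm x_0\pm h\bm\Delta)$ to fifth order with Lagrange remainder. The even-order terms cancel, giving
\[
\frac{f(\bm x_0+h\bm\Delta)-f(\bm x_0-h\bm\Delta)}{2h}=\bm\Delta'\nabla f(\bm x_0)+\frac{h^2}{6}\sum_{a,b,c}\nabla^3_{abc}f(\bm x_0)\Delta_a\Delta_b\Delta_c+O(h^4),
\]
uniformly in $\bm\Delta$ because $\|\bm\Delta\|_2=\sqrt p$. Multiply by $\Delta_i^{-1}=\Delta_i$ and take expectations using Rademacher moments. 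For the gradient term, $\mathbb E[\Delta_i\Delta_k]=\mathbb 1\{i=k\}$, which recovers $(\nabla f(\bm x_0))_i$. For the cubic term, $\mathbb E[\Delta_i\Delta_a\Delta_b\Delta_c]$ equals $1$ exactly when the indices pair up, and $0$ otherwise. The surviving cases are $a=b=c=i$ (contributing $\nabla^3_{iii}f$) and the three placements of $\{i,k,k\}$ with $k\neq i$ (contributing $3\nabla^3_{ikk}f$, by symmetry of third derivatives). Hence the bias of component $i$ is $G_ih^2/6+O(h^4)$, and the squared bias is $(B_{\mathrm{sp}}^2+o(1))h^4$.

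\textbf{Variance.} Write $\hat f=f+\text{noise}$, with the noises at $\bm x_0\pm h\bm\Delta$ independent of each other and, given $\bm\Delta$, centered. Apply the law of total variance to each component. The conditional noise variance of $Z_{1,i}$ is $[\sigma^2(\bm x_0+h\bm\Delta)+\sigma^2(\bm x_0-h\bm\Delta)]/(4h^2)$, since $\Delta_i^{-2}=1$. By Assumption~\ref{asp:multi_var}, with $\|h\bm\Delta\|_2=h\sqrt p\to0$, this equals $\sigma^2(\bm x_0)(1+o(1))/(2h^2)$. Summing over $i$ gives $p\sigma^2(\bm x_0)(1+o(1))/(2h^2)$.

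The remaining piece is the variance of the conditional mean, namely the variance of $\bm\Delta\,\bm\Delta'\nabla f(\bm x_0)+O(h^2)$ over $\bm\Delta$. It is $O(1)$, bounded by $p\|\nabla f\|^2$ plus small terms, and so is negligible relative to $h^{-2}$ as $h\to0$. Dividing the total by $n$ yields the stated variance term.

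\textbf{Optimization.} Minimizing $B_{\mathrm{sp}}^2h^4+p\sigma^2(\bm x_0)/(2nh^2)$ over $h$ is exactly the scalar calculation of Section~\ref{sec:background} with $\sigma^2(x_0)$ replaced by $p\sigma^2(\bm x_0)$ and $B$ by $B_{\mathrm{sp}}$. By the AM--GM inequality used there, this gives \eqref{eq:sp_oracle_perturb} and \eqref{eq:sp_oracle_mse}. With $h=\Theta(n^{-1/6})$ the $o(1)$ factors are uniform, so the leading constant is attained.

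The main obstacle is the careful Rademacher index bookkeeping for the cubic term, in particular the factor $3$ from the symmetric placements of $\{i,k,k\}$. A secondary point is that the $O(1)$ direction-randomness variance and the remainder terms must be shown to be truly lower order uniformly over the finitely many values of $\bm\Delta$. This is immediate because the support of $\bm\Delta$ is finite.
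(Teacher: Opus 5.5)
Your proposal is correct and follows essentially the same route as the paper. You condition on the Rademacher direction, Taylor-expand and use the pairing rule for $\mathbb E[\Delta_i\Delta_a\Delta_b\Delta_c]$ to get the bias coefficient $G_i/6$, apply the law of total variance (noise part $\sigma^2(\bm x_0)/(2h^2)$ per component, direction part $O(1)$), and then optimize the leading terms $B_{\mathrm{sp}}^2h^4+p\sigma^2(\bm x_0)/(2nh^2)$. The only cosmetic difference is that you use the AM--GM bound from the scalar case where the paper differentiates.
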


The quantity \(G_i/6\) is the leading \(h^2\)-bias coefficient of the \(i\)-th component of the simultaneous-perturbation estimator. Unlike coordinate-wise CFD, this bias coefficient depends not only on the pure third derivative \(\nabla_{iii}^3 f(\bm x_0)\), but also on the pairwise-mixed third derivatives \(\nabla_{ikk}^3 f(\bm x_0)\). This dependence constitutes the main structural difference between deterministic coordinate-wise perturbations and randomized simultaneous perturbations.

Proposition~\ref{pro_SP} shows that simultaneous perturbation attains the same \(n^{-2/3}\) first-order rate as coordinate-wise CFD, but with a different curvature constant. Its oracle perturbation scale depends on \(B_{\mathrm{sp}}\), which aggregates the contributions of the pure and pairwise-mixed third derivatives that arise under randomized perturbation directions. Therefore, a pilot-calibrated simultaneous-perturbation method must consistently estimate \(B_{\mathrm{sp}}\) and \(\sigma^2(\bm x_0)\). In the pilot estimation stage developed below, this is accomplished by directly estimating the quantities \(G_1,\ldots,G_p\) and then combining them to form an estimator of \(B_{\mathrm{sp}}\).

The pilot estimation stage estimates the quantities \(G_1,\ldots,G_p\) and \(\sigma^2(\bm x_0)\) using randomized simultaneous perturbations. For the \(j\)-th replication of Rademacher direction, define the half-difference output at perturbation scale \(h>0\) by
\[
y(h\bm \Delta_j)
\triangleq
\frac{
\hat f(\bm x_0+h\bm \Delta_j)
-
\hat f(\bm x_0-h\bm \Delta_j)
}{2}.
\]
Taking expectation over both the simulation outputs and the randomized direction, a Taylor expansion gives
\begin{equation}\label{taylor_Gi}
\mathbb{E}[y(h\bm \Delta_j)\delta_{i,j}]
=
h(\nabla f(\bm x_0))_{i}
+
\frac{h^3}{6}G_i
+
O(h^5),
\qquad i=1,\ldots,p.
\end{equation}
Therefore, we use two distinct perturbation scales \(h_1\neq h_2>0\) and split the pilot budget \(n_1\) evenly across them. For simplicity, assume that $n_1$ is even and let $m=n_1/2$. For each \(j=1,\ldots,m\), the same Rademacher direction \(\bm\Delta_j\) is used at both perturbation scales, while all simulation outputs are generated independently. Using the paired outputs \(\{(y(h_1\bm \Delta_j),y(h_2\bm \Delta_j))\}_{j=1}^{m}\), we estimate $G_i$ by
\begin{equation}\label{hat_Gi}
    \widehat{G}_i \triangleq \frac{1}{m} \sum_{j=1}^{m} \frac{ 6\left(h_2 y(h_1\bm \Delta_j)-h_1 y(h_2\bm \Delta_j)\right)\delta_{i,j} }{ h_2 h_1^3-h_1 h_2^3 }, \qquad i=1,\ldots,p.
\end{equation}
The curvature aggregate \(B_{\mathrm{sp}}\) required by the oracle perturbation rule is then estimated by 
\begin{equation}\label{hat_Bsp} 
\widehat B_{\mathrm{sp}} \triangleq \frac{1}{6} \left( \sum_{i=1}^p \widehat G_i^2 \right)^{1/2}. \end{equation}
We estimate \(\sigma^2(\bm x_0)\) by two times the sample variance of \(\{y(h_1\bm \Delta_j)\}_{j=1}^{m}\), i.e.,
\begin{gather}
	\widehat{\sigma}^{2}(\bm x_{0})=\frac{2\|\bm{y}(h_1)-\bar{y}(h_1)\vmathbb{1}_{m\times 1}\|_{2}^{2}}{m-1}\label{sigma_hat_FD_multi}
\end{gather}
where $\bm{y}(h_1)=(y(h_1\bm \Delta_1),\ldots,y(h_1\bm \Delta_m))'$, $\bar{y}(h_1)=m^{-1}\sum_{j=1}^{m}y(h_1\bm \Delta_j)$. In the calibrated deployment stage, we plug \(\widehat B_{\mathrm{sp}}\) and \(\widehat{\sigma}^2(\bm x_0)\) into the oracle perturbation formula~\eqref{eq:sp_oracle_perturb} and obtain a calibrated perturbation scale. The calibrated scale is then used in the simultaneous-perturbation estimator~\eqref{eq:sp_estimator} based on the $n_2=n-n_1$ new replications that are independent of the pilot samples. Algorithm~\ref{alg:pcsp} summarizes the resulting pilot-calibrated simultaneous-perturbation procedure.

The following theorem states that simultaneous-perturbation PC-CFD attains the corresponding oracle benchmark.
\begin{theorem}
\label{thm:pcsp}
Under Assumption~\ref{asp:multi_smooth} with \(B_{\mathrm{sp}}>0\) and Assumption~\ref{asp:multi_var}, consider the simultaneous-perturbation PC-CFD estimator $\widehat{\bm\theta}_{\mathrm{SPPC}}$ in Algorithm~\ref{alg:pcsp}. Suppose that the pilot perturbation scales satisfy \(h_2=ch_1\) with \(c\in(0,1)\) and \(h_1=\Theta(n_1^\beta)\) with \(-1/6<\beta<0\). Assume further that the uniform fourth-moment condition analogous to that in Proposition~\ref{pro_sigma} holds for the randomized pilot outputs, and that 
\[
n_1=\Theta(n^\gamma),\qquad 0<\gamma<1.
\]
Then
\[
\lim_{n\to\infty}n^{2/3}\mathbb E\!\left[
\|\widehat{\bm\theta}_{\mathrm{SPPC}}-\nabla f(\bm x_0)\|_2^2
\right]
=
3\left(
\frac{p\sigma^2(\bm x_0)B_{\mathrm{sp}}}{4}
\right)^{2/3}.
\]
Moreover, its second-order remainder admits the same upper bound as in Theorem~\ref{thm:pc_second_order}, namely
\[
\mathbb E\!\left[
\|\widehat{\bm\theta}_{\mathrm{SPPC}}-\nabla f(\bm x_0)\|_2^2
\right]=3\left(
\frac{p\sigma^2(\bm x_0)B_{\mathrm{sp}}}{4}
\right)^{2/3}n^{-2/3}\left(1+O\!\left(
n^{-\kappa(\beta,\gamma)}
\right)\right),
\]
where \(\kappa(\beta,\gamma)=
\min\left\{
1-\gamma,\,
-2\gamma s\beta,\,
-4\gamma\beta,\,
1/3,\,
s/6,\,
\gamma(6\beta+1)
\right\}\).
\end{theorem}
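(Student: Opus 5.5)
The plan is to reduce Theorem~\ref{thm:pcsp} to the single-dimensional argument behind Theorems~\ref{thm:pc_oracle} and~\ref{thm:pc_second_order}. The key observation is that, conditional on the pilot data, the deployment stage is exactly the SP estimator~\eqref{eq:sp_estimator} run with $n_2$ fresh replications and a frozen scale
\[
\widehat h=\widehat\alpha\, n_2^{-1/6},\qquad \widehat\alpha=\big(p\widehat\sigma^2(\bm x_0)/(4\widehat B_{\mathrm{sp}}^2)\big)^{1/6}.
\]

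\textbf{Step 1 (conditional MSE).} Condition on the pilot $\sigma$-field. Apply Proposition~\ref{pro_SP}, but in a form that keeps the remainders explicit rather than $o(1)$. For the bias, expand each component to fifth order: $\mathbb E[\text{component } i]=(\nabla f)_i+G_i h^2/6+O(h^4)$, with the $O(h^4)$ term uniform over the finitely many Rademacher directions. For the variance, use $\sigma^2(\bm x_0\pm h\bm\Delta)=\sigma^2(\bm x_0)+O(h^s)$ from Assumption~\ref{asp:multi_var}, since $\|h\bm\Delta\|_2=h\sqrt p$. Here $\|\phi(\bm\Delta)\|^2=p$. There is also a variance contribution from the random direction acting on the mean, $\text{Var}\big((f(\bm x_0+h\bm\Delta)-f(\bm x_0-h\bm\Delta))\phi/(2h)\big)/n$. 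I must check that this is $O(1/n)$, which is $o(n^{-2/3})$; note its cross terms in $\nabla f$ are nonzero but bounded. The result is
\[
\mathcal M(\widehat\alpha)=n^{-2/3}\big(B_{\mathrm{sp}}^2\widehat\alpha^4+p\sigma^2(\bm x_0)/(2\widehat\alpha^2)\big)(1+\text{rem}),
\]
where the remainder terms are $O(\widehat h^2)$, $O(\widehat h^s)$, $O(n^{-1/3})$ and $O(n_1/n)$. These are exactly the $1/3$, $s/6$ and $1-\gamma$ entries of $\kappa$.

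\textbf{Step 2 (pilot accuracy).} I show $\widehat G_i\to G_i$ in $L^2$ with error $O(n_1^{4\beta})+O(n_1^{-1-6\beta})$, mirroring Proposition~\ref{pro_B} and Corollary~\ref{col_B}. Because the same $\bm\Delta_j$ is used at $h_1$ and $h_2$, the combination $h_2y(h_1\bm\Delta_j)-h_1y(h_2\bm\Delta_j)$ cancels the gradient term exactly, pathwise in $\bm\Delta_j$. What remains in expectation (after multiplying by $\delta_{i,j}$ and by $6/(h_2h_1^3-h_1h_2^3)$) is $G_i$ plus an $O(h_1^2)$ term from the fifth-order Lagrange remainder, using~\eqref{taylor_Gi}. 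The variance is $O(1/(m h_1^6))$, the same as~\eqref{eq:VarBhat}. A random-direction term of order $\text{Var}(\delta_{i,j}\sum_k\cdots)h^6/h^6$ also appears; it is $O(1/m)$ and therefore dominated. Then $\widehat B_{\mathrm{sp}}\to B_{\mathrm{sp}}$ by continuity, because $B_{\mathrm{sp}}>0$, and the rates transfer through the Lipschitz map $(G_i)\mapsto\|G\|/6$. For $\widehat\sigma^2(\bm x_0)$, the argument of Proposition~\ref{pro_sigma} applies verbatim under the assumed fourth-moment condition. The only change is that $\text{Var}(y(h_1\bm\Delta))$ now also contains $\text{Var}_{\bm\Delta}$ of the mean difference, which is $O(h_1^2)$. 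This gives bias $O(h_1^s+h_1^2)$ and stochastic error $O_p(n_1^{-1/2})$. These produce the $-2\gamma s\beta$, $-4\gamma\beta$ and $\gamma(6\beta+1)$ entries once squared via the next step.

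\textbf{Step 3 (plug-in and expectation; main obstacle).} The function $\psi(\alpha)=B_{\mathrm{sp}}^2\alpha^4+p\sigma^2/(2\alpha^2)$ has $\psi'(\alpha^*)=0$. Hence $\psi(\widehat\alpha)-\psi(\alpha^*)=O((\widehat\alpha-\alpha^*)^2)$ locally, so the first-order constant is attained and the second-order loss is the squared pilot error. This squaring is what doubles the exponents above.

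The obstacle is passing from convergence in probability to convergence of $\mathbb E[\mathcal M(\widehat\alpha)]$. The difficulty is that $\widehat B_{\mathrm{sp}}$ or $\widehat\sigma^2$ may be near zero, making $\widehat\alpha$ extreme and $\widehat h$ possibly not small, so the Taylor remainders fail. I would handle this the same way as in the proof of Theorem~\ref{thm:pc_oracle}. Split on the event $E=\{|\widehat\alpha-\alpha^*|\le\alpha^*/2\}$. On $E$ the expansion holds uniformly. On $E^c$, bound the conditional MSE polynomially in $\widehat\alpha$ and $\widehat\alpha^{-1}$; the square of $f$-differences divided by $h^2$ is bounded locally by boundedness of the gradient, together with the moment condition. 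Then show $P(E^c)$ decays fast enough, via Chebyshev/Markov on the $L^2$ pilot errors, or by higher-moment bounds if needed. A safeguard truncation of $\widehat\alpha$ to a compact interval, if the original proof uses one, would make this step routine. Summing the contributions yields both the limit and the stated $O(n^{-\kappa(\beta,\gamma)})$ remainder.
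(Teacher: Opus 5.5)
Your proposal is correct and follows essentially the same route as the paper's proof. The shared ingredients are: conditioning on the pilot data and applying Proposition~\ref{pro_SP} with explicit remainders (higher-order bias, the $O(n_2^{-1})$ random-direction variance, and the local-variance term); pilot rates $\mathbb E[(\widehat G_i-G_i)^2]=O(h_1^4)+O(1/(mh_1^6))$, obtained from the pathwise cancellation of the gradient term under the shared direction; bias $O(h_1^s+h_1^2)$ and variance $O(1/m)$ for $\widehat\sigma^2(\bm x_0)$; the first-order condition at the oracle constant, which makes the plug-in loss quadratic; and tail control via an asymptotically inactive truncation of the plug-in constant. Your explicit event split with Chebyshev bounds on the pilot $L^2$ errors is just a more concrete version of the paper's tail-control convention. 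One phrasing point: the $\widehat G_i$ rates you quote are already squared errors, so only the $\widehat\sigma^2(\bm x_0)$ bias terms are actually doubled.
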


Theorem~\ref{thm:pcsp} demonstrates that the pilot-calibration framework extends beyond coordinate-wise finite differences to randomized simultaneous perturbations. In this setting, the pilot stage estimates the curvature aggregate \(B_{\mathrm{sp}}\) and the local variance \(\sigma^2(\bm x_0)\), which jointly determine the oracle perturbation scale for the simultaneous-perturbation estimator. The result parallels the single-dimensional analysis. The consistency of the pilot estimators ensures first-order oracle efficiency, while the second-order remainder quantifies the effects of pilot approximation bias, pilot stochastic error, budget splitting, and the higher-order bias and variance terms arising in the deployment stage.

\begin{algorithm}[!h]
\caption{Simultaneous-Perturbation PC-CFD for Multi-Dimensional Gradients}
\label{alg:pcsp}
\textbf{Input}: target point \(\bm x_0\), total budget \(n\), pilot budget \(n_1\), pilot perturbation scales \(h_1\neq h_2>0\).\\
Set \(m\leftarrow n_1/2\), \(n_2\leftarrow n-n_1\).\\ 
\textbf{Pilot estimation stage}:\\ 
\For{\(j=1,\ldots,m\)}{ 
Generate an independent Rademacher direction \(\bm\Delta_j = (\delta_{1,j},\ldots,\delta_{p,j})'.\)\\
Compute \( y(h_r\bm\Delta_j) \leftarrow \frac{\hat f(\bm x_0+h_r\bm\Delta_j) - \hat f(\bm x_0-h_r\bm\Delta_j)}{2},\ r=1,2. \) 
} 
\For{\(i=1,\ldots,p\)}{
Estimate \(G_i\) by \( \widehat G_i \leftarrow \frac{1}{m} \sum_{j=1}^{m} \frac{6\left(h_2 y(h_1\bm \Delta_j)-h_1 y(h_2\bm \Delta_j)\right)\delta_{i,j}}{h_2 h_1^3-h_1 h_2^3 }. \) 
} 
Set \( \widehat B_{\mathrm{sp}} \leftarrow \frac{1}{6}\left( \sum_{i=1}^p\widehat G_i^2 \right)^{1/2}. \)\\
Compute \( \bar{y}(h_1) = \frac{1}{m}\sum_{j=1}^m y(h_1\bm \Delta_j)\), \(\widehat\sigma^2(\bm x_0) \leftarrow \frac{2}{m-1}\sum_{j=1}^{m}(y(h_1\bm \Delta_j)-\bar{y}(h_1))^2. \)\\ 
\textbf{Calibrated deployment stage}:\\
Set the calibrated perturbation scale \( \widehat h \leftarrow \left( \frac{ p\widehat\sigma^2(\bm x_0) }{ 4n_2\widehat B_{\mathrm{sp}}^2 } \right)^{1/6}. \) \\
\For{\(j=1,\ldots,n_2\)}{ 
Generate a new independent Rademacher direction \(\bm\Delta_j = (\delta_{1,j},\ldots,\delta_{p,j})'.\) 
} 
Set \[ \widehat{\bm\theta}_{\mathrm{SPPC}} \leftarrow \frac{1}{n_2} \sum_{j=1}^{n_2}\frac{
\hat f(\bm x_0+\widehat h\bm\Delta_j)
-
\hat f(\bm x_0-\widehat h\bm\Delta_j)
}
{2\widehat h}
\phi(\bm\Delta_j). \]\\
\Return \( \widehat{\bm\theta}_{\mathrm{SPPC}} \) as an estimate of \(\nabla f(\bm x_0)\).
\end{algorithm}

\subsection{Comparison of Coordinate-Wise and Simultaneous-Perturbation PC-CFD Estimators}
Under the same total simulation budget, the coordinate-wise and simultaneous-perturbation PC-CFD estimators have oracle first-order MSEs of the same functional form. Specifically,
\[ \text{MSE}_{\mathrm{CW,opt}} = 3\left( \frac{p\sigma^2(\bm x_0)B_{\mathrm{cw}}}{4n} \right)^{2/3} (1+o(1)), \]
where \[ B_{\mathrm{cw}} = \left( \sum_{i=1}^p \left( \frac{\nabla_{iii}^3 f(\bm x_0)}{6} \right)^{2/3} \right)^{3/2}, \]
whereas
\[ \text{MSE}_{\mathrm{SP,opt}} = 3\left( \frac{p\sigma^2(\bm x_0)B_{\mathrm{sp}}}{4n} \right)^{2/3} (1+o(1)), \]
with \[ B_{\mathrm{sp}} = \frac{1}{6} \left( \sum_{i=1}^p G_i^2 \right)^{1/2}, \qquad G_i = \nabla_{iii}^3 f(\bm x_0) + 3\sum_{k\ne i} \nabla_{ikk}^3 f(\bm x_0). \]
Thus, the oracle first-order MSE comparison between the two estimators is determined by \(B_{\mathrm{cw}}\) and \(B_{\mathrm{sp}}\). Because \(B_{\mathrm{sp}}\) depends on both pure and mixed third-order derivatives, the mixed-derivative terms may either amplify or offset the pure third-order derivative terms. Consequently, neither curvature constant uniformly dominates the other, and neither estimator is universally superior in terms of oracle first-order MSE.

A sufficient condition under which simultaneous-perturbation PC-CFD has no larger oracle first-order MSE is
\[ \left| \nabla_{iii}^3 f(\bm x_0) + 3\sum_{k\ne i} \nabla_{ikk}^3 f(\bm x_0) \right| \leq \left| \nabla_{iii}^3 f(\bm x_0) \right|, \qquad i=1,\ldots,p. \]
Indeed, under this condition,
\begin{gather*}
B_{\mathrm{sp}} \leq \frac{1}{6} \left( \sum_{i=1}^p \left| \nabla_{iii}^3 f(\bm x_0) \right|^2 \right)^{1/2} \leq \frac{1}{6} \left( \sum_{i=1}^p \left( \nabla_{iii}^3 f(\bm x_0) \right)^{2/3} \right)^{3/2} = B_{\mathrm{cw}}. 
\end{gather*}
A notable special case occurs when all mixed third-order derivatives vanish, i.e.,
\[
\nabla_{ikk}^3 f(\bm x_0)=0,
\qquad i\ne k.
\]
Simultaneous-perturbation PC-CFD may then be particularly attractive because each pair of function evaluations contributes simultaneously to the estimation of all gradient components. 

The two estimators also differ in how the pilot and deployment samples are used. Coordinate-wise PC-CFD estimates the \(p\) pure curvature coefficients separately along the coordinate directions, and each deployment replication contributes to only one gradient component. By contrast, simultaneous-perturbation PC-CFD directly estimates the \(p\) curvature aggregates \(G_1,\ldots,G_p\) using the same set of randomized directions, and each deployment replication contributes to all gradient components. In particular, the simultaneous-perturbation pilot does not require the individual estimation of the \(p(p-1)\) mixed third-order derivative terms. Nevertheless, its curvature estimates are affected by the variability of the randomized directions and combine the effects of multiple pure and mixed third-order derivatives. The preferred estimator therefore depends on both the third-order derivative structure and the benefit of sharing function evaluations across gradient components. Simultaneous-perturbation PC-CFD can be advantageous when the mixed-derivative effects are weak or offsetting and evaluation sharing is valuable, whereas coordinate-wise PC-CFD provides a more decoupled and transparent calibration procedure when the mixed third-order structure is unfavorable or difficult to exploit.

\section{Numerical Results}\label{sec:NR}
This section evaluates the finite-sample performance of the proposed pilot-calibrated finite-difference estimators. The numerical study is organized around two properties that motivate the methodology. The first is \emph{robustness}: an estimator should perform consistently well across heterogeneous problem instances, whereas a conventional finite-difference estimator with a prescribed perturbation constant may perform well for some instances but poorly for others. The second is \emph{near-oracle performance}: the MSE of PC-CFD should be close to the oracle benchmark, namely an estimator that uses the instance-specific optimal perturbation size based on the oracle model quantities.

We first study six single-dimensional test functions for which the oracle perturbation size can be computed, thereby allowing us to examine both robustness and proximity to the oracle benchmark. We then consider an \(M/M/1\) queueing system, where the relevant curvature and variance quantities are not assumed to be known. Finally, we investigate multi-dimensional gradient estimation using \(M/M/p\) queueing systems and a deep-neural-network gradient-checking example. These experiments compare coordinate-wise PC-CFD and simultaneous-perturbation PC-CFD with conventional finite-difference estimators based on prescribed perturbation constants. For every reported MSE, we use 10,000 independent experimental replications. For all MSE comparisons, the competing methods use the same total simulation budget; the neural-network experiment instead reports the budget required to reach a common accuracy threshold. 

\subsection{Single-Dimensional Experiments}

\subsubsection{Benchmark Estimators and Test Functions}
For the six analytical examples, PC-CFD is compared with the following benchmarks:
\begin{itemize}
    \item \textbf{Oracle-CFD}: This oracle estimator knows \(B\) and \(\sigma^2(x_0)\), uses
    \[
    \delta
    =
    \left(
    \frac{\sigma^2(x_0)}{4B^2}
    \right)^{1/6}
    n^{-1/6},
    \]
    and averages \(n\) independent single-run CFD replications.

    \item \textbf{CFD-\(\boldsymbol{ k }\), \(k=1,2,3\)}: These conventional estimators use prescribed surrogate values
    \((\widetilde B_k,\widetilde\sigma_k^2(x_0))\) and set
    \[
    \delta
    =
    \left(
    \frac{\widetilde\sigma_k^2(x_0)}
    {4\widetilde B_k^2}
    \right)^{1/6}
    n^{-1/6}.
    \]
    The surrogate values are fixed in advance and are not adapted to the particular evaluation point.
\end{itemize}

For all six functions, the simulation observation is
\[
\hat f(x)=f(x)+\varepsilon,
\qquad
\varepsilon\sim N(0,0.05),
\]
so that \(\sigma^2(x_0)=0.05\). The MSE is reported over a range of target points \(x_0=x\), which generates heterogeneous curvature values and therefore tests whether a method remains robust when the same prescribed perturbation rule is used across different instances.

\paragraph{Function 1.} Consider
\(
f(x)=1-6x+36x^2-53x^3+22x^5
\)
as in \citet{de2011nonparametric,de2015smoothed}. In this case,
\(
B=220x^2-53.
\)
The three fixed-tuning benchmarks use
\(
(\widetilde B_1,\widetilde B_2,\widetilde B_3)=(5,10,20)\), and \(
\widetilde\sigma_k^2(x_0)=0.05.
\)
Figure~\ref{fig.1} illustrates both main properties of PC-CFD. Across the considered values of \(x\), its MSE generally remains close to that of Oracle-CFD. The fixed-tuning estimators can be competitive around points at which their prescribed curvature value happens to be close to the oracle \(B\), but their performance deteriorates substantially elsewhere. In particular, PC-CFD performs better than the fixed-tuning estimators when \(x<0.3\) or \(x>0.6\), and at \(x=1\) the MSE of CFD-1 is more than ten times that of PC-CFD. These illustrate the comparatively much more robust performance of PC-CFD against fixed-tuning estimators across the full range of instances while remaining close to the oracle benchmark.

\paragraph{Function 2.}
Consider
\(
f(x)=\sin(2\pi x)
\)
as in
\citet{de2011nonparametric,de2015smoothed,dai2016optimal,mahmoud2020robust}.
Here, $B=-4\pi^{3}\cos(2\pi x)/3$. The fixed-tuning benchmarks use
\(
(\widetilde B_1,\widetilde B_2,\widetilde B_3)=(3,6,9)\),
and
\( \widetilde\sigma_k^2(x_0)=0.05.
\)
Like Figure~\ref{fig.1}, Figure~\ref{fig.2} illustrates the robust performance of PC-CFD, but it also highlights the transition from a small-budget regime to the asymptotic regime. When \(n=40\), the pilot estimates are based on a limited number of observations, and PC-CFD does not yet dominate every fixed-tuning benchmark. This can be attributed to the finite-sample cost of calibration. As the budget increases, the pilot estimates become more accurate, the pilot fraction becomes relatively small, and PC-CFD moves toward Oracle-CFD. At \(n=500\), PC-CFD performs better than all three fixed-tuning estimators across the considered
instances. Note that, even though PC-CFD does not dominate every fixed-tuning benchmark in the small-sample case, a priori we would not know if a specific fixed-tuning is a ``good" choice or not, and thus, even in this case, PC-CFD is a safer choice than using fixed-tuning.

\begin{figure}[!h]
	\centering
	\begin{subfigure}[b]{0.42\textwidth}
		\centering
		\includegraphics[width=\textwidth]{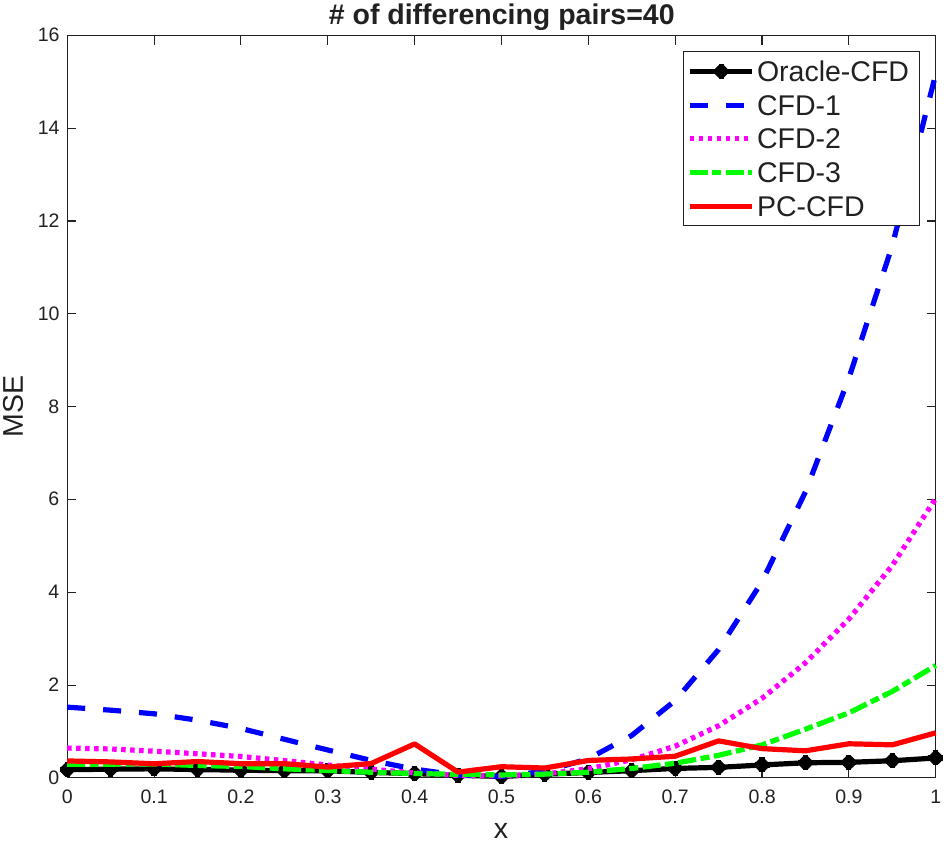}
	\end{subfigure}
	\hfill
	\begin{subfigure}[b]{0.42\textwidth}
		\centering
		\includegraphics[width=\textwidth]{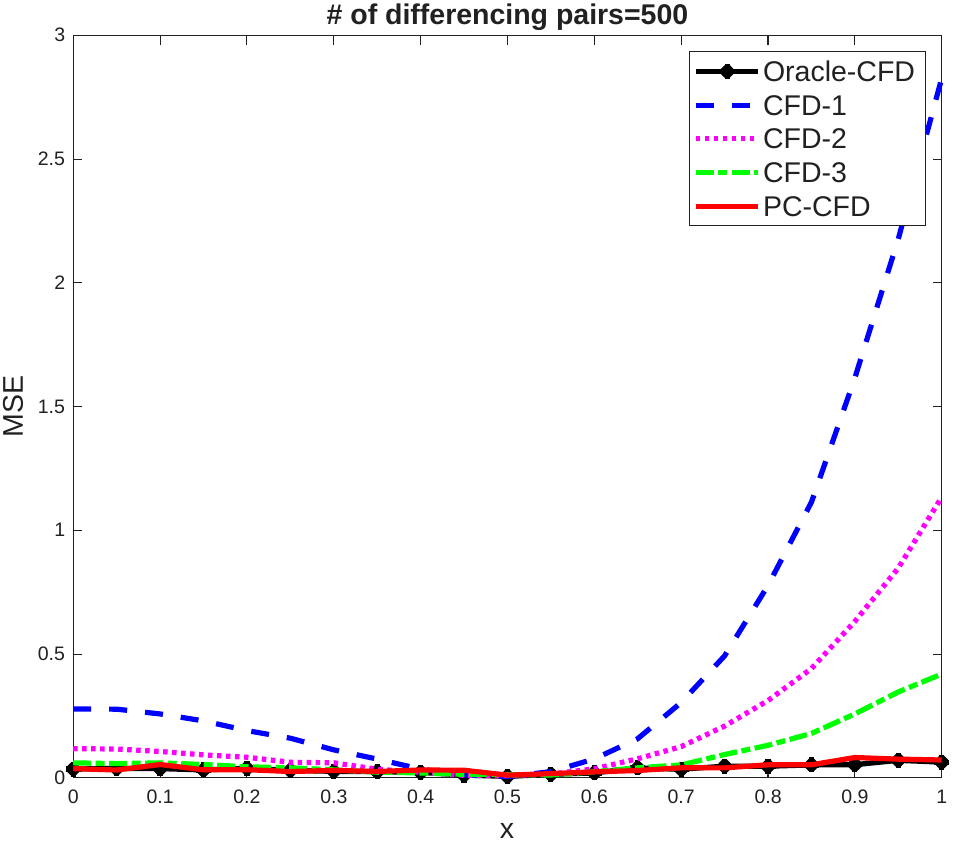}
	\end{subfigure}
	\caption{Empirical MSEs of PC-CFD, Oracle-CFD, and the fixed-tuning CFD estimators for Function 1.}
	\label{fig.1}
\end{figure}

\begin{figure}[!h]
	\begin{subfigure}[b]{0.42\textwidth}
		\centering
		\includegraphics[width=\textwidth]{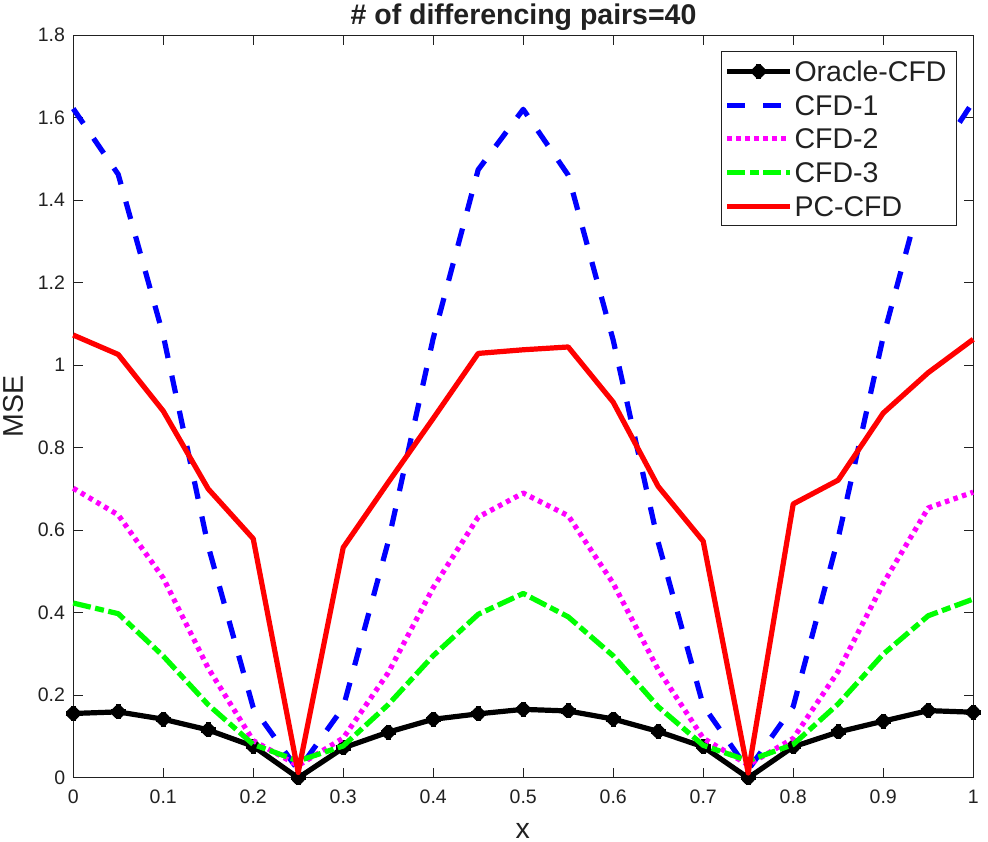}
	\end{subfigure}
	\hfill
	\begin{subfigure}[b]{0.42\textwidth}
		\centering
		\includegraphics[width=\textwidth]{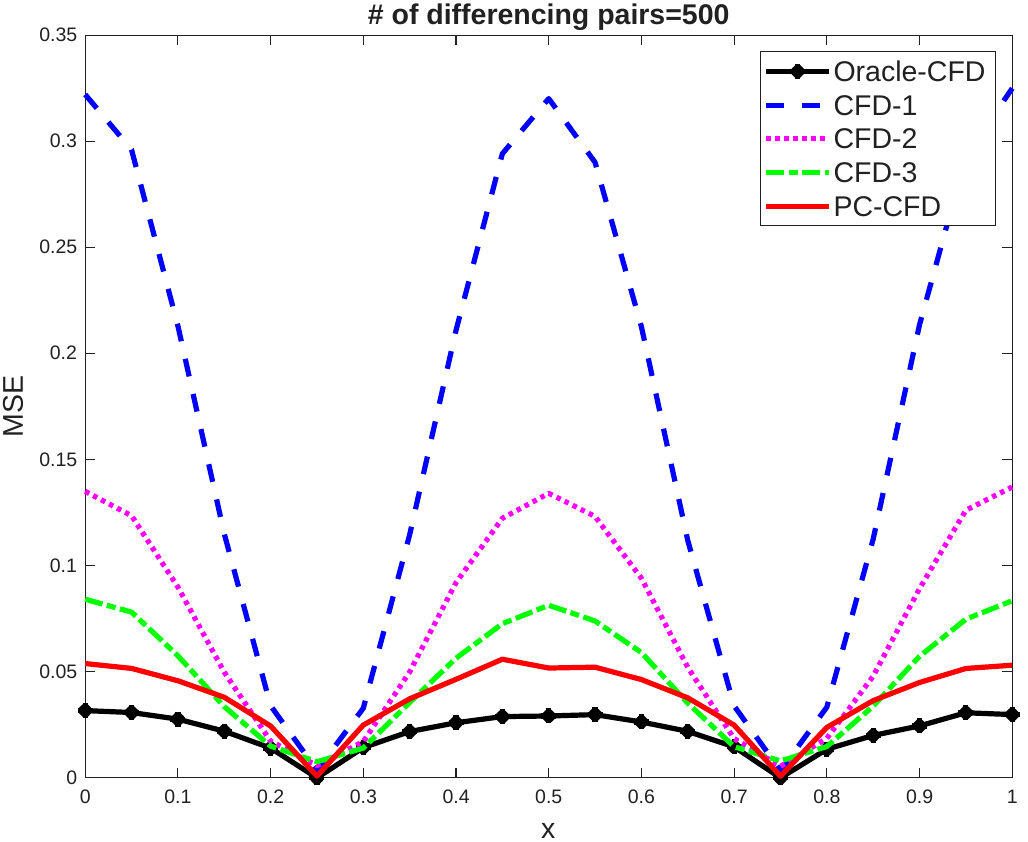}
	\end{subfigure}
	\caption{Empirical MSEs of PC-CFD, Oracle-CFD, and the fixed-tuning CFD estimators for Function 2.}
	\label{fig.2}
\end{figure}

\paragraph{Function 3.}
Consider \( f(x)=1+x\sin(x^2) \) as in \citet{de2011nonparametric}. In this case,
$B=\cos(x^{2})-4x^{2}\sin(x^{2})-3x^{4}\cos(x^{2})/2$. The fixed-tuning benchmarks use
\(
(\widetilde B_1,\widetilde B_2,\widetilde B_3)=(10,15,20)\),
and
\( \widetilde\sigma_k^2(x_0)=0.05.
\)
As shown in Figure~\ref{fig.3}, some prescribed choices are competitive over portions of the domain, but they are sensitive to changes in the oracle curvature. PC-CFD performs better than the fixed-tuning estimators when \(x>3\) and avoids the sharp deterioration exhibited by the conventional estimators near \(x=4\). When \(x\leq3\), PC-CFD remains competitive even when a prescribed choice happens to be locally favorable. Like in the previous examples, this pattern illustrates the robustness of PC-CFD in that it sacrifices little in favorable instances while protecting against large losses in unfavorable ones.

\paragraph{Function 4.}
Consider \(f(x)=x+2e^{-16x^2}\) as in \citet{mahmoud2020robust}. Here, $B=(1024x-32768x^{3}/3)e^{-16x^{2}}$. The fixed-tuning benchmarks use
\(
(\widetilde B_1,\widetilde B_2,\widetilde B_3)=(3,6,9)\),
and
\( \widetilde\sigma_k^2(x_0)=0.05.
\)
Figure~\ref{fig.4} highlights the robustness of PC-CFD across heterogeneous problem instances. Although a prescribed perturbation constant may perform well over part of the considered range, its performance deteriorates substantially when the underlying curvature changes, and there is no prior information indicating which fixed choice will be appropriate. By adapting the perturbation size through pilot calibration, PC-CFD maintains a more stable MSE profile over \(x\in[0,0.8]\). In particular, its worst-case MSE is approximately one half of that of CFD-3 and one tenth of that of CFD-1. These results show that PC-CFD provides substantially stronger protection against the performance instability caused by instance-specific variation.

\paragraph{Function 5.}
Consider \( f(x)=\left[\sin(2x^3)\right]^3 \) as in \citet{mahmoud2020robust}. The corresponding curvature coefficient is
$B=6(\sin(2x^{3}))^{2}\cos(2x^{3})+216x^{3}\sin(2x^{3})(\cos(2x^{3}))^{2}-108x^{3}(\sin(2x^{3}))^{3}+216x^{6}(\cos(2x^{3}))^{3}-756x^{6}(\sin(2x^{3}))^{2}\cos(2x^{3})$. The fixed-tuning benchmarks use
\(
(\widetilde B_1,\widetilde B_2,\widetilde B_3)=(10,15,20)\),
and
\( \widetilde\sigma_k^2(x_0)=0.05.
\)
The results in Figure~\ref{fig.5} display a pattern similar to that in Function~3. PC-CFD performs better than the fixed-tuning estimators when \(x>0.8\), where the prescribed curvature values become poorly matched to the instance. For \(x\leq0.8\), its MSE remains controlled even when a fixed-tuning estimator is locally competitive. In contrast, the conventional estimators can incur a pronounced loss near \(x=1\). The experiment again demonstrates that PC-CFD is more robust across curvature regimes.

\begin{figure}[!h]
	\centering
	\begin{subfigure}[b]{0.42\textwidth}
		\centering
		\includegraphics[width=\textwidth]{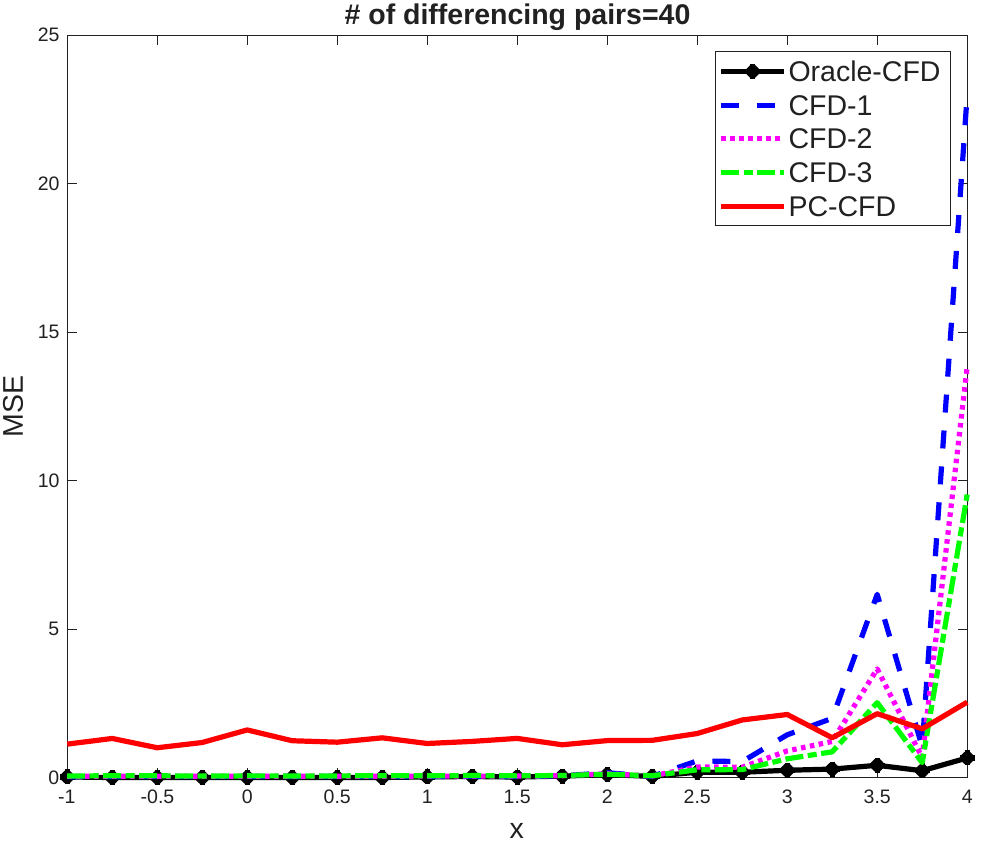}
	\end{subfigure}
	\hfill
	\begin{subfigure}[b]{0.42\textwidth}
		\centering
		\includegraphics[width=\textwidth]{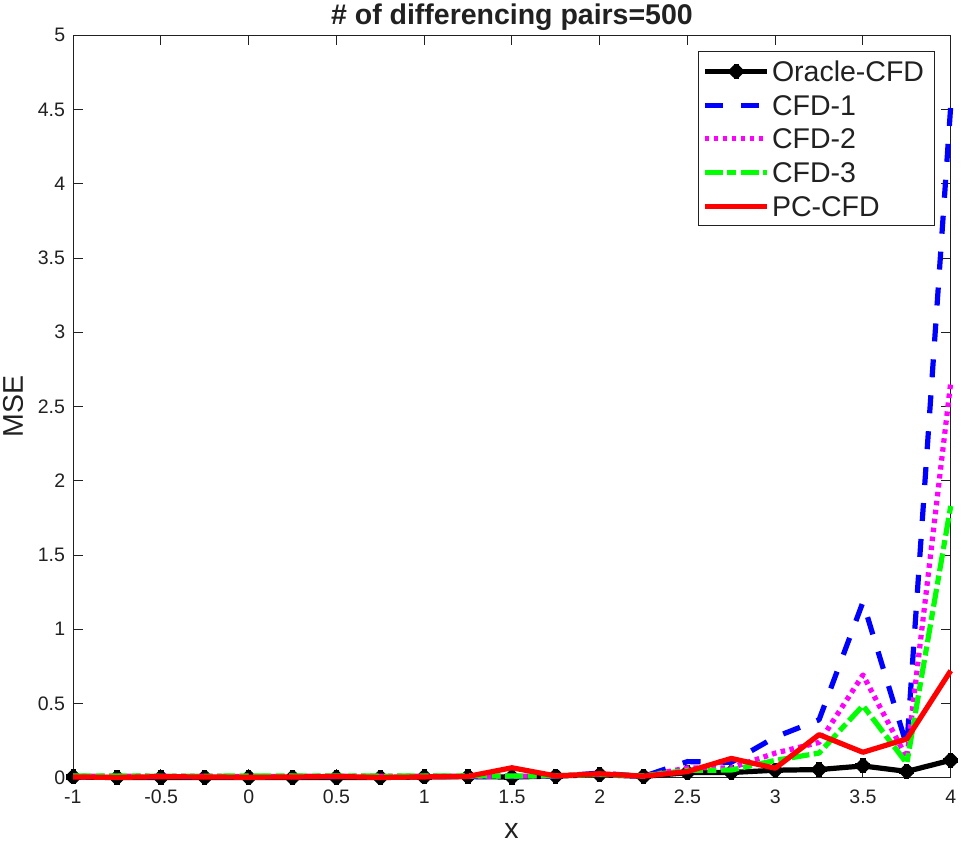}
	\end{subfigure}
	\caption{Empirical MSEs of PC-CFD, Oracle-CFD, and the fixed-tuning CFD estimators for Function 3.}
	\label{fig.3}
\end{figure}

\begin{figure}[!h]
	\centering
	\begin{subfigure}[b]{0.42\textwidth}
		\centering
		\includegraphics[width=\textwidth]{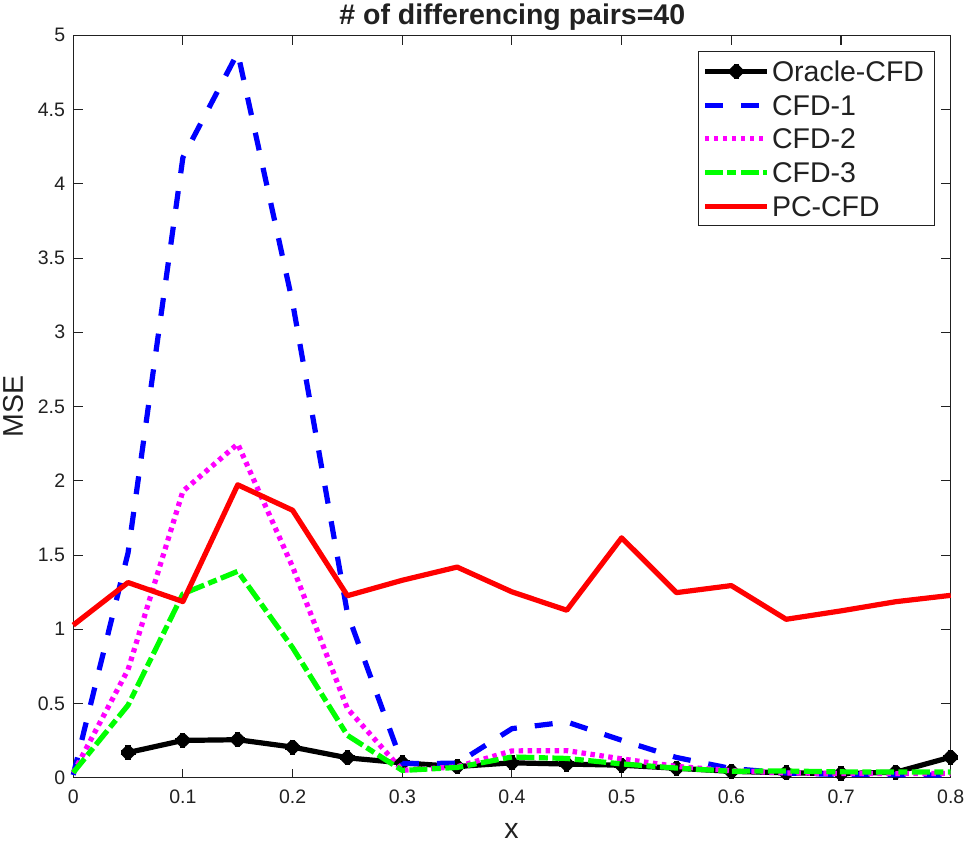}
	\end{subfigure}
	\hfill
	\begin{subfigure}[b]{0.42\textwidth}
		\centering
		\includegraphics[width=\textwidth]{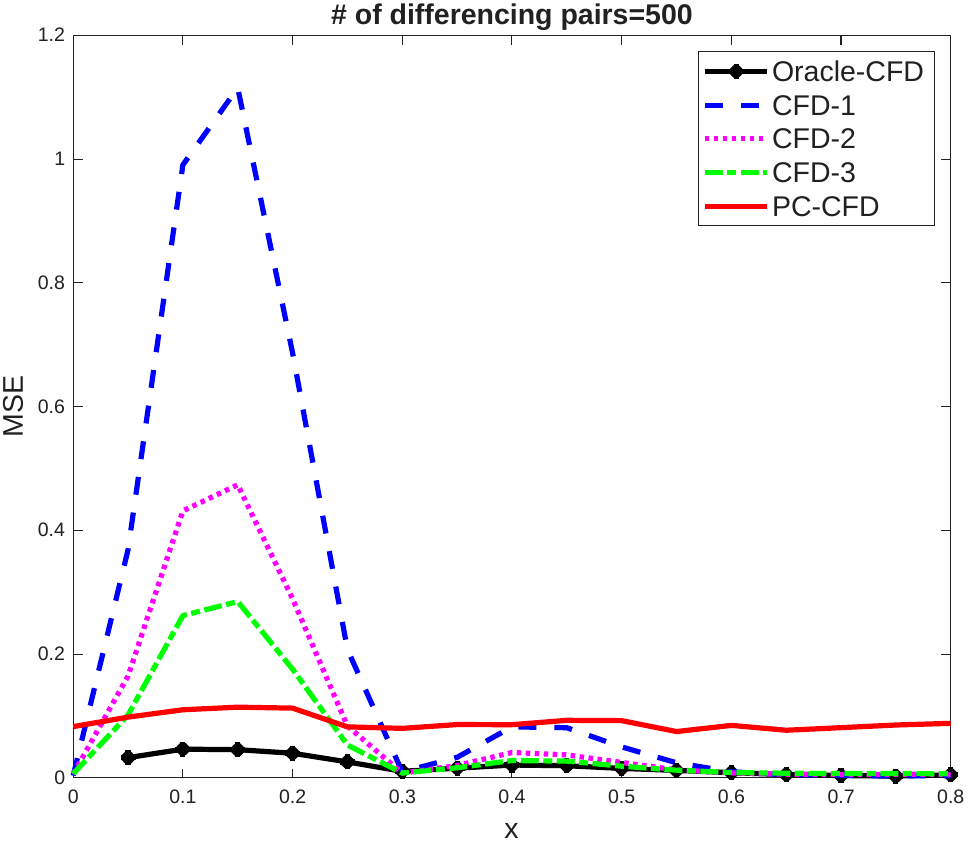}
	\end{subfigure}
	\caption{Empirical MSEs of PC-CFD, Oracle-CFD, and the fixed-tuning CFD estimators for Function 4.}
	\label{fig.4}
\end{figure}

\begin{figure}[!h]
	\centering
	\begin{subfigure}[b]{0.42\textwidth}
		\centering
		\includegraphics[width=\textwidth]{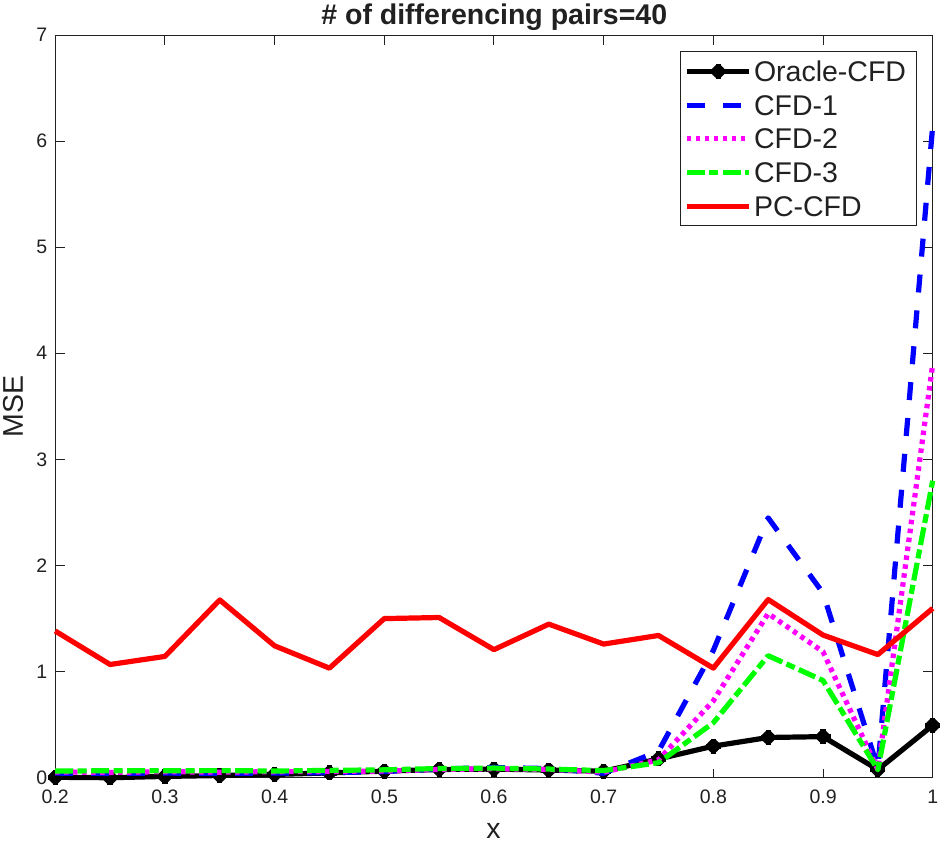}
	\end{subfigure}
	\hfill
	\begin{subfigure}[b]{0.42\textwidth}
		\centering
		\includegraphics[width=\textwidth]{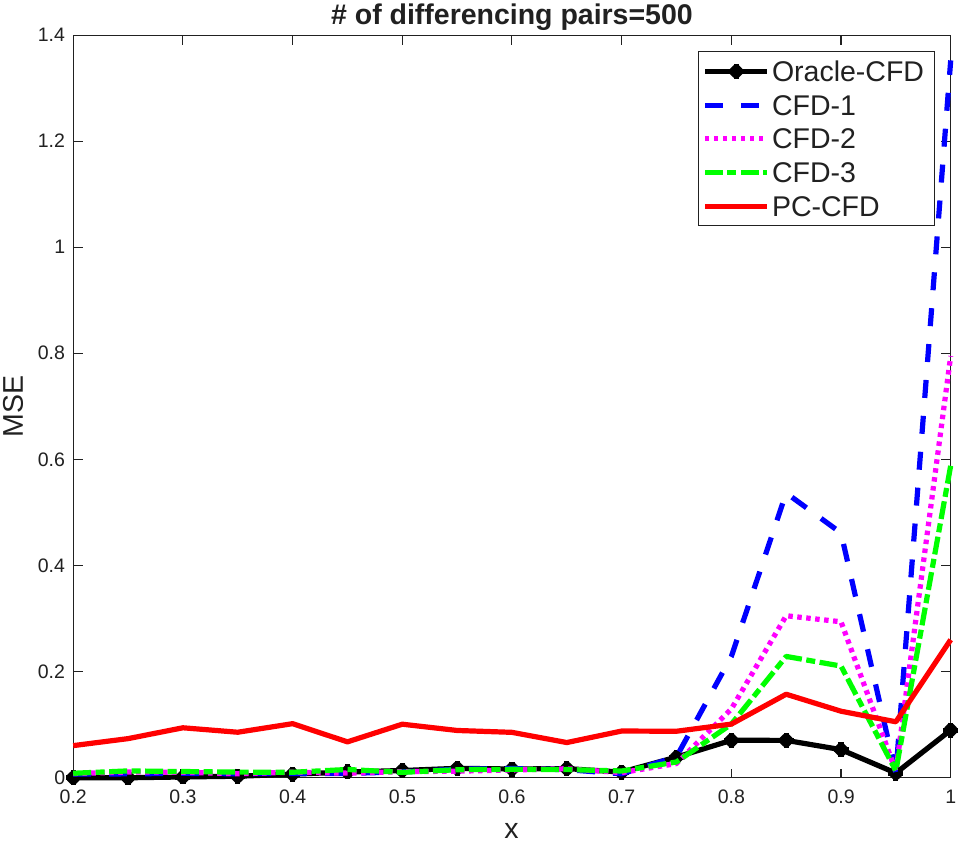}
	\end{subfigure}
	\caption{Empirical MSEs of PC-CFD, Oracle-CFD, and the fixed-tuning CFD estimators for Function 5.}
	\label{fig.5}
\end{figure}

\paragraph{Function 6.}
Consider \( f(x)=\left[1+e^{-20(x-0.5)}\right]^{-1} \)
as in \citet{mahmoud2020robust}. In this case, $B=\frac{4000}{3}\left(\frac{6e^{-60(x-0.5)}}{(1+e^{-20(x-0.5)})^{4}}-\frac{4e^{-40(x-0.5)}}{(1+e^{-20(x-0.5)})^{3}}-\frac{2e^{-20(x-0.5)}}{(1+e^{-20(x-0.5)})^{3}}+\frac{e^{-20(x-0.5)}}{(1+e^{-20(x-0.5)})^{2}}\right)$. The fixed-tuning benchmarks use
\(
(\widetilde B_1,\widetilde B_2,\widetilde B_3)=(10,15,20)\),
and
\( \widetilde\sigma_k^2(x_0)=0.05.
\)
Like Figure~\ref{fig.2}, Figure~\ref{fig.6} illustrates the robustness of PC-CFD but also the small-budget effect. At \(n=40\), even though CFD-1 and CFD-2 do not outperform PC-CFD, CFD-3 can do so. In this case, the pilot budget of PC-CFD is too limited to estimate the tuning quantities accurate enough to dominate CFD-3. Once the total budget increases, the advantage of adaptation becomes apparent. At \(n=500\), over \(x\in[0,0.9]\), the worst-case MSE of PC-CFD is less than one half of that of CFD-3 and one fourth of that of CFD-1. Note that, like our discussion for Figure~\ref{fig.2}, even though PC-CFD does not dominate every fixed-tuning benchmark in the small-sample case, we do not know a priori if a chosen fixed-tuning is ``good", and so PC-CFD still appears a safer choice than using fixed-tuning even in small sample.

\begin{figure}[!h]
	\centering
	\begin{subfigure}[b]{0.42\textwidth}
		\centering
		\includegraphics[width=\textwidth]{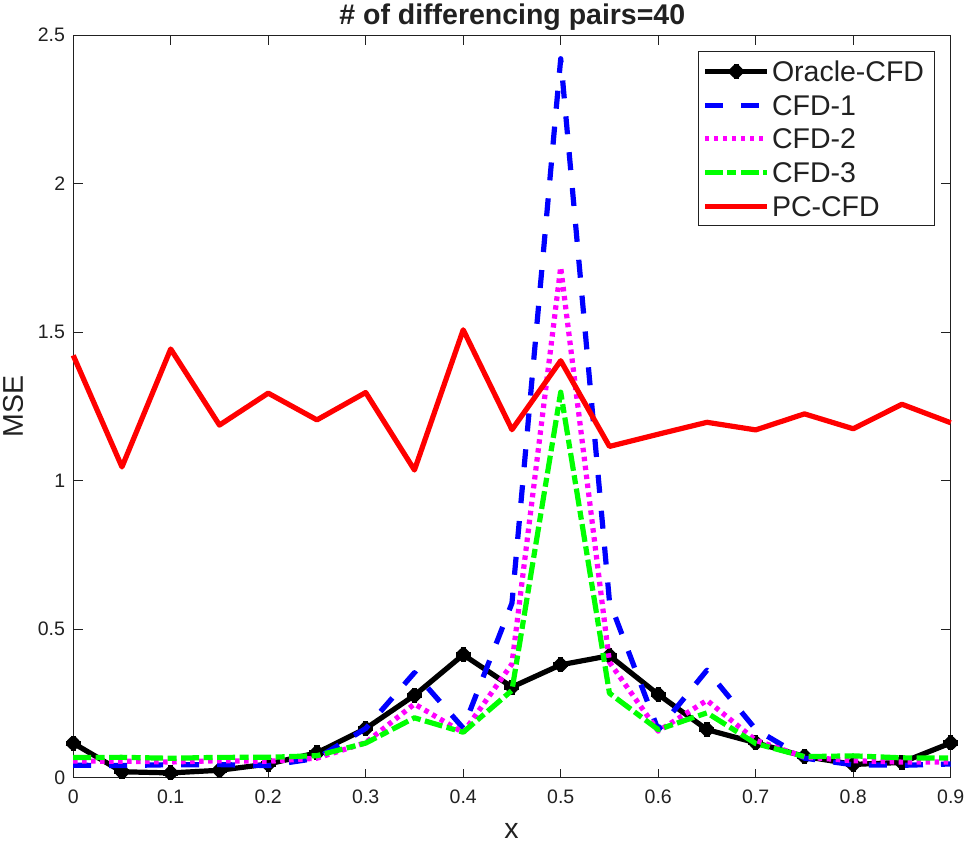}
	\end{subfigure}
	\hfill
	\begin{subfigure}[b]{0.42\textwidth}
		\centering
		\includegraphics[width=\textwidth]{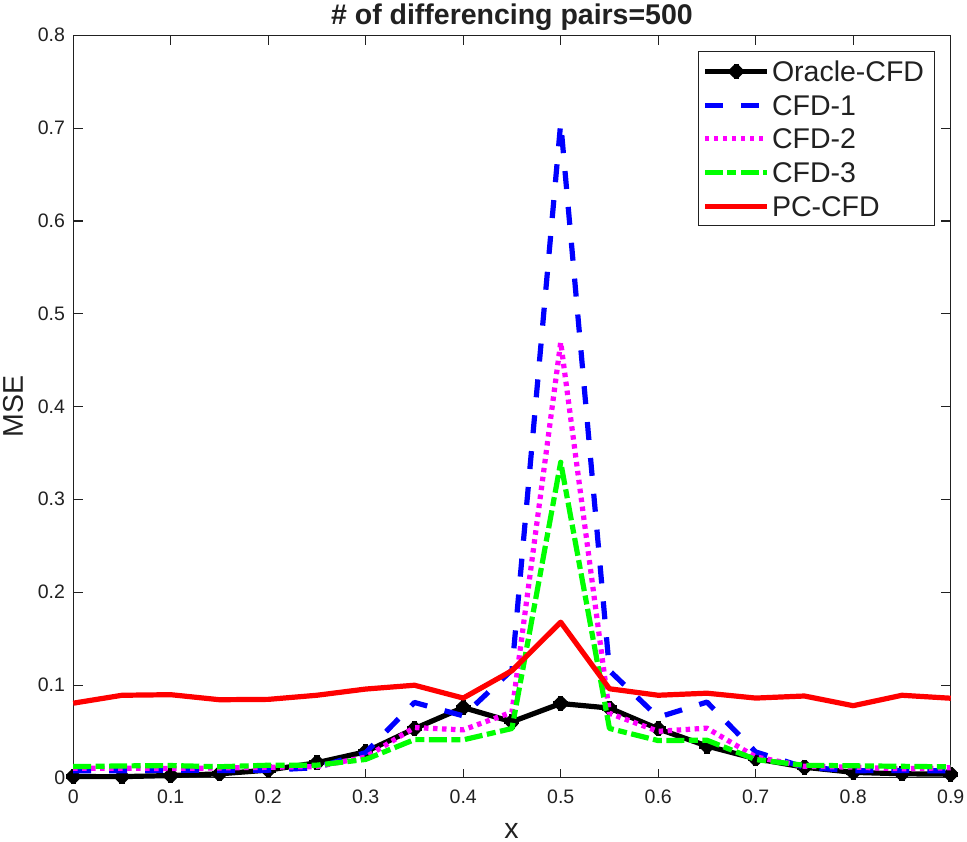}
	\end{subfigure}
	\caption{Empirical MSEs of PC-CFD, Oracle-CFD, and the fixed-tuning CFD estimators for Function 6.}
	\label{fig.6}
\end{figure}

Taken together, the numerical results in Figures~\ref{fig.1}--\ref{fig.6} consistently support the two central claims of the paper. First, pilot calibration improves robustness: PC-CFD avoids the large instance-dependent deterioration that can arise from a prescribed perturbation constant and performs reliably across functions and evaluation points. Second, PC-CFD approaches the performance of the oracle benchmark as the budget increases. The results also clarify the role of small samples. Under a very limited budget, because of the noise of pilot estimation, sometimes PC-CFD cannot uniformly dominate a particular fixed-tuning estimator. However, a priori we would not know if this favorable fixed-tuning estimator is selected, and thus even in this situation, PC-CFD appears still a safe choice. At the same time, when the budget increases, PC-CFD is consistently more robust than all considered fixed-tuning estimators. Overall, the experiments show that pilot calibration provides a practically robust alternative to ad hoc finite-difference tuning, and moreover achieves near-oracle efficiency.

\subsubsection{\(M/M/1\) Queueing System}
We next consider an \(M/M/1\) queueing system. The simulation output is the total system time of the first 10 customers, and the quantity of interest is the derivative of its expectation with respect to the arrival rate. The oracle derivative is \(0.946\) when both the arrival and service rates are equal to \(4\). Unlike the analytical functions above, the curvature coefficient \(B\) and the local variance \(\sigma^2(x_0)\) are not available to create the oracle benchmark. This experiment therefore focuses solely on assessing robustness relative to fixed-tuning CFD estimators.

The fixed-tuning estimators use
\(
(\widetilde B,\widetilde\sigma^2(x_0))
\in
\{(5,1),(10,1),(20,1)\}
\)
for CFD-1, CFD-2, and CFD-3, respectively. Figure~\ref{fig.7} reports the results. At \(n=1000\), PC-CFD attains an MSE of \(0.13\), whereas the reported MSEs of the fixed-tuning estimators are \(0.36\) or larger. Thus, relative to an MSE of \(0.36\), pilot calibration reduces the MSE by approximately \(64\%\). The experiment is more challenging than the six analytical examples because the simulation variance changes with the input. The favorable performance therefore indicates that, as our theory entails, the robustness of PC-CFD extends beyond additive homoscedastic noise to a stochastic simulation system with input-dependent variance.

\begin{figure}[!h]
	\centering
	\includegraphics[width=0.42\textwidth]{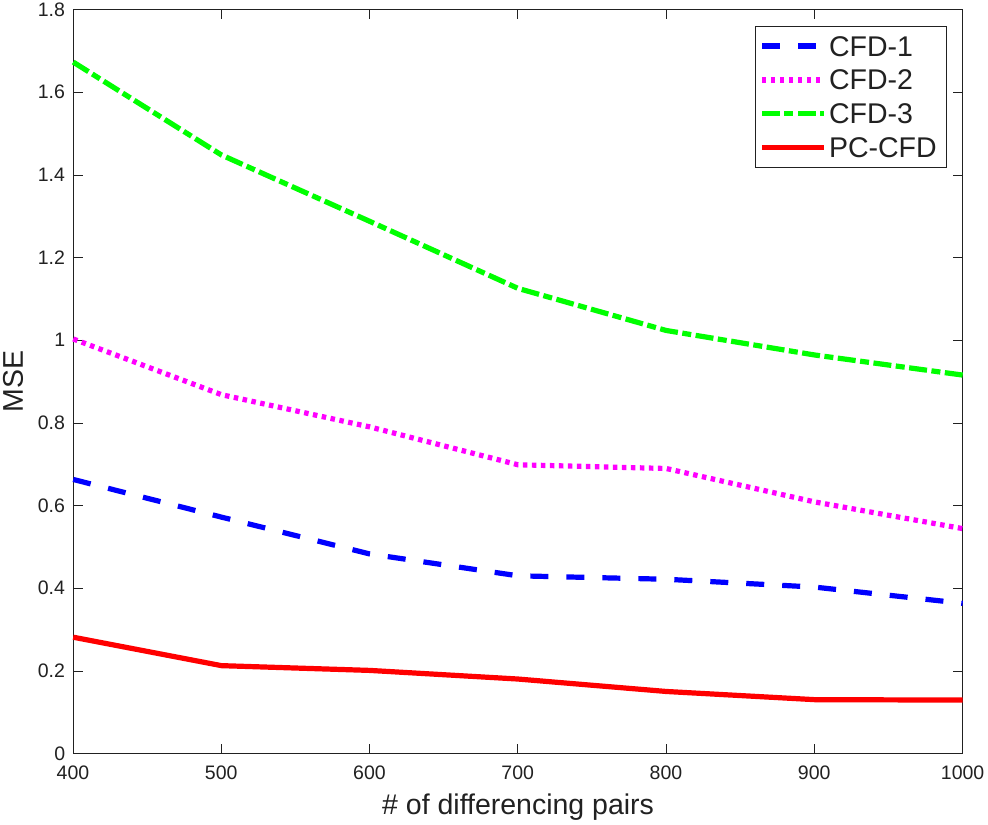}
	\caption{Empirical MSEs of PC-CFD and the fixed-tuning CFD estimators
    for the \(M/M/1\) queueing system.}
	\label{fig.7}
\end{figure}

\subsection{Multi-Dimensional Experiments}
We next evaluate the multi-dimensional extensions developed in Section~\ref{sec:multi}. The proposed estimators are coordinate-wise PC-CFD, which calibrates a separate coordinate-wise perturbation rule, and simultaneous-perturbation PC-CFD, which uses randomized directions and estimates the curvature aggregates \(G_1,\ldots,G_p\). The comparison is intended to assess whether pilot calibration remains effective in multi-dimensional stochastic systems and whether sharing function evaluations across gradient components is beneficial in the tested instances.

\subsubsection{\(M/M/p\) Queueing Systems}
We consider \(M/M/p\) queueing systems with heterogeneous server rates. The simulation output is the total system time of the first 50 customers, and the target is the gradient of its expectation with respect to the vector of service rates. The system configurations and reference gradients are reported in Table~\ref{tab:1}.
\begin{table}[!h]
    \centering
    \small
    \begin{tabular}{|c|c|c|c|}
        \hline
        \(p\) & Arrival rate \(\lambda\)
        & Service rates \(\bm\mu=(\mu_1,\ldots,\mu_p)\)
        & Reference gradient \\
        \hline
        2 & 10 & \((5,9)\) & \((-2.140,-1.954)\) \\
        \hline
        3 & 10 & \((2,5,8)\) & \((-2.258,-1.580,-1.426)\) \\
        \hline
        4 & 10 & \((1,3,5,7)\) & \((-3.152,-1.593,-1.212,-1.098)\) \\
        \hline
        5 & 10 & \((1,2,3,4,5)\)
        & \((-3.641,-2.276,-1.894,-1.614,-1.397)\) \\
        \hline
    \end{tabular}
    \caption{Configurations and reference gradients for the \(M/M/p\)
    queueing experiments.}
    \label{tab:1}
\end{table}

Because the oracle curvature and variance quantities are unavailable, we focus on comparing coordinate-wise PC-CFD and simultaneous-perturbation PC-CFD with fixed-tuning coordinate-wise CFD estimators. The latter use $\delta=\left(\widetilde{\sigma}^{2}(x_{0})/(4\widetilde{B}^{2})\right)(n/p)^{-1/6}$
with
\(
\left(\widetilde{B},\widetilde{\sigma}^{2}(x_{0})\right)
\in
\{(5,1),(10,1),(20,1)\}
\)
for Coordinate-Wise CFD-1, Coordinate-Wise CFD-2, and Coordinate-Wise CFD-3, respectively. Figure~\ref{fig.8} shows that coordinate-wise PC-CFD consistently improves on the fixed-tuning coordinate-wise estimators in the tested systems. At \(n=5000\), its MSE is more than \(50\%\) lower than those of the reported fixed-tuning benchmarks. This cross-model stability is the multi-dimensional analogue of the robustness observed in the single-dimensional experiments.

\begin{figure}[!h]
	\centering
	\begin{subfigure}[b]{0.42\textwidth}
		\centering
		\includegraphics[width=\textwidth]{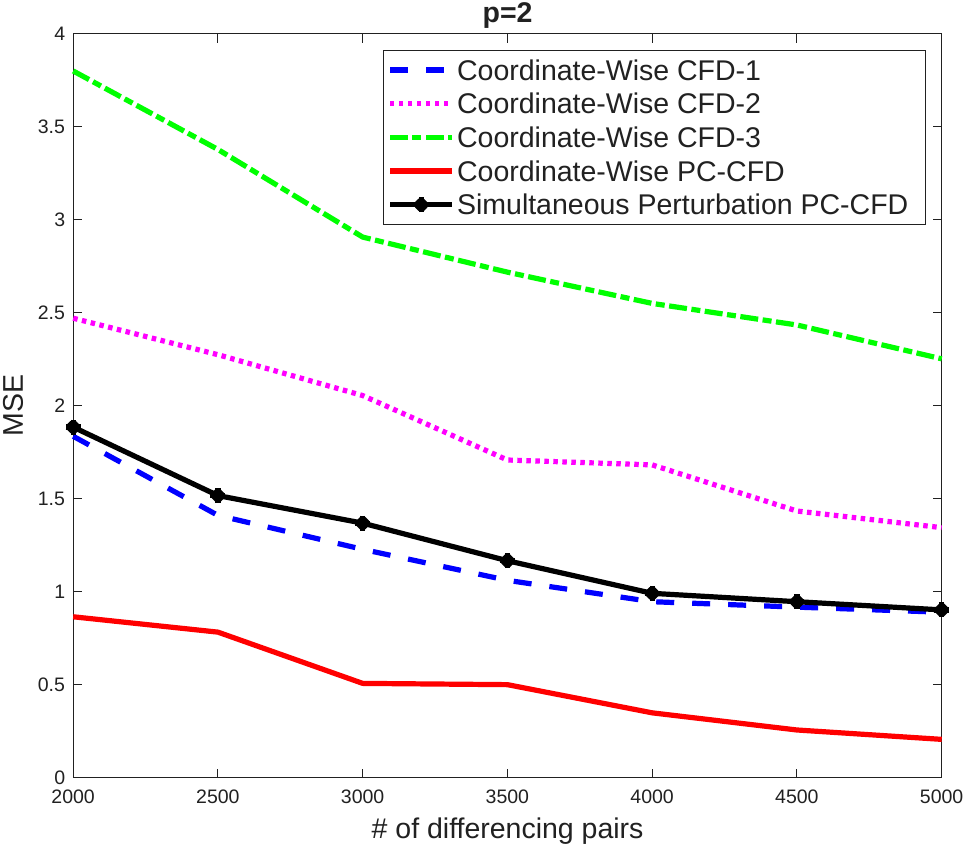}
	\end{subfigure}
	\hfill
	\begin{subfigure}[b]{0.42\textwidth}
		\centering
		\includegraphics[width=\textwidth]{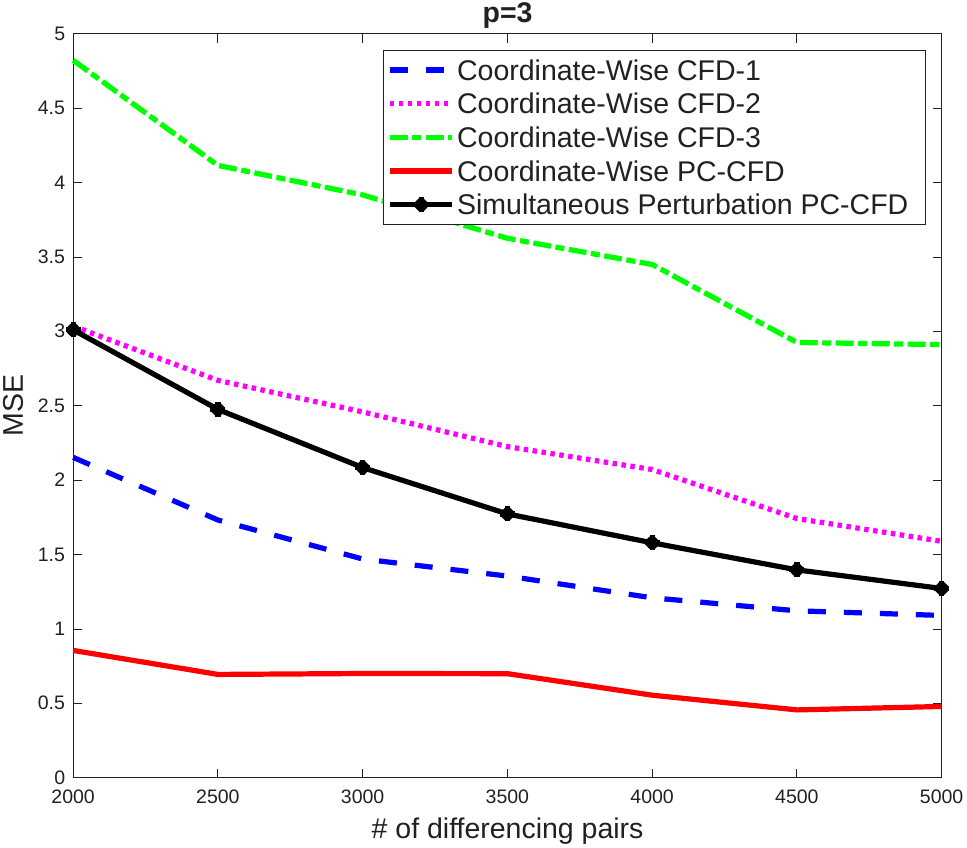}
	\end{subfigure}
	\hfill
	\begin{subfigure}[b]{0.42\textwidth}
		\centering
		\includegraphics[width=\textwidth]{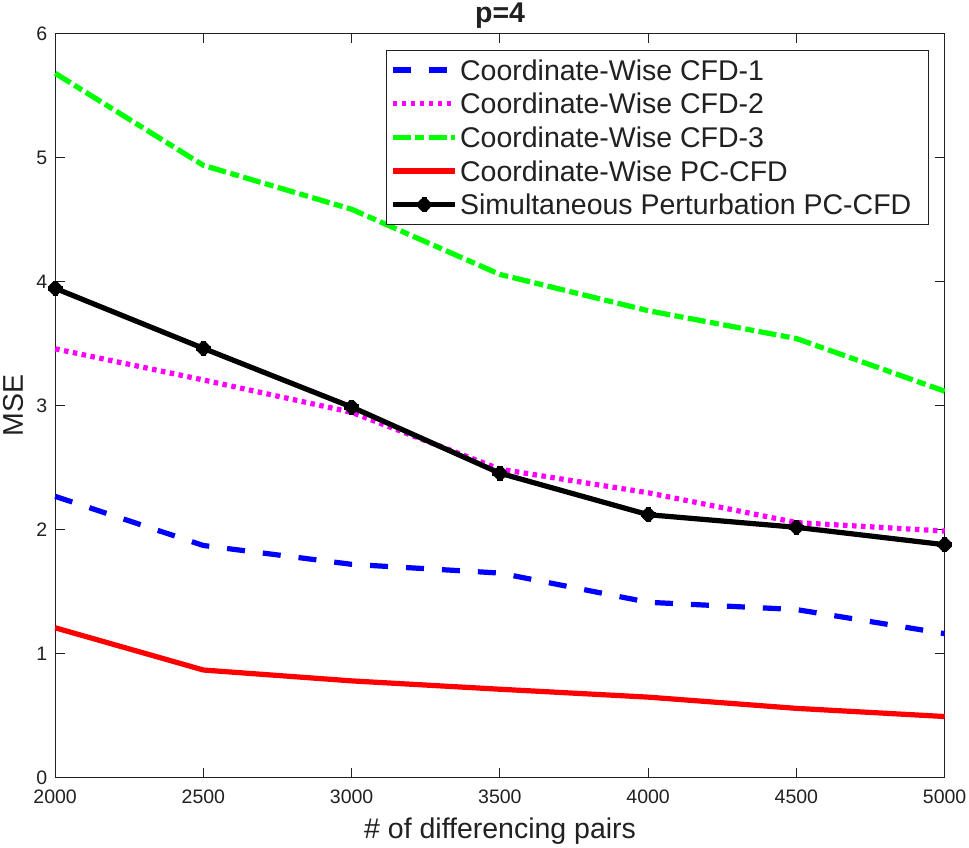}
	\end{subfigure}
	\hfill
	\begin{subfigure}[b]{0.42\textwidth}
		\centering
		\includegraphics[width=\textwidth]{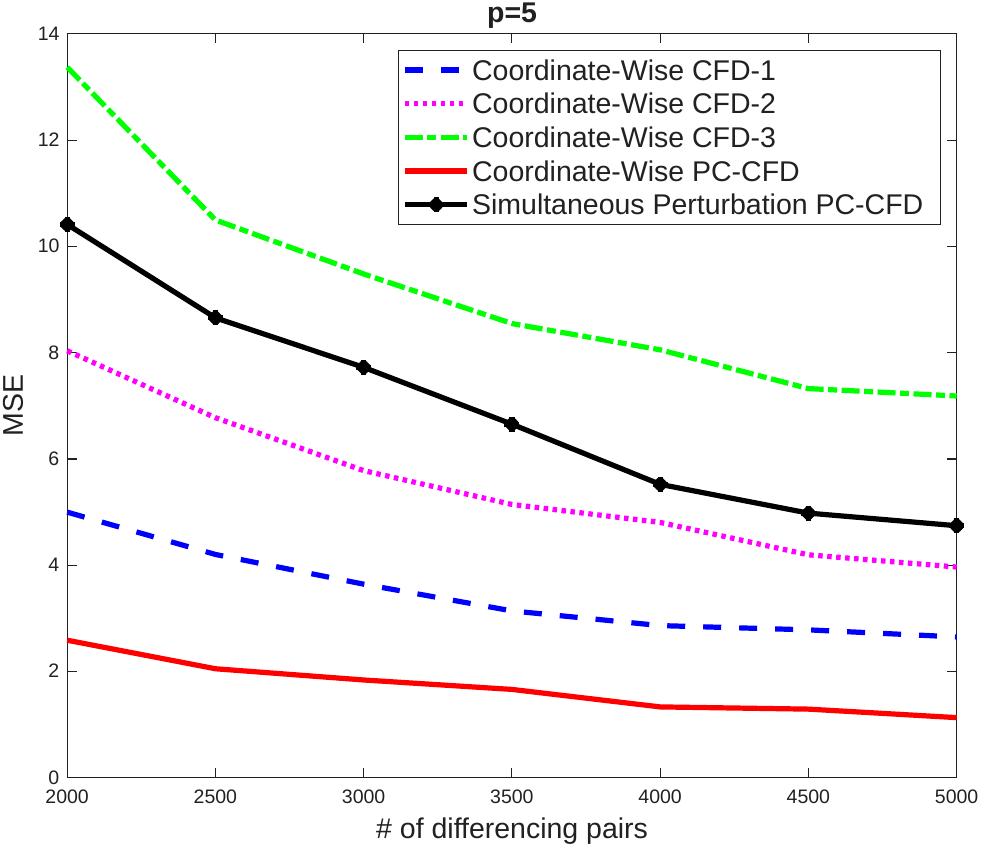}
	\end{subfigure}
	\caption{Empirical MSEs of all tested estimators for M/M/p queueing systems.}
	\label{fig.8}
\end{figure}

Coordinate-wise PC-CFD also performs better than simultaneous-perturbation PC-CFD in these particular queueing systems. Nonetheless, in all tested cases, simultaneous-perturbation PC-CFD still outperforms at least one fixed-tuning coordinate-wise estimator. To this end, since a priori we do not know if we have chosen a good fixed-tuning estimator, simultaneous-perturbation PC-CFD is arguably still safer than using fixed-tuning. Moreover, the relative performance between coordinate-wise and simultaneous-perturbation PC-CFD in this example should not be interpreted as universal. As shown in Section~\ref{sec:multi}, the oracle first-order MSE comparison is governed by \(B_{\mathrm{cw}}\) and \(B_{\mathrm{sp}}\). The observed results are
consistent with a derivative structure that favors the coordinate-wise curvature constant. As \(p\) increases, the MSE of simultaneous-perturbation PC-CFD deteriorates moderately, which may reflect the additional variability introduced by randomized directions and the combined pure and mixed third-order effects in \(G_1,\ldots,G_p\).

\subsubsection{Deep-Neural-Network Gradient Checking}
We finally consider gradient checking in a deep neural network trained for image classification on the MNIST dataset. The network has an input layer with \(784\) units corresponding to the \(28\times28\) image pixels, three hidden layers with \(256\), \(128\), and \(64\) units, and an output layer with \(10\) units. ReLU activations are used in the hidden layers, and a softmax activation is used in the output layer. The gradients generated by backpropagation are treated as the reference gradients.

Gradient checking compares the reference gradients with numerical finite-difference estimates. We declare the check successful when the relative gradient-estimation error is below \(10^{-2}\). Since the curvature and variance quantities are unavailable, we compare coordinate-wise PC-CFD and simultaneous-perturbation PC-CFD with the same three fixed-tuning specifications, \(
\left(\widetilde{B},\widetilde{\sigma}^{2}(x_{0})\right)
\in
\{(5,1),(10,1),(20,1)\}
\), used in the \(M/M/p\) experiments.

Table~\ref{tab:2} reports the number of differencing pairs required to satisfy the prescribed accuracy criterion. The three fixed-tuning estimators require between \(12{,}874\) and \(13{,}486\) pairs, whereas
coordinate-wise PC-CFD and simultaneous-perturbation PC-CFD require \(6{,}146\) and \(5{,}378\) pairs, respectively. Relative to the best reported fixed-tuning result, these values correspond to reductions of
approximately \(52\%\) and \(58\%\). Thus, both pilot-calibrated methods substantially reduce the simulation effort required to reach the target gradient-checking accuracy. In this example, sharing each function-value
pair across all gradient components makes simultaneous perturbation particularly effective.

\begin{table}[!h]
    \centering
    \small
    \begin{tabular}{cccccc}
        \hline
        & \multicolumn{3}{c}{Coordinate-Wise} 
        & Coordinate-Wise & Simultaneous Perturbation \\
        & CFD-1 & CFD-2 & CFD-3 & PC-CFD &  PC-CFD\\
        \hline
        \multirow{2}{*}{\# of differencing pairs}
        & \multirow{2}{*}{12,874} & \multirow{2}{*}{13,486} & \multirow{2}{*}{13,032} & \multirow{2}{*}{6,146} & \multirow{2}{*}{5,378} \\
        \\
        \hline
    \end{tabular}
    \caption{\# of differencing pairs required in the neural-network
    gradient-checking experiment.}
    \label{tab:2}
\end{table}

\section{Code and Data Disclosure}\label{sec:Code and Data Disclosure}
The code and data to support the numerical experiments in this paper can be found at \url{https://anonymous.4open.science/r/PC-CFD/}.

\section*{Acknowledgment}
Li's work was supported by grants from the National Natural Science Foundation of China (72595841, 72595840, 72192843), a grant from MOE Social Science Laboratory of Digital Economic Forecasts and Policy Simulation at UCAS, and MOE Social Sciences Innovative Group on Complex Systems Modeling in Economic Management in the Era of Digital Intelligence, University of Chinese Academy of Sciences. Lam's work was supported by the Columbia Dream Sports AI Innovation Award, the InnoHK Initiative, the Government of the HKSAR, and Laboratory for AI-Powered Financial Technologies. Peng's work was supported by
the National Natural Science Foundation of China under Grant 72325007, the Science and Technology Innovation Program of Hunan Province under Grant 2024RC7003, and Xiangjiang Laboratory under Grant 24XJJCYJ01001.

\bibliographystyle{unsrtnat}  
\bibliography{references}  

\section*{Proofs and Supplementary Materials}
\subsection*{Proof of Proposition~\ref{pro_general}.}
\proof
Note that
\begin{eqnarray*}
	\mathcal{R}_{G}^{*}
	&=&\lim_{n\to+\infty}\underset{\mathcal{P}\in\Omega_{\mathcal{P}},w_{i}>0}{\inf}\ \underset{\hat{f}(\cdot)\in\Omega_{\hat f}}{\sup}\mathbb{E}_{\mathcal{P}}\left[(\sum_{i=1}^{n}w_{i}-1)f^{(1)}(x_{0})n^{1/3}+B\sum_{i=1}^{n}w_{i}\alpha_{i}^{2}+\right.\\
	&&\left.\sum_{i=1}^{n}w_{i}R(\delta_{i})\alpha_{i}^{4}n^{-1/3}\right]^{2}+\mathbb{E}_{\mathcal{P}}\left[n\sum_{i=1}^{n}(w_{i}/2\alpha_{i})^{2}\eta^{2}(\delta_{i})\right].
\end{eqnarray*}
Consider the function $f(\cdot)$ with $f^{(1)}(x_{0})>0$, $f^{(3)}(x_{0})>0$, and $f^{(5)}(x)>0$, we have for any $\mathcal{P}\in\Omega_{\mathcal{P}}, \sum_{i=1}^{n}w_{i}>1, w_{i}>0$,
\begin{eqnarray*}
	&&\mathbb{E}_{\mathcal{P}}\left[(\sum_{i=1}^{n}w_{i}-1)f^{(1)}(x_{0})n^{1/3}+B\sum_{i=1}^{n}w_{i}\alpha_{i}^{2}+\sum_{i=1}^{n}w_{i}R(\delta_{i})\alpha_{i}^{4}n^{-1/3}\right]^{2}\\
	&\geq&\mathbb{E}_{\mathcal{P}}\left[(\sum_{i=1}^{n}w_{i}-1)f^{(1)}(x_{0})n^{1/3}\right]^{2}\\
	&=&\left[(\sum_{i=1}^{n}w_{i}-1)f^{(1)}(x_{0})n^{1/3}\right]^{2}.
\end{eqnarray*}
Therefore, we have
$$\lim_{n\to+\infty}\underset{\mathcal{P}\in\Omega_{\mathcal{P}},\sum_{i=1}^{n}w_{i}>1,w_{i}>0}{\inf}\ \underset{\hat{f}(\cdot)\in\Omega_{\hat f}}{\sup} n^{2/3}\mathbb{E}[(\widehat{\theta}_{G}-f^{(1)}(x_{0}))^{2}]=+\infty.$$
Similarly, we have the same result for $\mathcal{P}\in\Omega_{\mathcal{P}},\ \sum_{i=1}^{n}w_{i}<1, w_{i}>0$. 

When $\sum_{i=1}^{n}w_{i}=1, w_{i}>0$, we have for any $\hat{f}(\cdot)\in\Omega_{\hat f}$,
$$\mathbb{E}_{\mathcal{P}}\left[B\sum_{i=1}^{n}w_{i}\alpha_{i}^{2}\right]^{2}\leq B^{2}\mathbb{E}_{\mathcal{P}}\left[\sum_{i=1}^{n}w_{i}\alpha_{i}^{4}\right]=B^{2}\sum_{i=1}^{n}w_{i}\mathbb{E}_{\mathcal{P}}\left[\alpha_{i}^{4}\right]<+\infty,$$
and $$\mathbb{E}_{\mathcal{P}}\left[\sum_{i=1}^{n}w_{i}R(\delta_{i})\alpha_{i}^{4}n^{-1/3}\right]^{2}=O(n^{-2/3})<+\infty,$$ which is due to $\mathbb{E}[\alpha_{i}^{8}]<+\infty$. Especially, when $w_{i}=1/n,\ i=1,\ldots,n$,
$$\mathbb{E}_{\mathcal{P}}\left[n\sum_{i=1}^{n}(w_{i}/2\alpha_{i})^{2}\eta^{2}(\delta_{i})\right]\leq \frac{1}{4}\left[\max_{i}\eta^{2}(\delta_{i})\right]\frac{1}{n}\sum_{i=1}^{n}\mathbb{E}_{\mathcal{P}}\left[1/\alpha_{i}^{2}\right]<+\infty.$$
Therefore, we have
$$\lim_{n\to+\infty}\underset{\mathcal{P}\in\Omega_{\mathcal{P}},\sum_{i=1}^{n}w_{i}=1,w_{i}>0}{\inf}\ \underset{\hat{f}(\cdot)\in\Omega_{\hat f}}{\sup} n^{2/3}\mathbb{E}[(\widehat{\theta}_{G}-f^{(1)}(x_{0}))^{2}]<+\infty,$$
which means that
\begin{eqnarray*}
	&&\mathcal{R}_{G}^{*}\\
	&=&\lim_{n\to+\infty}\underset{\mathcal{P}\in\Omega_{\mathcal{P}},\sum_{i=1}^{n}w_{i}=1,w_{i}>0}{\inf}\ \underset{\hat{f}(\cdot)\in\Omega_{\hat f}}{\sup} n^{2/3}\mathbb{E}[(\widehat{\theta}_{G}-f^{(1)}(x_{0}))^{2}]\\
	&=&\lim_{n\to+\infty}\underset{\mathcal{P}\in\Omega_{\mathcal{P}},\sum_{i=1}^{n}w_{i}=1,w_{i}>0}{\inf}\ \underset{\hat{f}(\cdot)\in\Omega_{\hat f}}{\sup}\mathbb{E}_{\mathcal{P}}\left[B\sum_{i=1}^{n}w_{i}\alpha_{i}^{2}+\sum_{i=1}^{n}w_{i}R(\delta_{i})\alpha_{i}^{4}n^{-1/3}\right]^{2}+\mathbb{E}_{\mathcal{P}}\left[n\sum_{i=1}^{n}(w_{i}/2\alpha_{i})^{2}\eta^{2}(\delta_{i})\right].
\end{eqnarray*}

Similarly, we have $$\mathcal{R}_{0}\triangleq\lim_{n\to+\infty}\underset{\mathcal{P}\in\Omega_{\mathcal{P}},\sum_{i=1}^{n}w_{i}=1,w_{i}>0}{\inf}\ \underset{\hat{f}(\cdot)\in\Omega_{\hat f}}{\sup}\mathbb{E}_{\mathcal{P}}\left[B\sum_{i=1}^{n}w_{i}\alpha_{i}^{2}\right]^{2}+\mathbb{E}_{\mathcal{P}}\left[n\sum_{i=1}^{n}(w_{i}/2\alpha_{i})^{2}2\sigma^{2}(x_{0})\right]<+\infty.$$ Note that $\eta^{2}(\delta_{i})=2\sigma^{2}(x_{0})+\Theta(\delta^{s})$, then there exists positive $(N_{0},C_{1},C_{2},C_{3},C_{4})$ such that $\forall n>N_{0}$, the following two inequalities hold:
\begin{eqnarray*}
	&&\mathbb{E}_{\mathcal{P}}\left[B\sum_{i=1}^{n}w_{i}\alpha_{i}^{2}+\sum_{i=1}^{n}w_{i}R(\delta_{i})\alpha_{i}^{4}n^{-1/3}\right]^{2}+\mathbb{E}_{\mathcal{P}}\left[n\sum_{i=1}^{n}(w_{i}/2\alpha_{i})^{2}\eta^{2}(\delta_{i})\right]\\
	&\geq&\mathbb{E}_{\mathcal{P}}\left[B\sum_{i=1}^{n}w_{i}\alpha_{i}^{2}\right]^{2}-C_{1}n^{-1/3}+\mathbb{E}_{\mathcal{P}}\left[n\sum_{i=1}^{n}(w_{i}/2\alpha_{i})^{2}2\sigma^{2}(x_{0})\right]-C_{2}n^{-s/6},
\end{eqnarray*}
and for any $\mathcal{P}\in\Omega_{\mathcal{P}},\sum_{i=1}^{n}w_{i}=1,w_{i}>0$ such that $\mathbb{E}_{\mathcal{P}}\left[n\sum_{i=1}^{n}(w_{i}/2\alpha_{i})^{2}2\overline{\sigma}^2\right]<\mathcal{R}_{0}$,
\begin{eqnarray*}
	&&\mathbb{E}_{\mathcal{P}}\left[B\sum_{i=1}^{n}w_{i}\alpha_{i}^{2}+\sum_{i=1}^{n}w_{i}R(\delta_{i})\alpha_{i}^{4}n^{-1/3}\right]^{2}+\mathbb{E}_{\mathcal{P}}\left[n\sum_{i=1}^{n}(w_{i}/2\alpha_{i})^{2}\eta^{2}(\delta_{i})\right]\\
	&\leq&\mathbb{E}_{\mathcal{P}}\left[B\sum_{i=1}^{n}w_{i}\alpha_{i}^{2}\right]^{2}+C_{3}n^{-1/3}+\mathbb{E}_{\mathcal{P}}\left[n\sum_{i=1}^{n}(w_{i}/2\alpha_{i})^{2}2\sigma^{2}(x_{0})\right]+C_{4}n^{-s/6}.
\end{eqnarray*}
Therefore, we have $\forall n>N_{0}$,
\begin{eqnarray}
	&&\underset{\mathcal{P}\in\Omega_{\mathcal{P}},\sum_{i=1}^{n}w_{i}=1,w_{i}>0}{\inf}\ \underset{\hat{f}(\cdot)\in\Omega_{\hat f}}{\sup}\mathbb{E}_{\mathcal{P}}\left[B\sum_{i=1}^{n}w_{i}\alpha_{i}^{2}+\sum_{i=1}^{n}w_{i}R(\delta_{i})\alpha_{i}^{4}n^{-1/3}\right]^{2}+\mathbb{E}_{\mathcal{P}}\left[n\sum_{i=1}^{n}(w_{i}/2\alpha_{i})^{2}\eta^{2}(\delta_{i})\right]\nonumber\\
	&\geq&\underset{\mathcal{P}\in\Omega_{\mathcal{P}},\sum_{i=1}^{n}w_{i}=1,w_{i}>0}{\inf}\ \underset{\hat{f}(\cdot)\in\Omega_{\hat f}}{\sup}\left\{\mathbb{E}_{\mathcal{P}}\left[B\sum_{i=1}^{n}w_{i}\alpha_{i}^{2}\right]^{2}+\mathbb{E}_{\mathcal{P}}\left[n\sum_{i=1}^{n}(w_{i}/2\alpha_{i})^{2}2\sigma^{2}(x_{0})\right]\right\}-C_{1}n^{-1/3}-C_{2}n^{-s/6},\nonumber\\\label{eq_lb}
\end{eqnarray}
and
\begin{eqnarray}
	&&\underset{\mathcal{P}\in\Omega_{\mathcal{P}},\sum_{i=1}^{n}w_{i}=1,w_{i}>0}{\inf}\ \underset{\hat{f}(\cdot)\in\Omega_{\hat f}}{\sup}\mathbb{E}_{\mathcal{P}}\left[B\sum_{i=1}^{n}w_{i}\alpha_{i}^{2}+\sum_{i=1}^{n}w_{i}R(\delta_{i})\alpha_{i}^{4}n^{-1/3}\right]^{2}+\mathbb{E}_{\mathcal{P}}\left[n\sum_{i=1}^{n}(w_{i}/2\alpha_{i})^{2}\eta^{2}(\delta_{i})\right]\nonumber\\
	&\leq&\underset{\substack{\mathcal{P}\in\Omega_{\mathcal{P}},\sum_{i=1}^{n}w_{i}=1,w_{i}>0, \\ \mathbb{E}_{\mathcal{P}}\left[n\sum_{i=1}^{n}(w_{i}/2\alpha_{i})^{2}2\overline{\sigma}^2\right]<\mathcal{R}_{0}}}{\inf}\ \underset{\hat{f}(\cdot)\in\Omega_{\hat f}}{\sup} \mathbb{E}_{\mathcal{P}}\left[B\sum_{i=1}^{n}w_{i}\alpha_{i}^{2}+\sum_{i=1}^{n}w_{i}R(\delta_{i})\alpha_{i}^{4}n^{-1/3}\right]^{2}+\mathbb{E}_{\mathcal{P}}\left[n\sum_{i=1}^{n}(w_{i}/2\alpha_{i})^{2}\eta^{2}(\delta_{i})\right]\nonumber\\
	&\leq&\underset{\substack{\mathcal{P}\in\Omega_{\mathcal{P}},\sum_{i=1}^{n}w_{i}=1,w_{i}>0, \\ \mathbb{E}_{\mathcal{P}}\left[n\sum_{i=1}^{n}(w_{i}/2\alpha_{i})^{2}2\overline{\sigma}^2\right]<\mathcal{R}_{0}}}{\inf}\ \underset{\hat{f}(\cdot)\in\Omega_{\hat f}}{\sup} \left\{\mathbb{E}_{\mathcal{P}}\left[B\sum_{i=1}^{n}w_{i}\alpha_{i}^{2}\right]^{2}+\mathbb{E}_{\mathcal{P}}\left[n\sum_{i=1}^{n}(w_{i}/2\alpha_{i})^{2}2\sigma^{2}(x_{0})\right]\right\}+C_{3}n^{-1/3}+C_{4}n^{-s/6}.\nonumber\\\label{eq_ub}
\end{eqnarray}
In addition, since
$$\underset{\substack{\mathcal{P}\in\Omega_{\mathcal{P}},\sum_{i=1}^{n}w_{i}=1,w_{i}>0, \\ \mathbb{E}_{\mathcal{P}}\left[n\sum_{i=1}^{n}(w_{i}/2\alpha_{i})^{2}2\overline{\sigma}^2\right]\geq\mathcal{R}_{0}}}{\inf}\ \underset{\hat{f}(\cdot)\in\Omega_{\hat f}}{\sup} \mathbb{E}_{\mathcal{P}}\left[B\sum_{i=1}^{n}w_{i}\alpha_{i}^{2}\right]^{2}+\mathbb{E}_{\mathcal{P}}\left[n\sum_{i=1}^{n}(w_{i}/2\alpha_{i})^{2}2\sigma^{2}(x_{0})\right]>\mathcal{R}_{0},$$
then
$$\mathcal{R}_{0}=\lim_{n\to+\infty}\underset{\substack{\mathcal{P}\in\Omega_{\mathcal{P}},\sum_{i=1}^{n}w_{i}=1,w_{i}>0, \\ \mathbb{E}_{\mathcal{P}}\left[n\sum_{i=1}^{n}(w_{i}/2\alpha_{i})^{2}2\overline{\sigma}^2\right]<\mathcal{R}_{0}}}{\inf}\ \underset{\hat{f}(\cdot)\in\Omega_{\hat f}}{\sup} \mathbb{E}_{\mathcal{P}}\left[B\sum_{i=1}^{n}w_{i}\alpha_{i}^{2}\right]^{2}+\mathbb{E}_{\mathcal{P}}\left[n\sum_{i=1}^{n}(w_{i}/2\alpha_{i})^{2}2\sigma^{2}(x_{0})\right].$$

Taking limit of both sides of \eqref{eq_lb} and \eqref{eq_ub} as $n\to+\infty$, we obtain that
$$\mathcal{R}_{0}\leq\mathcal{R}_{\text{random}}^{*}\leq\mathcal{R}_{0},$$
which concludes the proposition.
\endproof

\subsection*{Proof of Theorem~\ref{thm:1}.}
\proof
Since $\Omega_{d}\subset\Omega_{\mathcal{P}}$, on the one hand, we have
\begin{gather*}
	\mathcal{R}_{G}^{*}\leq\mathcal{R}_{\text{deterministic}}^{*}.
\end{gather*}
On the other hand, suppose $(\mathcal{P}^{*},\{w_{i}^{*}\}_{i=1}^{n})$ is
minimax-optimal for~\eqref{worst-case MSE}. For each case of $\hat{f}(\cdot)\in\Omega_{\hat f}$,
\begin{eqnarray*}
	&&\left[B\sum_{i=1}^{n}w_{i}^{*}\mathbb{E}_{\mathcal{P}^{*}}[\alpha_{i}^{2}]\right]^{2}+\frac{1}{2}n\sigma^{2}(x_{0})\sum_{i=1}^{n}(w_{i}^{*})^{2}/\mathbb{E}_{\mathcal{P}^{*}}[\alpha_{i}^{2}]\\
	&\leq&\left[B\sum_{i=1}^{n}w_{i}^{*}\mathbb{E}_{\mathcal{P}^{*}}[\alpha_{i}^{2}]\right]^{2}+\frac{1}{2}n\sigma^{2}(x_{0})\sum_{i=1}^{n}(w_{i}^{*})^{2}\mathbb{E}_{\mathcal{P}^{*}}[1/\alpha_{i}^{2}]\\
	&\leq&\mathbb{E}_{\mathcal{P}^{*}}\left[B\sum_{i=1}^{n}w_{i}^{*}\alpha_{i}^{2}\right]^{2}+\mathbb{E}_{\mathcal{P}^{*}}\left[n\sum_{i=1}^{n}(w_{i}^{*}/2\alpha_{i})^{2}2\sigma^{2}(x_{0})\right],
\end{eqnarray*}
where the first inequality is due to the fact that $\mathbb{E}[1/X^2]\geq 1/\mathbb{E}[X^2]$, and the second inequality is due to the fact that $(\mathbb{E}[X])^{2}\leq\mathbb{E}[X^{2}]$. Therefore,
\begin{eqnarray*}
	\mathcal{R}_{G}^{*}&=&\lim_{n\to+\infty}\underset{\hat{f}(\cdot)\in\Omega_{\hat f}}{\sup} \mathbb{E}_{\mathcal{P}^{*}}\left[B\sum_{i=1}^{n}w_{i}^{*}\alpha_{i}^{2}\right]^{2}+\mathbb{E}_{\mathcal{P}^{*}}\left[n\sum_{i=1}^{n}(w_{i}^{*}/2\alpha_{i})^{2}2\sigma^{2}(x_{0})\right]\\
	&\geq&\lim_{n\to+\infty}\underset{\hat{f}(\cdot)\in\Omega_{\hat f}}{\sup}\left[B\sum_{i=1}^{n}w_{i}^{*}\mathbb{E}_{\mathcal{P}^{*}}[\alpha_{i}^{2}]\right]^{2}+\frac{1}{2}n\sigma^{2}(x_{0})\sum_{i=1}^{n}(w_{i}^{*})^{2}/\mathbb{E}_{\mathcal{P}^{*}}[\alpha_{i}^{2}]\\
	&\geq&\mathcal{R}_{\text{deterministic}}^{*}.
\end{eqnarray*}
Then we conclude the proof for the first equation of Theorem~\ref{thm:1}.

By H\"older's inequality, we have
$$\left[\sum_{i=1}^{n}w_{i}\alpha_{i}^{2}\right]^{1/2}\left[\sum_{i=1}^{n}(w_{i}/\alpha_{i})^{2}\right]^{1/2}\geq\sum_{i=1}^{n}(w_{i}\alpha_{i}^{2})^{1/2}(w_{i}/\alpha_{i})=\sum_{i=1}^{n}w_{i}^{3/2},$$
$$(\sum_{i=1}^{n}1^{3})^{1/3}(\sum_{i=1}^{n}w_{i}^{3/2})^{2/3}\geq\sum_{i=1}^{n}w_{i}=1.$$
Combining the above inequalities yields
$$\sum_{i=1}^{n}w_{i}\alpha_{i}^{2}\geq 1/(n\sum_{i=1}^{n}(w_{i}/\alpha_{i})^{2}).$$
For each case of $\hat{f}(\cdot)\in\Omega_{\hat f}$,
\begin{eqnarray*}
	&&B^{2}(\sum_{i=1}^{n}w_{i}\alpha_{i}^{2})^{2}+\frac{1}{2}n\sigma^{2}(x_{0})\sum_{i=1}^{n}(w_{i}/\alpha_{i})^{2}\\
	&\geq&B^{2}/[n^{2}(\sum_{i=1}^{n}(w_{i}/\alpha_{i})^{2})^{2}]+\frac{1}{2}n\sigma^{2}(x_{0})\sum_{i=1}^{n}(w_{i}/\alpha_{i})^{2}\\
	&=&B^{2}\tilde{\alpha}^{4}+\sigma^{2}(x_{0})/(2\tilde{\alpha}^{2}),
\end{eqnarray*}
where $\tilde{\alpha}=\left[1/(n\sum_{i=1}^{n}(w_{i}/\alpha_{i})^{2})\right]^{1/2}$.
Therefore, for any finite $n$,
\begin{eqnarray*}
    &&\underset{\alpha_{i}>0,\sum_{i=1}^{n}w_{i}=1,w_{i}>0}{\inf}\ \underset{\hat{f}(\cdot)\in\Omega_{\hat f}}{\sup}\left[B\sum_{i=1}^{n}w_{i}\alpha_{i}^{2}\right]^{2}+n\sum_{i=1}^{n}(w_{i}/2\alpha_{i})^{2}2\sigma^{2}(x_{0})\\
    &\geq& \underset{\alpha_{i}>0,\sum_{i=1}^{n}w_{i}=1,w_{i}>0}{\inf}\ \underset{\hat{f}(\cdot)\in\Omega_{\hat f}}{\sup} B^{2}\tilde{\alpha}^{4}+\sigma^{2}(x_{0})/(2\tilde{\alpha}^{2})\\
    &\geq& \underset{\alpha>0}{\inf}\ \underset{\hat{f}(\cdot)\in\Omega_{\hat f}}{\sup}B^{2}\alpha^{4}+\sigma^{2}(x_{0})/(2\alpha^{2}),
\end{eqnarray*}
where the right-hand side of the last inequality is independent of $n$. On the other hand, let $\alpha_*$ be minimax-optimal for~\eqref{thm1.2} and consider the deterministic choice $w_i=1/n$ and $\alpha_i=\alpha_*$ for all $i=1,\ldots,n$. Then
\begin{eqnarray*}
    &&\underset{\alpha_{i}>0,\sum_{i=1}^{n}w_{i}=1,w_{i}>0}{\inf}\ \underset{\hat{f}(\cdot)\in\Omega_{\hat f}}{\sup}\left[B\sum_{i=1}^{n}w_{i}\alpha_{i}^{2}\right]^{2}+n\sum_{i=1}^{n}(w_{i}/2\alpha_{i})^{2}2\sigma^{2}(x_{0})\\
    &\leq& \underset{\hat{f}(\cdot)\in\Omega_{\hat f}}{\sup}B^{2}\alpha_*^{4}+\sigma^{2}(x_{0})/(2\alpha_*^{2})\\
	&=&\underset{\alpha>0}{\inf}\ \underset{\hat{f}(\cdot)\in\Omega_{\hat f}}{\sup}B^{2}\alpha^{4}+\sigma^{2}(x_{0})/(2\alpha^{2}),
\end{eqnarray*}
where the right-hand side is again independent of $n$. Combining the above uniform bounds and letting $n\to\infty$, we obtain
$$\mathcal{R}_{\text{deterministic}}^{*}= \underset{\alpha>0}{\inf}\ \underset{\hat{f}(\cdot)\in\Omega_{\hat f}}{\sup}B^{2}\alpha^{4}+\sigma^{2}(x_{0})/(2\alpha^{2}).$$ Then we conclude the proof for the second equation of Theorem~\ref{thm:1}.

In addition, for each case of $\hat{f}(\cdot)\in\Omega_{\hat f}$, we have
$$B^{2}\alpha_{*}^{4}+\sigma^{2}(x_{0})/(2\alpha_{*}^{2})\geq \underset{\alpha>0}{\inf}\ B^{2}\alpha^{4}+\sigma^{2}(x_{0})/(2\alpha^{2}).$$
Therefore,
\begin{eqnarray*}
	&&\underset{\alpha>0}{\inf}\ \underset{\hat{f}(\cdot)\in\Omega_{\hat f}}{\sup}B^{2}\alpha^{4}+\sigma^{2}(x_{0})/(2\alpha^{2})\\
	&=&\underset{\hat{f}(\cdot)\in\Omega_{\hat f}}{\sup} B^{2}\alpha_{*}^{4}+\sigma^{2}(x_{0})/(2\alpha_{*}^{2})\\
	&\geq&\underset{\hat{f}(\cdot)\in\Omega_{\hat f}}{\sup}\ \underset{\alpha>0}{\inf}\ B^{2}\alpha^{4}+\sigma^{2}(x_{0})/(2\alpha^{2}),
\end{eqnarray*}
which concludes the last inequality of Theorem~\ref{thm:1}.
\endproof

\subsection*{Proof of Proposition~\ref{pro_B}, Corollary~\ref{col_B}, and Proposition~\ref{pro_sigma}.}
\proof
Recall that $m=n_1/2$ and
\[
\widehat{B}
=\frac{1}{m}\sum_{k=1}^{m}
\frac{\delta_{2}y_{1}^{(k)}-\delta_{1}y_{2}^{(k)}}{\delta_{2}\delta_{1}^{3}-\delta_{1}\delta_{2}^{3}}.
\]
Since the summands are i.i.d. across $k$,
\[
\mathbb{E}[\widehat B]
=\mathbb{E}\!\left[\frac{\delta_{2}y_{1}^{(k)}-\delta_{1}y_{2}^{(k)}}{\delta_{2}\delta_{1}^{3}-\delta_{1}\delta_{2}^{3}}\right],
\qquad
\text{Var}(\widehat B)
=\frac{1}{m}\text{Var}\!\left(\frac{\delta_{2}y_{1}^{(k)}-\delta_{1}y_{2}^{(k)}}{\delta_{2}\delta_{1}^{3}-\delta_{1}\delta_{2}^{3}}\right).
\]

\noindent\emph{(i) Bias expansion.}
Under Assumption~\ref{asp:1}, a Taylor expansion with the Lagrange-form remainder yields, for $i\in\{1,2\}$,
\[
\mathbb{E}[y_i^{(k)}]
=f^{(1)}(x_0)\delta_i + B\delta_i^3 + R(\delta_i)\delta_i^5,
\]
where $B=f^{(3)}(x_0)/6$ and $R(\delta)$ is the Lagrange-remainder coefficient induced by $f^{(5)}$ (as defined in Section~2.1).

Therefore,
\begin{align*}
\mathbb{E}[\widehat B]
&=\frac{\delta_2\mathbb{E}[y_1^{(k)}]-\delta_1\mathbb{E}[y_2^{(k)}]}{\delta_2\delta_1^3-\delta_1\delta_2^3}\\
&=\frac{\delta_2\big(f^{(1)}(x_0)\delta_1+B\delta_1^3+R(\delta_1)\delta_1^5\big)
      -\delta_1\big(f^{(1)}(x_0)\delta_2+B\delta_2^3+R(\delta_2)\delta_2^5\big)}
{\delta_2\delta_1^3-\delta_1\delta_2^3}\\
&=B+\frac{\delta_2R(\delta_1)\delta_1^5-\delta_1R(\delta_2)\delta_2^5}{\delta_2\delta_1^3-\delta_1\delta_2^3}.
\end{align*}
Define
\[
D(\delta_1,\delta_2)
\triangleq
\frac{\delta_2R(\delta_1)\delta_1^5-\delta_1R(\delta_2)\delta_2^5}
{(\delta_1^2+\delta_2^2)(\delta_2\delta_1^3-\delta_1\delta_2^3)},
\]
so that
\[
\mathbb{E}[\widehat B]=B+D(\delta_1,\delta_2)(\delta_1^2+\delta_2^2),
\]
which proves~\eqref{eq:EBhat}.

Now suppose $\delta_2=c\delta_1$ for some fixed $c\in(0,1)$. Then the above definition simplifies to
\[
D(\delta_1,c\delta_1)
=\frac{R(\delta_1)-c^4R(c\delta_1)}{1-c^4}.
\]
Since $R(\cdot)$ is bounded for small perturbations (by continuity of $f^{(5)}$ under Assumption~\ref{asp:1}) and $1-c^4>0$ is a fixed constant, $D(\delta_1,c\delta_1)$ is uniformly bounded for sufficiently small $\delta_1$.
Moreover, Assumption~\ref{asp:1} implies $R(\delta)\to f^{(5)}(x_0)/120$ as $\delta\to 0$, so
\[
D(\delta_1,c\delta_1)
=\frac{f^{(5)}(x_0)/120-c^4 f^{(5)}(x_0)/120}{1-c^4}+o(1)
=\frac{f^{(5)}(x_0)}{120}+o(1),
\qquad \text{as }\delta_1\to 0,
\]
which proves~\eqref{eq:D_asymp}.

\noindent\textup{(ii) Variance expansion.} Under Assumption~\ref{asp:2}, $\text{Var}(\hat f(x_0\pm\delta))=\sigma^2(x_0)+\Theta(\delta^s)$ as $\delta\to 0$.
Since the two function evaluations at $x_0\pm\delta_i$ are independent when $k=1$,
\[
\text{Var}(y_i^{(k)})
=\frac{\text{Var}(\hat f(x_0+\delta_i))+\text{Var}(\hat f(x_0-\delta_i))}{4}
=\frac{\sigma^2(x_0)}{2}+\Theta(\delta_i^s),
\qquad i\in\{1,2\}.
\]
Therefore, since $y_1^{(k)}$ and $y_2^{(k)}$ are independent,
\[
\text{Var}(\delta_2y_1^{(k)}-\delta_1y_2^{(k)})
=\frac{\sigma^2(x_0)}{2}(\delta_1^2+\delta_2^2)
+\Theta(\delta_1^{s}\delta_2^{2})
+\Theta(\delta_2^{s}\delta_1^{2}).
\]
Dividing by $(\delta_{2}\delta_{1}^{3}-\delta_{1}\delta_{2}^{3})^{2}$ and then by $m=n_1/2$ yields~\eqref{eq:VarBhat}.
\endproof

\proof
Write the mean-squared error of $\widehat{B}$ in the bias-variance form:
\[
\mathbb{E}\!\left[(\widehat{B}-B)^{2}\right]
=\big(\mathbb{E}[\widehat{B}]-B\big)^{2}+\text{Var}(\widehat{B}).
\]

By Proposition~\ref{pro_B}(i),
\[
\mathbb{E}[\widehat{B}]-B=D(\delta_1,\delta_2)\,(\delta_1^2+\delta_2^2),
\]
and under the pilot design $\delta_2=c\delta_1$ with fixed $c\in(0,1)$,
\[
D(\delta_1,\delta_2)=\frac{f^{(5)}(x_0)}{120}+o(1),\qquad\text{as }n_1\to\infty.
\]
Therefore,
\begin{align*}
\big(\mathbb{E}[\widehat{B}]-B\big)^2
&=D(\delta_1,\delta_2)^2(\delta_1^2+\delta_2^2)^2 \\
&=\left(\frac{f^{(5)}(x_{0})}{120}\right)^{2}(\delta_{1}^{2}+\delta_{2}^{2})^{2}
+o\!\left((\delta_1^2+\delta_2^2)^2\right) \\
&=\left(\frac{f^{(5)}(x_{0})}{120}\right)^{2}(\delta_{1}^{2}+\delta_{2}^{2})^{2}
+o(n_{1}^{4\beta}),
\end{align*}
where the last step uses $\delta_1,\delta_2=\Theta(n_1^\beta)$.

By Proposition~\ref{pro_B}(ii),
\[
\text{Var}(\widehat{B})
=\frac{(\delta_{1}^{2}+\delta_{2}^{2})\sigma^{2}(x_{0})/2+\Theta(\delta_{1}^{s}\delta_{2}^{2})+\Theta(\delta_{2}^{s}\delta_{1}^{2})}
{(n_{1}/2)(\delta_{2}\delta_{1}^{3}-\delta_{1}\delta_{2}^{3})^{2}}.
\]
Using $n_1/2$ in the denominator, the leading term becomes
\[
\frac{(\delta_{1}^{2}+\delta_{2}^{2})\sigma^{2}(x_{0})}
{n_{1}(\delta_{2}\delta_{1}^{3}-\delta_{1}\delta_{2}^{3})^{2}}.
\]
Moreover, since $\delta_1,\delta_2=\Theta(n_1^\beta)$, $\beta<0$ and $s>0$, the remaining terms satisfy
\[
\frac{\Theta(\delta_{1}^{s}\delta_{2}^{2})+\Theta(\delta_{2}^{s}\delta_{1}^{2})}
{n_{1}(\delta_{2}\delta_{1}^{3}-\delta_{1}\delta_{2}^{3})^{2}}
=o\!\left(n_1^{-1-6\beta}\right),
\]
which yields the stated expansion for $\text{Var}(\widehat B)$.

Combining the bias and variance expansions proves the MSE expansion in the corollary. Finally, if $-1/6<\beta<0$, then $n_1^{4\beta}\to0$ and $n_1^{-1-6\beta}\to0$, so $\mathbb{E}[(\widehat{B}-B)^2]\to0$, and hence $\widehat{B}$ is consistent.
\endproof

\proof
Recall that $m=n_1/2$ and
\[
\widehat{\sigma}^{2}(x_{0})
=
\frac{2}{m-1}\sum_{k=1}^{m}\bigl(y_1^{(k)}-\bar y_1\bigr)^2,
\qquad
\bar y_1=\frac{1}{m}\sum_{k=1}^{m}y_1^{(k)}.
\]
Let
\[
S_m^2 \triangleq \frac{1}{m-1}\sum_{k=1}^{m}\bigl(y_1^{(k)}-\bar y_1\bigr)^2,
\qquad
v_{\delta_1}\triangleq \text{Var}(y_1^{(k)}).
\]
For each $n_1$, the pilot outputs $\{y_1^{(k)}\}_{k=1}^{m}$ are i.i.d. across $k$, with distribution depending on $\delta_1$. By the uniformly bounded fourth-moment condition, the sample variance is weakly consistent along the sequence $\delta_1\to0$; that is,
\[
S_m^2-v_{\delta_1}\xrightarrow{p}0,
\qquad \text{as }m\to\infty .
\]
Indeed, this follows from the standard variance bound for the sample variance, since the uniformly bounded fourth moments imply
\[
\text{Var}(S_m^2)=O(1/m),
\]
and hence Chebyshev's inequality gives $S_m^2-v_{\delta_1}\xrightarrow{p}0$.

It remains to identify the limit of $v_{\delta_1}$. Since $y_1^{(k)}$ is formed from two independent function evaluations at $x_0+\delta_1$ and $x_0-\delta_1$, we have
\[
v_{\delta_1}
=
\text{Var}(y_1^{(k)})
=
\frac{\text{Var}(\hat f(x_0+\delta_1))+\text{Var}(\hat f(x_0-\delta_1))}{4}.
\]
By Assumption~\ref{asp:2},
\[
\text{Var}(\hat f(x_0\pm\delta_1))
=
\sigma^2(x_0)+\Theta(\delta_1^s),
\qquad \text{as }\delta_1\to0.
\]
Therefore,
\[
v_{\delta_1}
=
\frac{\sigma^2(x_0)}{2}+\Theta(\delta_1^s).
\]
Since $\delta_1\to0$ as $n_1\to\infty$, it follows that
\[
v_{\delta_1}\to \frac{\sigma^2(x_0)}{2}.
\]

Combining the two parts, we obtain
\[
\widehat{\sigma}^{2}(x_{0})
=
2S_m^2
=
2v_{\delta_1}+o_p(1)
\xrightarrow{p}
2\cdot \frac{\sigma^2(x_0)}{2}
=
\sigma^2(x_0),
\]
which proves the proposition.
\endproof

\subsection*{Proof of Theorem~\ref{thm:pc_oracle}, Theorem~\ref{thm:pc_second_order}, and Corollary~\ref{cor_discussion}.}

\proof
Let
\[
\widehat{\alpha}
\triangleq
\left(\frac{\widehat{\sigma}^{2}(x_0)}{4\widehat B^2}\right)^{1/6},
\qquad
\widehat{\delta}=n_2^{-1/6}\widehat{\alpha}.
\]
Also let
\[
\alpha^*
\triangleq
\left(\frac{\sigma^2(x_0)}{4B^2}\right)^{1/6}.
\]

By Corollary~\ref{col_B}, under \( \delta_2=c\delta_1 \ \text{with}\ c\in(0,1) \) and \( \delta_1=\Theta(n_1^\beta)\ \text{with}\ -1/6<\beta<0, \)
we have
\[
\widehat B\xrightarrow{p}B,\qquad \text{as}\ n\to+\infty.
\]
Moreover, by Proposition~\ref{pro_sigma}, we have
\[
\widehat{\sigma}^{2}(x_0)\xrightarrow{p}\sigma^2(x_0),\qquad \text{as}\ n\to+\infty.
\]
Since \(B\neq0\) under Assumption~\ref{asp:1} and \(\sigma^2(x_0)>0\), the continuous mapping theorem gives
\[
\widehat{\alpha}\xrightarrow{p}\alpha^*,\qquad \text{as}\ n\to+\infty.
\]
In addition, since \(n_1=\Theta(n^\gamma)\) with \(0<\gamma<1\), we have
\(n_1/n\to0\) and \(n_2/n\to1\) as \(n\to+\infty\).

We now analyze the MSE of the calibrated deployment estimator. Conditional on the pilot estimates, \(\widehat{\delta}\) is fixed, and the calibrated deployment stage is a standard CFD estimator with perturbation size \(\widehat{\delta}\). Hence, using the CFD bias-variance expansion from Section~\ref{sec:background}, we have
\[
\mathbb E\!\left[\widehat{\theta}_{\mathrm{PC}}\Big| \widehat B,\widehat{\sigma}^{2}(x_0)\right]
=
f^{(1)}(x_0)+B\widehat{\delta}^{2}+O(\widehat{\delta}^{4}),
\]
and
\[
\text{Var}\left(\widehat{\theta}_{\mathrm{PC}}\Big| \widehat B,\widehat{\sigma}^{2}(x_0)\right)
=
\frac{\sigma^2(x_0)}{2n_2\widehat{\delta}^{2}}
+
O\!\left(\frac{\widehat{\delta}^{s-2}}{n_2}\right).
\]
Therefore, by the conditional bias-variance decomposition,
\begin{eqnarray*}
&&\mathbb E\!\left[
\left(\widehat{\theta}_{\mathrm{PC}}-f^{(1)}(x_0)\right)^2
\right] \\
&=&
\mathbb E\!\left[
\left(
\mathbb E[\widehat{\theta}_{\mathrm{PC}}\mid \widehat B,\widehat{\sigma}^{2}(x_0)]
-
f^{(1)}(x_0)
\right)^2
\right]
+
\mathbb E\!\left[
\text{Var}\!\left(
\widehat{\theta}_{\mathrm{PC}}\Big| \widehat B,\widehat{\sigma}^{2}(x_0)
\right)
\right] \\
&=&
\mathbb E\!\left[
\left(B\widehat{\delta}^{2}+O(\widehat{\delta}^{4})\right)^2
+
\frac{\sigma^2(x_0)}{2n_2\widehat{\delta}^{2}}
+
O\!\left(\frac{\widehat{\delta}^{s-2}}{n_2}\right)
\right].
\end{eqnarray*}

Since \(\widehat{\alpha}\xrightarrow{p}\alpha^*\in(0,\infty)\), we have
\[
\widehat{\delta}^{6}=O_p(n_2^{-1})=o_p(n_2^{-2/3}),
\qquad
\frac{\widehat{\delta}^{s-2}}{n_2}=O_p(n_2^{-2/3-s/6})=o_p(n_2^{-2/3}).
\]
To convert these probability-order remainders into the corresponding expectation-level expansion, we use a standard tail-control argument. Specifically, we require that the contribution from the rare event on which \(\widehat{\alpha}\) is extremely small or large is asymptotically negligible. Under this regularity condition, the above remainders are \(o(n_2^{-2/3})\) in \(L^1\), and hence
\[
\mathbb E\!\left[
\left(\widehat{\theta}_{\mathrm{PC}}-f^{(1)}(x_0)\right)^2
\right]
=
\mathbb E\!\left[
B^2\widehat{\delta}^{4}
+
\frac{\sigma^2(x_0)}{2n_2\widehat{\delta}^{2}}
\right]
+
o(n_2^{-2/3}).
\]

The tail-control condition is mild and can be enforced, if desired, by using an asymptotically inactive truncation of the plug-in perturbation constant \(\widehat{\alpha}\), i.e.,
\[
\widehat{\alpha}_{\mathrm{tr}}
=
\min\{\overline{\alpha},\max\{\underline{\alpha},\widehat{\alpha}\}\},
\qquad
0<\underline{\alpha}<\alpha^*<\overline{\alpha}<\infty.
\]
Since \(\widehat{\alpha}\xrightarrow{p}\alpha^*\), we have \(\widehat{\alpha}_{\mathrm{tr}}\xrightarrow{p}\alpha^*.\)
Moreover, the truncated quantity is bounded away from both zero and infinity, so all terms involving \(\widehat{\alpha}_{\mathrm{tr}}^4\) and \(\widehat{\alpha}_{\mathrm{tr}}^{-2}\) are uniformly bounded. Thus, the required uniform integrability holds automatically for the truncated version. For notational simplicity, and because the truncation is asymptotically inactive, we continue to write \(\widehat{\alpha}\).

Moreover, since \(\widehat{\alpha}\xrightarrow{p}\alpha^*\), the continuous mapping theorem gives
\[
B^2\widehat{\alpha}^{4}
+
\frac{\sigma^2(x_0)}{2\widehat{\alpha}^{2}}\xrightarrow{p}
B^2(\alpha^*)^4
+
\frac{\sigma^2(x_0)}{2(\alpha^*)^2}
=
\mathcal R_{\mathrm{opt}}.
\]
Using the same tail-control argument, or equivalently the asymptotically inactive truncation described above, we obtain
\[
\mathbb E\left[B^2\widehat{\alpha}^{4}
+
\frac{\sigma^2(x_0)}{2\widehat{\alpha}^{2}}\right]\to \mathcal R_{\mathrm{opt}}.
\]
Combining this with \(n_2/n\to1\), we conclude that
\[
\lim_{n\to\infty}
n^{2/3}
\mathbb E\!\left[
\left(\widehat{\theta}_{\mathrm{PC}}-f^{(1)}(x_0)\right)^2
\right]
=
\mathcal R_{\mathrm{opt}}.
\]
Therefore, PC-CFD is oracle-optimal.
\endproof

\proof
Let
\[
\text{MSE}_{\mathrm{PC}}
\triangleq
\mathbb E\!\left[
\left(\widehat{\theta}_{\mathrm{PC}}-f^{(1)}(x_0)\right)^2
\right].
\]
Recall from the proof of Theorem~\ref{thm:pc_oracle} that
\[
\widehat{\alpha}
=
\left(\frac{\widehat{\sigma}^{2}(x_0)}{4\widehat B^2}\right)^{1/6},
\qquad
\widehat{\delta}=n_2^{-1/6}\widehat{\alpha}.
\]
Also define
\[
\alpha^*
=
\left(\frac{\sigma^2(x_0)}{4B^2}\right)^{1/6}.
\]
We use the same tail-control convention as in the proof of Theorem~\ref{thm:pc_oracle}; equivalently, one may use an asymptotically inactive truncation of \(\widehat{\alpha}\). This guarantees that the probability-order bounds below can be interpreted at the corresponding expectation level.

First, by the conditional bias-variance expansion for the calibrated deployment stage,
\begin{equation}\label{eq:pc_mse_second_start}
\text{MSE}_{\mathrm{PC}}
=
\mathbb E\!\left[
B^2\widehat{\delta}^{4}
+
\frac{\sigma^2(x_0)}{2n_2\widehat{\delta}^{2}}
\right]
+
O(n_2^{-1})
+
O(n_2^{-2/3-s/6}).
\end{equation}
Here, the term \(O(n_2^{-1})\) comes from the squared higher-order CFD bias, and the term \(O(n_2^{-2/3-s/6})\) comes from the local variance expansion in Assumption~\ref{asp:2}. Substituting \(\widehat{\delta}=n_2^{-1/6}\widehat{\alpha}\) into~\eqref{eq:pc_mse_second_start} gives
\begin{equation}\label{eq:pc_mse_second_alpha}
\text{MSE}_{\mathrm{PC}}
=
n_2^{-2/3}
\mathbb E\!\left[
B^2\widehat{\alpha}^{4}
+
\frac{\sigma^2(x_0)}{2\widehat{\alpha}^{2}}
\right]
+
O(n_2^{-1})
+
O(n_2^{-2/3-s/6}).
\end{equation}

We next quantify the effect of estimating the perturbation constant. Let
\[
\mathcal R(\alpha)
\triangleq
B^2\alpha^4+\frac{\sigma^2(x_0)}{2\alpha^2}.
\]
Then \(\mathcal R(\alpha)\) is minimized at \(\alpha^*\), and hence \(\mathcal R'(\alpha^*)=0\). Therefore, for \(\widehat{\alpha}\) in a neighborhood of \(\alpha^*\),
\[
\mathcal R(\widehat{\alpha})-\mathcal R(\alpha^*)
=
O\!\left((\widehat{\alpha}-\alpha^*)^2\right).
\]
It remains to bound \(\widehat{\alpha}-\alpha^*\). From Proposition~\ref{pro_B} and Corollary~\ref{col_B}, under \( \delta_2=c\delta_1 \ \text{with}\ c\in(0,1) \) and \( \delta_1=\Theta(n_1^\beta)\ \text{with}\ -1/6<\beta<0, \)
we have
\[
\mathbb E[(\widehat B-B)^2]
=
O(n_1^{4\beta})+O(n_1^{-1-6\beta}).
\]
In addition, by Proposition~\ref{pro_sigma} and Assumption~\ref{asp:2}, the variance estimator satisfies the rate structure
\[
\widehat{\sigma}^2(x_0)-\sigma^2(x_0)
=
O_p(n_1^{-1/2})+O(\delta_1^s).
\]
At the expectation level, under the same moment/tail-control convention,
\[
\mathbb E\!\left[
\left(\widehat{\sigma}^2(x_0)-\sigma^2(x_0)\right)^2
\right]
=
O(n_1^{-1})+O(n_1^{2s\beta}).
\]
Since the mapping
\[
(b,v)\mapsto \left(\frac{v}{4b^2}\right)^{1/6}
\]
is smooth in a neighborhood of \((B,\sigma^2(x_0))\), with \(B\neq0\) and \(\sigma^2(x_0)>0\), a Taylor expansion around \((B,\sigma^2(x_0))\) yields
\[
\mathbb E\!\left[
(\widehat{\alpha}-\alpha^*)^2
\right]
=
O\!\left(\mathbb E[(\widehat B-B)^2]\right)
+
O\!\left(\mathbb E[(\widehat{\sigma}^2(x_0)-\sigma^2(x_0))^2]\right)
=
O(n_1^{2s\beta})
+
O(n_1^{4\beta})
+
O(n_1^{-1-6\beta}).
\]
The \(O(n_1^{-1})\) contribution from estimating \(\sigma^2(x_0)\) is dominated by \(O(n_1^{-1-6\beta})\) because \(-1/6<\beta<0\). Consequently,
\begin{equation}\label{eq:R_alpha_second}
\mathbb E[\mathcal R(\widehat{\alpha})]
=
\mathcal R(\alpha^*)
+
O(n_1^{2s\beta})
+
O(n_1^{4\beta})
+
O(n_1^{-1-6\beta}).
\end{equation}
Since \(\mathcal R(\alpha^*)=\mathcal R_{\mathrm{opt}}\), combining~\eqref{eq:pc_mse_second_alpha} and~\eqref{eq:R_alpha_second} gives
\[
\text{MSE}_{\mathrm{PC}}
=
\mathcal R_{\mathrm{opt}}\,n_2^{-2/3}
\left(
1
+
O(n_1^{2s\beta})
+
O(n_1^{4\beta})
+
O(n_1^{-1-6\beta})
\right)
+
O(n_2^{-1})
+
O(n_2^{-2/3-s/6}).
\]

Now use \(n_1=\Theta(n^\gamma)\), \(0<\gamma<1\), and \(n_2=n-n_1\). Since \(n_1/n\to0\),
\[
n_2^{-2/3}
=
n^{-2/3}\left(1+O\!\left(\frac{n_1}{n}\right)\right)
=
n^{-2/3}\left(1+O(n^{\gamma-1})\right).
\]
Therefore,
\[
\text{MSE}_{\mathrm{PC}}
=
\mathcal R_{\mathrm{opt}}\,n^{-2/3}
\left(
1
+
O(n^{\gamma-1})
+
O(n^{2\gamma s\beta})
+
O(n^{4\gamma\beta})
+
O(n^{-\gamma(6\beta+1)})
+
O(n^{-1/3})
+
O(n^{-s/6})
\right).
\]
Equivalently,
\[
\text{MSE}_{\mathrm{PC}}
-
\mathcal R_{\mathrm{opt}}\,n^{-2/3}
=
O\!\left(
n^{-2/3-\kappa(\beta,\gamma)}
\right),
\]
where
\[
\kappa(\beta,\gamma)
=
\min\left\{
1-\gamma,\,
-2\gamma s\beta,\,
-4\gamma\beta,\,
\frac{1}{3},\,
\frac{s}{6},\,
\gamma(6\beta+1)
\right\}.
\]
This proves the theorem.
\endproof

\proof
For any given $s>0$ and $0<\gamma<1$, solving the following optimization problem yields the optimal choice of $\beta$:
\begin{gather}
	\underset{-1/6<\beta<0}{\min} \max\{\gamma-1,\ 2\gamma s\beta,\ 4\gamma\beta,\ -1/3,\ -s/6,\ -\gamma(6\beta+1)\}.\label{opt_second}
\end{gather}

When $s\geq2$, we have $2\gamma s\beta\leq4\gamma\beta$ and $-s/6\leq-1/3$, and thus \(-\kappa(\beta,\gamma)\) is $\max\{\gamma-1,4\gamma\beta,-1/3,-\gamma(6\beta+1)\}$. Note that $\max\{4\gamma\beta,-\gamma(6\beta+1)\}\geq-2\gamma/5$ and $\max\{\gamma-1,-2\gamma/5\}\geq-2/7>-1/3$. Therefore, when $5/7\leq\gamma<1$, $\gamma-1$ is the solution to \eqref{opt_second} by taking $\beta$ from $\left[\frac{1-2\gamma}{6\gamma},\frac{\gamma-1}{4\gamma}\right]$ such that $\max\{4\gamma\beta,-\gamma(6\beta+1)\}\leq\gamma-1$; when $0<\gamma\leq5/7$, $-2\gamma/5$ is the solution to \eqref{opt_second} by taking $\beta=-1/10$ such that $\max\{4\gamma\beta,-\gamma(6\beta+1)\}=-2\gamma/5$.

When $3/2\leq s\leq2$, we have $2\gamma s\beta\geq4\gamma\beta$ and $-s/6\geq-1/3$, and thus \(-\kappa(\beta,\gamma)\) is $\max\{\gamma-1,2\gamma s\beta,-s/6,-\gamma(6\beta+1)\}$. Note that $\max\{2\gamma s\beta,-\gamma(6\beta+1)\}\geq-2s\gamma/(2s+6)$ and $\max\{\gamma-1,-2s\gamma/(2s+6)\}\geq-2s/(4s+6)\geq-s/6$ when $s\geq3/2$. Therefore, when $\frac{2s+6}{4s+6}\leq\gamma<1$, $\gamma-1$ is the solution to \eqref{opt_second} by taking $\beta$ from $\left[\frac{1-2\gamma}{6\gamma},\frac{\gamma-1}{2\gamma s}\right]$ such that $\max\{2\gamma s\beta,-\gamma(6\beta+1)\}\leq\gamma-1$; when $0<\gamma\leq\frac{2s+6}{4s+6}$, $-2s\gamma/(2s+6)$ is the solution to \eqref{opt_second} by taking $\beta=-1/(2s+6)$ such that $\max\{2\gamma s\beta,-\gamma(6\beta+1)\}=-2s\gamma/(2s+6)$.

When $0<s\leq3/2$, \(-\kappa(\beta,\gamma)\) is $\max\{\gamma-1,2\gamma s\beta,-s/6,-\gamma(6\beta+1)\}$. When $1-s/6\leq \gamma<1$, $\gamma-1$ is the solution to \eqref{opt_second} by taking $\beta$ from $\left[\frac{1-2\gamma}{6\gamma},\frac{\gamma-1}{2\gamma s}\right]$ such that $\max\{2\gamma s\beta,-\gamma(6\beta+1)\}\leq\gamma-1$; when $(s+3)/6\leq \gamma\leq1-s/6$, $-s/6$ is the solution to \eqref{opt_second} by taking $\beta$ from $\left[\frac{s-6\gamma}{36\gamma},\frac{-1}{12\gamma}\right]$ such that $\max\{2\gamma s\beta,-\gamma(6\beta+1)\}\leq-s/6$; when $0<\gamma\leq(s+3)/6$, $-2s\gamma/(2s+6)$ is the solution to \eqref{opt_second} by taking $\beta=-1/(2s+6)$ such that $\max\{2\gamma s\beta,-\gamma(6\beta+1)\}=-2s\gamma/(2s+6)$.
\endproof

\subsection*{Proof of Proposition~\ref{pro_SP}.}
\proof
For \(i=1,\ldots,p\) and \(j=1,\ldots,n\), define 
\[ Z_{i,j}(h) \triangleq \frac{ \hat f(\bm x_0+h\bm\Delta_j) - \hat f(\bm x_0-h\bm\Delta_j) }{ 2h\delta_{i,j} }. \] 
Then \[ \widehat\theta_{\mathrm{SP},i} = \frac{1}{n}\sum_{j=1}^n Z_{i,j}(h). \]
Because the perturbation directions and simulation outputs are independent across replications, the random vectors \(\{(Z_{1,j}(h),\ldots,Z_{p,j}(h))'\}_{j=1}^n\) are independent and identically distributed.

We first characterize the bias. Conditional on \(\bm\Delta_j\), the unbiasedness of the simulation outputs gives 
\[ \mathbb E\!\left[ Z_{i,j}(h)\mid\bm\Delta_j \right] = \frac{ f(\bm x_0+h\bm\Delta_j) - f(\bm x_0-h\bm\Delta_j) }{ 2h\delta_{i,j} }. \] 
Under Assumption~\ref{asp:multi_smooth}, a symmetric Taylor expansion around \(\bm x_0\) yields that as $h\to 0$,
\begin{gather} 
\mathbb E\!\left[ Z_{i,j}(h)\mid\bm\Delta_j \right] = \sum_{k=1}^p (\nabla f(\bm x_0))_k \frac{\delta_{k,j}}{\delta_{i,j}} + \frac{h^2}{6} \sum_{k_1,k_2,k_3=1}^p \nabla_{k_1k_2k_3}^3f(\bm x_0) \frac{ \delta_{k_1,j}\delta_{k_2,j}\delta_{k_3,j} }{ \delta_{i,j} } + O(h^4).\label{taylor-z}
\end{gather} 
Since \(\delta_{i,j}^{-1}=\delta_{i,j}\), mutual independence and symmetry of the Rademacher variables imply \( \mathbb E\!\left[ \delta_{k,j}\delta_{i,j} \right] = \mathbf 1_{\{k=i\}}\). Moreover, 
\begin{gather*} 
\mathbb E\!\left[ \sum_{k_1,k_2,k_3=1}^p \nabla_{k_1k_2k_3}^3f(\bm x_0) \delta_{k_1,j}\delta_{k_2,j}\delta_{k_3,j} \delta_{i,j} \right] = \nabla_{iii}^3f(\bm x_0) + 3\sum_{k\ne i}\nabla_{ikk}^3f(\bm x_0) = G_i. 
\end{gather*} 
Therefore, \[ \mathbb E[\widehat\theta_{\mathrm{SP},i}] - (\nabla f(\bm x_0))_i = \frac{h^2}{6}G_i+O(h^4). \] 
It follows that the squared bias of the full gradient estimator is 
\begin{gather} 
\left\| \mathbb E[\widehat{\bm\theta}_{\mathrm{SP}}] - \nabla f(\bm x_0) \right\|_2^2 = \sum_{i=1}^p \left( \frac{h^2}{6}G_i+O(h^4) \right)^2 = \frac{h^4}{36}\sum_{i=1}^pG_i^2+o(h^4) = B_{\mathrm{sp}}^2h^4+o(h^4). \label{eq:sp_bias_proof} 
\end{gather}

We next characterize the variance. By independence across replications, 
\begin{gather*} 
\mathbb E\!\left[ \left\| \widehat{\bm\theta}_{\mathrm{SP}} - \mathbb E[\widehat{\bm\theta}_{\mathrm{SP}}] \right\|_2^2 \right] = \text{tr} \left( \text{Cov}( \widehat{\bm\theta}_{\mathrm{SP}}) \right) = \frac{1}{n} \sum_{i=1}^p \text{Var}(Z_{i,j}(h)). 
\end{gather*}
By the law of total variance, 
\begin{gather*} \text{Var}(Z_{i,j}(h)) = \text{Var}\!\left( \mathbb E[Z_{i,j}(h)\mid\bm\Delta_j] \right) + \mathbb E\!\left[ \text{Var} \left( Z_{i,j}(h)\mid\bm\Delta_j \right) \right]. 
\end{gather*}
The Taylor expansion~\eqref{taylor-z} and the bounded support of the Rademacher directions imply that as $h\to 0$,
\[ \text{Var}\!\left( \mathbb E[Z_{i,j}(h)\mid\bm\Delta_j] \right) = O(1). \] 
Under independent sampling at the two perturbed points, 
\[ \text{Var} \left( Z_{i,j}(h)\mid\bm\Delta_j \right) = \frac{ \sigma^2(\bm x_0+h\bm\Delta_j) + \sigma^2(\bm x_0-h\bm\Delta_j) }{ 4h^2\delta_{i,j}^2 }. \] 
Since \(\delta_{i,j}^2=1\), Assumption~\ref{asp:multi_var} gives that as $h\to 0$,
\begin{gather*} 
\mathbb E\!\left[ \text{Var} \left( Z_{i,j}(h)\mid\bm\Delta_j \right) \right] = \frac{\sigma^2(\bm x_0)}{2h^2} + O(h^{s-2}). 
\end{gather*} 
Consequently, \[ \text{Var}(Z_{i,j}(h)) = \frac{\sigma^2(\bm x_0)}{2h^2} \left(1+O(h^s)+O(h^2)\right)= \frac{\sigma^2(\bm x_0)}{2h^2} \left(1+o(1)\right). \] 
Summing over \(i=1,\ldots,p\), we obtain 
\begin{gather} 
\mathbb E\!\left[ \left\| \widehat{\bm\theta}_{\mathrm{SP}} - \mathbb E[\widehat{\bm\theta}_{\mathrm{SP}}] \right\|_2^2 \right] = \frac{p\sigma^2(\bm x_0)}{2nh^2} \left(1+o(1)\right). \label{eq:sp_variance_proof} 
\end{gather}

Combining~\eqref{eq:sp_bias_proof} and \eqref{eq:sp_variance_proof} with the bias-variance decomposition gives that as $h\to 0$ and $n\to\infty$,
\[ \mathbb E\!\left[ \left\| \widehat{\bm\theta}_{\mathrm{SP}} - \nabla f(\bm x_0) \right\|_2^2 \right] = \left(B_{\mathrm{sp}}^2+o(1)\right)h^4 + \frac{p\sigma^2(\bm x_0)}{2nh^2} \left(1+o(1)\right). \]
For the non-degenerative case of $B_{\mathrm{sp}}>0$, it remains to optimize the leading terms. Define 
\[ \Psi_n(h) \triangleq B_{\mathrm{sp}}^2h^4 + \frac{p\sigma^2(\bm x_0)}{2nh^2}. \]
Differentiating with respect to \(h\) gives 
\[ \Psi_n'(h) = 4B_{\mathrm{sp}}^2h^3 - \frac{p\sigma^2(\bm x_0)}{nh^3}. \] 
Thus, the unique positive minimizer satisfies 
\[ h = \left( \frac{ p\sigma^2(\bm x_0) }{ 4nB_{\mathrm{sp}}^2 } \right)^{1/6}. \] 
Substituting this perturbation scale into \(\Psi_n(h)\) yields 
\[ \Psi_n(h) = 3\left( \frac{ p\sigma^2(\bm x_0)B_{\mathrm{sp}} }{ 4n } \right)^{2/3}. \]
Therefore, 
\[ \text{MSE}_{\mathrm{SP,opt}} = 3\left( \frac{ p\sigma^2(\bm x_0)B_{\mathrm{sp}} }{ 4n } \right)^{2/3} (1+o(1)). \]
\endproof

\subsection*{Proof of Theorem~\ref{thm:pcsp}.}
\proof
Let \( D_h\triangleq h_2h_1^3-h_1h_2^3\). Since \(h_2=ch_1\) with \(c\in(0,1)\), \( D_h=c(1-c^2)h_1^4=\Theta(h_1^4) \). Recall that \(m=n_1/2\).

We first establish the convergence rates of the pilot estimators $\widehat G_1,\ldots,\widehat G_p$. For \(i=1,\ldots,p\), define 
\[ W_{i,j} \triangleq \frac{ 6\left( h_2y(h_1\bm\Delta_j) - h_1y(h_2\bm\Delta_j) \right)\delta_{i,j} }{ D_h }, \qquad j=1,\ldots,m, \]
so that \[ \widehat G_i=\frac{1}{m}\sum_{j=1}^m W_{i,j}. \] 
By the Taylor expansion~\eqref{taylor_Gi}, 
\begin{eqnarray*} 
\mathbb E[W_{i,j}] &=& \frac{ 6h_2\left[ h_1(\nabla f(\bm x_0))_i + h_1^3G_i/6 + O(h_1^5) \right] }{ D_h } - \frac{ 6h_1\left[ h_2(\nabla f(\bm x_0))_i + h_2^3G_i/6 + O(h_2^5) \right] }{ D_h } \\
&=& G_i+O(h_1^2). 
\end{eqnarray*}
The use of the same Rademacher direction at the two pilot scales makes the first-order directional-derivative term cancel within each paired replication. By Assumption~\ref{asp:multi_var} and the bounded support of the Rademacher directions,
\[
\text{Var}\!\left(
y(h\bm\Delta_j)\mid\bm\Delta_j
\right)
=
O(1)
\]
uniformly over \(\bm\Delta_j\) as \(h\to0\). Conditional on \(\bm\Delta_j\), the simulation outputs generated at the two perturbation scales are mutually independent. Therefore,
\begin{gather*}
\text{Var}\!\left(
W_{i,j}\mid\bm\Delta_j
\right)=
\frac{36}{D_h^2}
\left[
h_2^2
\text{Var}\!\left(
y(h_1\bm\Delta_j)\mid\bm\Delta_j
\right)
+
h_1^2
\text{Var}\!\left(
y(h_2\bm\Delta_j)\mid\bm\Delta_j
\right)
\right]
=
O(h_1^{-6}).
\end{gather*}
Therefore, \(\mathbb E\!\left[
\text{Var}\!\left(
W_{i,j}\mid\bm\Delta_j
\right)
\right]=O(h_1^{-6})\).
Under Assumption~\ref{asp:multi_smooth}, a symmetric Taylor expansion around \(\bm x_0\) yields that as $h_1\to 0$,
\[
\mathbb E\!\left[
W_{i,j}\mid\bm\Delta_j
\right]
=
\sum_{k_1,k_2,k_3=1}^p \nabla_{k_1k_2k_3}^3f(\bm x_0) \delta_{k_1,j}\delta_{k_2,j}\delta_{k_3,j} \delta_{i,j}
+
O(h_1^2).
\]
Since the Rademacher direction has bounded support, \(
\text{Var}\!\left(
\mathbb E\!\left[
W_{i,j}\mid\bm\Delta_j
\right]
\right)
=
O(1)\).
It then follows from the law of total variance that
\begin{gather*}
\text{Var}(W_{i,j})=
\mathbb E\!\left[
\text{Var}\!\left(
W_{i,j}\mid\bm\Delta_j
\right)
\right]
+
\text{Var}\!\left(
\mathbb E\!\left[
W_{i,j}\mid\bm\Delta_j
\right]
\right)
=
O(h_1^{-6})+O(1)
=
O(h_1^{-6}).
\end{gather*}
Since \(W_{i,1},\ldots,W_{i,m}\) are independent and identically
distributed,
\begin{equation} \mathbb E\!\left[ (\widehat G_i-G_i)^2 \right] = O(h_1^4) + O\!\left(\frac{1}{mh_1^6}\right). \label{eq:Gi_rate_pcsp} 
\end{equation}
The reverse triangle inequality gives 
\[ \left| \widehat B_{\mathrm{sp}}-B_{\mathrm{sp}} \right| \leq \frac{1}{6} \left(\sum_{i=1}^p(\widehat G_i-G_i)^2\right)^{1/2}. \] 
Therefore, \eqref{eq:Gi_rate_pcsp} implies 
\begin{equation} 
\mathbb E\!\left[ \left( \widehat B_{\mathrm{sp}}-B_{\mathrm{sp}} \right)^2 \right] = O(h_1^4) + O\!\left(\frac{1}{mh_1^6}\right). \label{eq:Bsp_rate_pcsp} 
\end{equation} 

We next analyze the pilot variance estimator. Conditional on \(\bm\Delta_j\), a symmetric Taylor expansion gives \[ \mathbb E[y(h_1\bm\Delta_j)\mid\bm\Delta_j] = h_1\nabla f(\bm x_0)^\prime\bm\Delta_j + O(h_1^3). \] 
By Assumption~\ref{asp:multi_smooth} and the bounded support of the Rademacher directions, \[ \text{Var}\!\left( \mathbb E[y(h_1\bm\Delta_j)\mid\bm\Delta_j] \right) = O(h_1^2). \]
Moreover, Assumption~\ref{asp:multi_var} yields 
\begin{gather*} 
\mathbb E\!\left[ \text{Var}(y(h_1\bm\Delta_j)\mid\bm\Delta_j) \right] = \mathbb E\!\left[ \frac{ \sigma^2(\bm x_0+h_1\bm\Delta_j) + \sigma^2(\bm x_0-h_1\bm\Delta_j) }{4} \right] = \frac{\sigma^2(\bm x_0)}{2} + O(h_1^s). 
\end{gather*} 
It follows from the law of total variance that \[ 2\text{Var}(y(h_1\bm\Delta_j)) = \sigma^2(\bm x_0) + O(h_1^s) + O(h_1^2). \] 
Since \(\widehat\sigma^2(\bm x_0)\) is twice the sample variance of \(y(h_1\bm\Delta_1),\ldots,y(h_1\bm\Delta_m)\), it follows from the unbiasedness of the sample variance that \( \mathbb E\!\left[ \widehat\sigma^2(\bm x_0) \right] = 2\text{Var}\!\left( y(h_1\bm\Delta_j) \right) \). Hence, \[ \mathbb E\!\left[ \widehat\sigma^2(\bm x_0) \right] - \sigma^2(\bm x_0) = O(h_1^s)+O(h_1^2). \] 
Under the uniform fourth-moment condition for the randomized pilot outputs, \[ \text{Var}\!\left( \widehat\sigma^2(\bm x_0) \right) = O(m^{-1}). \] Therefore, \begin{equation} 
\mathbb E\!\left[ \left( \widehat\sigma^2(\bm x_0)-\sigma^2(\bm x_0) \right)^2 \right] = O(m^{-1}) + O(h_1^{2s}) + O(h_1^4). \label{eq:sigma_rate_pcsp} 
\end{equation} 

Define the oracle and calibrated scale constants by \[ q^* \triangleq \left( \frac{p\sigma^2(\bm x_0)}{4B_{\mathrm{sp}}^2} \right)^{1/6}, \qquad \widehat q \triangleq \left( \frac{ p\widehat\sigma^2(\bm x_0) }{ 4\widehat B_{\mathrm{sp}}^2 } \right)^{1/6}. \]
Then the oracle and calibrated deployment perturbations are \( h^*=q^*n_2^{-1/6}\) and  \(\widehat h=\widehat q\,n_2^{-1/6} \). Let \(\mathcal F_1\) denote the sigma-field generated by all pilot samples. Conditional on \(\mathcal F_1\), the calibrated deployment perturbation \(\widehat h\) is fixed, and the \(n_2\) deployment replications are independent of the pilot samples. Applying the expansion in Proposition~\ref{pro_SP} conditionally on \(\mathcal F_1\) gives 
\begin{gather} 
\mathbb E\!\left[ \left. \left\| \widehat{\bm\theta}_{\mathrm{SPPC}} - \nabla f(\bm x_0) \right\|_2^2 \right| \mathcal F_1 \right] = B_{\mathrm{sp}}^2\widehat h^4 + \frac{p\sigma^2(\bm x_0)}{2n_2\widehat h^2} + O(\widehat h^6) + O(n_2^{-1}) + O\!\left( n_2^{-1}\widehat h^{s-2} \right). \label{eq:conditional_mse_pcsp} 
\end{gather}
The term \(O(\widehat h^6)\) comes from the higher-order squared-bias terms, \(O(n_2^{-1})\) comes from the variance induced by the random directions, and \(O(n_2^{-1}\widehat h^{s-2})\) comes from the local variance remainder in Assumption~\ref{asp:multi_var}. Because \(B_{\mathrm{sp}}>0\) and \(\sigma^2(\bm x_0)>0\), the mapping \[ (a,b)\longmapsto \left(\frac{pa}{4b^2}\right)^{1/6} \] is continuously differentiable in a neighborhood of \((\sigma^2(\bm x_0),B_{\mathrm{sp}})\). Using~\eqref{eq:Bsp_rate_pcsp}, \eqref{eq:sigma_rate_pcsp}, and the same tail-control argument as in the proofs of Theorems~\ref{thm:pc_oracle} and~\ref{thm:pc_second_order}, we obtain 
\begin{equation} 
\mathbb E\!\left[ \left( \widehat q-q^* \right)^2 \right] = O\!\left( h_1^{2s} + h_1^4 + \frac{1}{m} + \frac{1}{mh_1^6} \right)= O\!\left( h_1^{2s} + h_1^4 + \frac{1}{mh_1^6} \right), \label{eq:q_rate_pcsp} 
\end{equation} 
where the last equality follows because the term \(m^{-1}\) is dominated by \((mh_1^6)^{-1}\) as \(h_1\to0\). Let \[ \mathcal{Q}(q)\triangleq B_{\mathrm{sp}}^2 q^4 + \frac{p\sigma^2(\bm x_0)}{2 q^2}.\] Then \(\mathcal{Q}(q)\) is uniquely minimized at \(q^*\), and hence \(\mathcal Q'(q^*)=0\). 
Since \(\mathcal Q\) is twice continuously differentiable and its second derivative is bounded in a neighborhood of \(q^*\), Taylor's theorem gives
\[ \mathcal{Q}(\widehat q)-\mathcal{Q}(q^*)=O\left(\left( \widehat q-q^* \right)^2\right)\]
whenever \(\widehat q\) lies in that neighborhood. Under the same tail-control argument, combining these bounds with~\eqref{eq:conditional_mse_pcsp} and \eqref{eq:q_rate_pcsp} yields
\begin{equation} 
\mathbb E\!\left[ \left\| \widehat{\bm\theta}_{\mathrm{SPPC}} - \nabla f(\bm x_0) \right\|_2^2 \right] = \frac{\mathcal{Q}(q^*)}{n_2^{2/3}} \left[ 1 + O\left(h_1^{2s} + h_1^4 + \frac{1}{mh_1^6}\right) + O(n_2^{-1/3}) + O(n_2^{-s/6}) \right]. \label{eq:pcsp_n2_expansion} 
\end{equation}

Now use \(n_1=\Theta(n^\gamma)\), \(0<\gamma<1\), and \(n_2=n-n_1\). Since \(n_1/n\to0\), \[ n_2^{-2/3} = n^{-2/3} \left( 1+O(n^{\gamma-1}) \right). \]
Since \(m=n_1/2\), \(h_1=\Theta(n_1^\beta)\), and \(-1/6<\beta<0\), 
\begin{eqnarray*} 
&&\mathbb E\!\left[ \left\| \widehat{\bm\theta}_{\mathrm{SPPC}} - \nabla f(\bm x_0) \right\|_2^2 \right]\\
&=& 3\left( \frac{ p\sigma^2(\bm x_0)B_{\mathrm{sp}} }{4 } \right)^{2/3} n^{-2/3}\Big[ 1 + O(n^{\gamma-1}) + O(n^{2\gamma s\beta}+n^{4\gamma\beta}+n^{-\gamma(6\beta+1)}) + O(n^{-1/3}) + O(n^{-s/6}) \Big]. 
\end{eqnarray*} 
Equivalently, \[ \mathbb E\!\left[ \left\| \widehat{\bm\theta}_{\mathrm{SPPC}} - \nabla f(\bm x_0) \right\|_2^2 \right] = 3\left( \frac{ p\sigma^2(\bm x_0)B_{\mathrm{sp}} }{4} \right)^{2/3} n^{-2/3} \left( 1+O\!\left( n^{-\kappa(\beta,\gamma)} \right) \right), \] where \[ \kappa(\beta,\gamma) = \min\left\{ 1-\gamma,\, -2\gamma s\beta,\, -4\gamma\beta,\, \frac{1}{3},\, \frac{s}{6},\, \gamma(6\beta+1) \right\}. \] 
It follows immediately that \[ \lim_{n\to\infty} n^{2/3} \mathbb E\!\left[ \left\| \widehat{\bm\theta}_{\mathrm{SPPC}} - \nabla f(\bm x_0) \right\|_2^2 \right] = 3\left( \frac{ p\sigma^2(\bm x_0)B_{\mathrm{sp}} }{4} \right)^{2/3}. \] This completes the proof.
\endproof

\end{document}